\documentclass[
  a4paper,
  USenglish,
  cleveref,
  autoref,
  thm-restate,
]{lipics-v2021}

\usepackage[most]{tcolorbox}
\tcbuselibrary{breakable}
\usepackage{longtable}
\newtcolorbox{mainquestionbox}{
  colback=gray!5,
  colframe=black!75,
  width=\textwidth,
  arc=3mm,
  boxrule=0.8pt,
  left=6pt,
  right=6pt,
  top=6pt,
  bottom=6pt
}
\usepackage{graphicx}
\usepackage{amsmath,amsfonts,amsthm}
\usepackage{tcolorbox}
\usepackage{todonotes}
\setuptodonotes{inline}
\usepackage{xspace}
\usepackage{booktabs,tabularx,array}
\usepackage{thmtools}
\usepackage{thm-restate}
\usepackage{mdframed}

\newtcolorbox{grayframe}{
  breakable,
  colback=gray!10,
  colframe=black,
  boxrule=0.6pt,
  arc=2mm,
  left=3mm,
  right=3mm,
  top=2mm,
  bottom=2mm
}

\crefname{theorem}{Theorem}{Theorems}
\crefname{lemma}{Lemma}{Lemmas}
\crefname{observation}{Observation}{Observations}
\crefname{claim}{Claim}{Claims}

\newcommand{\R}{\mathcal{R}}

\newcommand{\nat}{\mathbb{N}}
\newcommand{\dist}{\mathsf{dist}}
\newcommand{\upperboud}{\mathcal{L}}
\newcommand{\bellowPara}{\zeta}
\newcommand{\Oh}{\mathcal{O}}

\newcommand{\CMPlong}{\textsc{Transient Multiagent Pathfinding}\xspace}
\newcommand{\CMPshort}{\textsc{Transient Multiagent Pathfinding}\xspace}

\DeclareMathOperator{\operatorClassNP}{NP}
\newcommand{\classNP}{\ensuremath{\operatorClassNP}\xspace}

\DeclareMathOperator{\operatorClassCoNP}{coNP}
\newcommand{\classCoNP}{\ensuremath{\operatorClassCoNP}}
\DeclareMathOperator{\operatorClassFPT}{FPT\xspace}
\newcommand{\classFPT}{\ensuremath{\operatorClassFPT}\xspace}
\DeclareMathOperator{\operatorClassW}{W}
\newcommand{\classW}[1]{\ensuremath{\operatorClassW[#1]}}

\DeclareMathOperator{\operatorClassParaNP}{Para-NP\xspace}
\newcommand{\classParaNP}{\ensuremath{\operatorClassParaNP}\xspace}
\DeclareMathOperator{\operatorClassXP}{XP\xspace}
\newcommand{\classXP}{\ensuremath{\operatorClassXP}\xspace}

\pdfoutput=1 
\hideLIPIcs  
\nolinenumbers

\title{Routing Multiple Agents Below the Sum of Distances}

\author{Matthias Bentert}{TU Berlin, Germany}{bentert@tu-berlin.de}{https://orcid.org/0009-0009-0705-972X}{Supported by the German Research Foundation (DFG) under ReNO-2 project (grant no. 1-5003036).}

\author{Eduard Eiben}{Royal Holloway, University of London, Egham, United Kingdom}{eduard.eiben@rhul.ac.uk}
    {https://orcid.org/0000-0003-2628-3435}
    {Engineering and Physical Sciences Research Council (EPSRC) grant UKRI4530.
    }

\author{Fedor V. Fomin}{University of Bergen, Norway}{fedor.fomin@uib.no}{https://orcid.org/0000-0003-1955-4612}
{Supported by the Research Council of Norway under BWCA project, reference 314528,  and European Research Council (ERC) via grant NewPC, reference 101199930.}

\author{Petr A. Golovach}{University of Bergen, Norway}{petr.golovach@uib.no}{https://orcid.org/0000-0002-2619-2990}
{Supported by the Research Council of Norway under BWCA  (grant no.~314528) and Extreme-Algorithms (grant no~355137) projects.}

\authorrunning{M. Bentert, E. Eiben, F. V. Fomin, P. A. Golovach} 

\Copyright{Matthias Bentert, Eduard Eiben, Fedor V. Fomin, Petr A. Golovach} 

\ccsdesc[500]{Mathematics of computing~Combinatorial algorithms}
\ccsdesc[500]{Theory of computation~Fixed parameter tractability}

\keywords{Multi-Agent Pathfinding, Parameterized Complexity, Motion Planning}

\begin{document}

\maketitle

\begin{abstract}
We study \CMPlong, a variant of the classical \textsc{Multi-Agent Pathfinding} problem in which a set of agents must be routed without collisions from designated start vertices to designated destination vertices in a graph.

We analyze the problem within the above-and-below-guarantee paradigm of parameterized complexity. 
In particular, we consider the natural upper bound $\upperboud$, given by the sum of the shortest-path distances between pairs of agents' terminals (corresponding to sequential routing of the agents). 
The parameterization is given by the gap $\bellowPara = \upperboud - \lambda$ between this bound and the target makespan~$\lambda$, together with the number~$k$ of agents.  
Our main result establishes fixed-parameter tractability for the combined parameter~$k + \bellowPara$.

Matching lower bounds show that parameterization by~$k$ alone is \classW1-hard, and that parameterization by~$\bellowPara$ alone is \classW1-hard when terminals are not required to be distinct. 
On the positive side, if all terminals are distinct, the problem becomes fixed-parameter tractable when parameterized solely by~$\bellowPara$. 
Finally, we show that \CMPlong{} is unlikely to admit a polynomial kernel when parameterized by~$k + \bellowPara$.  
Together, our results provide an almost complete characterization of the parameterized complexity landscape of the problem for the considered parameters.
\end{abstract}

\maketitle 

 
\section{Introduction}\label{sec:introduction}
 
Gnomes are not known for their sociability.
They strongly prefer solitude
and take great care never to encounter one another.
Unfortunately for them, they must occasionally traverse a shared network of  tunnels in order to reach their individual destinations.

We model the tunnel system as a graph, where vertices represent chambers and junctions and edges represent narrow passages between them.
Each gnome is assigned a start chamber and a destination chamber.
Time advances in discrete steps, and at each step a gnome may either move to an
adjacent chamber or remain still.
Any encounter between two gnomes is considered disastrous: no two gnomes may
occupy the same chamber at the same time, nor may they traverse the same tunnel
simultaneously, not even in opposite directions.
Once a gnome reaches its destination, it immediately disappears into its private
quarters and no longer interacts with the system.

A simple scenario that allows all gnomes to reach their destinations is the
following.
The gnomes first agree on an order and then move one after another.
The first gnome traverses a shortest path to its destination; once it arrives
and disappears, the second gnome starts moving, and so on.
This straightforward strategy is easy to compute and yields a schedule whose
makespan, that is, the total time needed to move the gnomes to their destinations, is upper bounded by $1$ plus the sum of the lengths of the shortest paths of all
gnomes.
Achieving even a slightly smarter strategy  quickly becomes nontrivial:
Avoiding encounters may require gnomes to wait inside chambers, to take detours,
or to carefully interleave their movements.

\begin{mainquestionbox}
The main question we address in this work is whether it is possible 
to obtain a schedule with a smaller makespan and how difficult 
this task is computationally.
\end{mainquestionbox}

Of course, while the life of a gnome is certainly an interesting subject, the main question we address arises from its connections to more practical, real-life problems.

\medskip 
Routing multiple agents from designated start vertices to designated destination vertices without collisions is a fundamental problem in multi-agent planning. 
The problem is strongly motivated by numerous real-world applications, including automated warehouse systems~\cite{0001TKDKK21,kiva}, traffic management and control~\cite{MorrisPLMMKK16}, and robotics~\cite{VelosoBCR15}. A well-known example is the coordination of large fleets of mobile robots in Amazon fulfillment centers, where systems such as Kiva rely on efficient and reliable multi-robot path planning to transport inventory pods while avoiding collisions~\cite{kiva}. From a theoretical perspective, almost all variants of the problem are \classNP-hard~\cite{nphard2,nphard1,lavalle}.

 The range of applications such as automated warehouses, aircraft towing, UAV traffic management, or drone delivery  has led to the development of numerous problem variants (see, e.g.,~\cite{SternSFK0WLA0KB19,Varambally0K22} for surveys).

The problem is commonly formalized as \textsc{Multiagent Pathfinding} (\textsc{MAPF}), where the environment is modeled as an $n$-vertex graph and each of the $k$ agents has a specified start vertex $s_i$ and goal vertex $t_i$. At each discrete time step, an agent either waits or moves along an edge to a neighboring vertex; no vertex or edge may be used by more than one agent in the same time step. In the standard formulation of \textsc{MAPF}, the objective is to find a schedule that routes all agents to their goals while minimizing the \emph{makespan}~$\lambda$ (the arrival time of the last agent)~\cite{banfi,DeligkasEGK025,demaine,EibenGK23,FioravantesKKMO24}.
Another popular measure is the total distance traveled by the agents~\cite{DeligkasEGKLR26,icalp,GeftHalperin,PapadimitriouRST94}. 
Let us remark that the total distance travel by the agents is always at least the sum of the lengths of the shortest paths between all terminals.

We focus on algorithms with provable guarantees for computing schedules of optimal makespan. The problem has been extensively studied and is computationally intractable in the classical sense: it is NP-hard not only on general graphs, but also on solid grid graphs~\cite{demaine} and even on trees~\cite{FioravantesKKMO24}. Given this intractability, numerous heuristic and algorithmic approaches have been proposed, including SAT-based methods~\cite{surynek2010optimization},  and $A^*$-based algorithms~\cite{heuristic2}; see also the surveys~\cite{SternSFK0WLA0KB19, Varambally0K22}.

There are two common assumptions regarding how agents behave upon reaching their targets; see~\cite{SternSFK0WLA0KB19}.
Under the first assumption, an agent remains at its target vertex until all agents have reached their respective targets.
Under the second assumption, agents disappear immediately after reaching their targets.
Our results concern the latter assumption.
We adopt the model in which agents enter the graph at their start times at $s_i$, traverse along edges, and immediately disappear upon reaching $t_i$; in particular, $t_i$ is not occupied after arrival.

In this work, we adopt the framework of parameterized complexity~\cite{DowneyFellows13,CyganFKLMPPS15}.
Although the computational complexity of many fundamental coordinated motion planning variants was established several decades ago~\cite{PapadimitriouRST94}, recent years have witnessed renewed interest in these problems through the lens of parameterized complexity~\cite{halprinunlabeled,DeligkasEGK025,DemaineHM14,EibenGK23,EGKS25SoCG,FioravantesKKMO24,FioKKMO25}.

Most of this line of work focuses on natural parameterizations such as the number of agents, the makespan, or structural parameters of the input graph, for example its treewidth.
We take a different perspective here.
In particular, we follow the above-or-below-guarantee paradigm, a popular approach in parameterized complexity. 
The term {above (below) guarantee parameterization} was introduced by Mahajan and Raman~\cite{MahajanRS09}. This concept has proved fruitful for studying numerous problems in parameterized complexity  ~\cite{AlonGKSY10,BezakovaCDF17,CrowstonJMPRS13,FominGLPSZ21,GargP16,GutinRSY07,LokshtanovNRRS14}, see also the survey \cite{GutinMnich22}.

Consider the simple sequential protocol that routes the agents one after another along shortest $s_i$-$t_i$-paths.  
It is not difficult to show (see \Cref{lem:upperbound})
that this protocol guarantees completion within~$\lambda \;\le\; \upperboud_{\R,G}
:= 1 \;+\; \sum_{i\in[k]} \dist_G(s_i,t_i)$ time steps.

This bound is tight: there exist instances of the problem  that require exactly
$\upperboud_{\R,G}$ rounds.
For example, let $G$ be a path with end-vertices $u$ and $v$, and consider two agents
with terminals $s_1 = t_2 = u$ and $s_2 = t_1 = v$.
In this case, neither agent can start moving before the other has reached its
destination and disappeared; otherwise, they would block each other along the path.
Consequently, one agent must wait until the other completes its traversal,
incurring at least one additional round between their movements.
Therefore, any feasible schedule has makespan at least~$1 + \dist_G(s_1,t_1) + \dist_G(s_2,t_2)
= \upperboud_{\R,G}$.
Hence, the parameter $\bellowPara_{\R,G} = \upperboud_{\R,G} - \lambda$
quantifies how much tighter a schedule we obtain compared to the
(sometimes pessimistic) upper bound $\upperboud_{\R,G}$.
In the terminology of parameterized complexity, parameterizing by
$\bellowPara_{\R,G}$ corresponds to \emph{parameterization below the trivial upper bound}
$\upperboud_{\R,G}$.

Before proceeding with an overview of our results, let us give an example of a possible scheduling of the agents, postponing the formal definitions to \Cref{sec:formaldefinition}.

\medskip
      
      \begin{tcolorbox}[breakable,
  title={Example},
  colback=white,
  colframe=black,
  fonttitle=\bfseries,
  sharp corners,
  boxsep=1mm,
  left=1mm,
  right=1mm,
  top=1mm,
  bottom=1mm
]

Consider the following graph with two agents~$R_1$ and~$R_2$, where~$R_1$ needs to move from $a$ to $d$ and~$R_2$ from $d$ to $a$.

\begin{center}
\begin{tikzpicture}[
  every node/.style={circle,draw,inner sep=1.5pt,font=\small},
  scale=1
]
  \node (a) at (0,0) {$a$};
  \node (b) at (1.6,0) {$b$};
  \node (c) at (3.2,0) {$c$};
  \node (d) at (4.8,0) {$d$};
  \node (e) at (1.6,1.3) {$e$};

  \draw (a) -- (b) -- (c) -- (d);
  \draw (b) -- (e);

  \draw[-,thick] (a) -- (b);
  \draw[-,thick] (b) -- (e);
  \draw[-,thick] (e) -- (b);
  \draw[-,thick] (b) -- (c);
  \draw[-,thick] (c) -- (d);

  \draw[-,thick,dashed] (d) -- (c);
  \draw[-,thick,dashed] (c) -- (b);
  \draw[-,thick,dashed] (b) -- (a);

  \node[draw=none,rectangle,font=\footnotesize] at (2.5,-0.9) {%
    $R_1$ \tikz{\draw[->,thick] (0,0)--(0.7,0);} 
    \quad \quad \quad \quad  \quad \quad \quad \quad \quad
    \tikz{\draw[<-,thick,dashed] (0,0)--(0.7,0);} $R_2$
  };
\end{tikzpicture}
\end{center}

The upper bound $\upperboud_{\R,G}$ in this case is \[1+\sum_{i\in[2]}\dist_G(s_i,t_i)=1+  \dist_G(a,d) +\dist_G(d,a)=1+3+3=7.\]
If both agents traverse the shortest path $a$--$b$--$c$--$d$ starting at time~$0$ in opposite directions, then at some time step they traverse the same edge in opposite directions. These routes are \emph{conflicting}.
A valid (but not optimal) schedule is when first $R_1$ moves from $a$ to $d$, and then $R_2$ moves from $d$ to $a$. That is, 
\begin{center}
\footnotesize
\begin{tabular}{c|cc}
\toprule
time step $x$ & $R_1$ & $R_2$ \\
\midrule
$0$ & $a$ & --- \\
$1$ & $b$ & --- \\
$2$ & $c$ & --- \\
$3$ & $d$ (disappears) & ---  \\
$4$ & --- & $d$ \\
$5$ & ---  & $c$ \\
$6$ & --- & $b$   \\
$7$ & --- & $a$ (disappears) \\
\bottomrule
\end{tabular}
\end{center}
The makespan $\lambda$ in this case is equal to $7$,  the moment agent $R_2$ reaches $a$. It is also equal to the upper bound $\upperboud_{\R,G}.$

\medskip 
Consider another, now optimal, scheduling, where  $R_1$ use the side vertex $e$ as a buffer:

\begin{center}
\footnotesize
\begin{tabular}{c|cc}
\toprule
time step $x$ & $R_1$ & $R_2$ \\
\midrule
$0$ & $a$ & $d$ \\
$1$ & $b$ & $c$ \\
$2$ & $e$ & $b$ \\
$3$ & $b$ & $a$ (disappears) \\
$4$ & $c$ & --- \\
$5$ & $d$ (disappears) & --- \\
\bottomrule
\end{tabular}
\end{center}

At no time step do the agents occupy the same vertex or traverse the same edge simultaneously. Hence, the routes are non-conflicting.
Agent $R_1$ finishes at time $5$, agent $R_2$ at time~$3$, so the makespan is
$
\lambda = 5, 
$ improving over the upper  $\upperboud_{\R,G}$ by 
$\bellowPara_{\R,G} = \upperboud_{\R,G} - \lambda=2$.

\end{tcolorbox}

\noindent\textbf{Our contribution.}
We consider the following problem.

\begin{quote}
\textsc{\CMPlong}

\smallskip
\noindent\textbf{Input:} A connected graph $G$, integers $k$
and~$\lambda$, and a set $\R=\{R_i=(s_i,t_i)\mid i\in[k]\}$
of agents.

\smallskip
\noindent\textbf{Task:} Is there a schedule for $\R$ of makespan
at most $\lambda$?
\end{quote}

Our main result shows that the problem is fixed-parameter tractable with respect to the combined parameter $k + \bellowPara$, where $k$ is the number of agents and $\bellowPara$ measures the improvement over the upper bound $\upperboud$.

\begin{restatable}{theorem}{FPTBelowTheorem}\label{thm:fptbelow}
\CMPshort can be solved in $2^{\Oh(k^2\bellowPara)} \cdot n^{\Oh(1)}$ time
on instances $(G,k,\R=\{R_i=(s_i,t_i) \mid i \in [k]\},\lambda)$ with
$\lambda=\upperboud_{\R,G}-\bellowPara$.
\end{restatable}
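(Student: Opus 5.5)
The plan is to find a solution that has a bounded amount of structure, guess that structure (at most $2^{\Oh(k^2\bellowPara)}$ choices), and then realize each guess by shortest-path-type computations with a polynomial-time check.

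\textbf{Step 1: overlap and slack.} Fix a schedule of makespan $\lambda=\upperboud_{\R,G}-\bellowPara$. Agent $R_i$ is active during an interval $I_i=[a_i,b_i]$ with $|I_i|-1\ge \dist_G(s_i,t_i)$. The intervals lie in $[0,\lambda]$ and $\sum_i(|I_i|-1)\ge \upperboud_{\R,G}-1$. Hence the total slack $\sum_i(|I_i|-1-\dist_G(s_i,t_i))$ is at most $k\cdot\lambda-(\upperboud_{\R,G}-1)$. This is useless as stated, so the quantity to control is the overlap instead. Covering $[0,\lambda]$ with $k$ intervals whose lengths sum to at least $\upperboud_{\R,G}-1$ forces the total pairwise overlap (counted with multiplicity) to be at least $\bellowPara$. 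In the other direction, we want to show there is a solution in which every pair of agents overlaps for at most $\Oh(\bellowPara)$ steps, or at least with total overlap $\Oh(k\bellowPara)$, and in which each agent's detour slack is $\Oh(k\bellowPara)$.

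\textbf{Step 2: normalization.} The intended exchange argument takes a time at which exactly one agent is active and observes that any waiting or detour by that agent can be shortcut. During time spent alone, an agent can follow a shortest path segment, and shortening stretches where it is alone only decreases the makespan. Consequently all the slack lives in the overlap windows. Each overlap window involves at most $k$ agents, and we need the total overlap to be bounded. Shrinking any window in which agents are concurrently active should yield an interval structure in which the number of time steps with at least two active agents is at most $\Oh(k\bellowPara)$. I expect this bounding step to be the main obstacle: two agents might be concurrently active for a long time while both follow shortest paths without interfering. That case is actually beneficial, since parallelism saves time, and each unit of parallelism beyond $\bellowPara$ could be converted into sequential motion. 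The argument I would try first is that if the total parallel time exceeds $\lambda$'s budget, some agent can be delayed (serialized), contradicting nothing but keeping feasibility, so we may assume the parallel time is exactly about $\bellowPara$ up to a factor of $k$.

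\textbf{Step 3: guessing and realization.} Guess the order of the events $a_i,b_i$, the lengths of the overlap windows (each $\Oh(k\bellowPara)$, so $2^{\Oh(k^2\bellowPara)}$ choices including which agents are active at each step), and each agent's slack. Within an overlap window of bounded length $\ell$, the agents' positions lie within distance $\ell$ of their entry points. The entry point of agent $R_i$ is determined by $s_i$ or by its last position, and it reaches $t_i$ along a shortest path afterward. We therefore need to route at most $k$ agents for $\ell$ steps with prescribed distance-to-target decrease. The realization question for a fixed guess is whether vertices $v_i$ exist with prescribed $\dist(s_i,v_i)$ and $\dist(v_i,t_i)$ such that the short colliding segments can be realized. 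Here I would use a color-coding or representative-sets argument over the $\Oh(k^2\bellowPara)$ vertices touched during overlaps, giving $2^{\Oh(k^2\bellowPara)}n^{\Oh(1)}$ time. The sequential, alone segments are shortest paths whose existence is trivially checkable by BFS distances. Correctness follows because every solution normalizes to some guessed pattern, and every realized pattern gives a valid schedule of makespan $\lambda$.
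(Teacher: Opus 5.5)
\textbf{Verdict: there is a genuine gap in Step~2, and the plan as written does not go through.} The normalization it relies on is false. No schedule of makespan at most $\lambda$ need exist in which the number of steps with two or more active agents, or any agent's slack, is bounded by a function of $k+\bellowPara$. (Your observation that a stretch in which an agent is alone can be shortcut is correct, but it bounds nothing.)

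Here is a counterexample. Let $G$ be the cycle formed by a path $v_0,\dots,v_L$ and an internally disjoint $v_0$-$v_L$-path of length $2L-1$, with $L\geq 2$. Take agents $R_1=(v_0,v_L)$ and $R_2=(v_L,v_0)$ and set $\lambda=2L-1$. Then $\upperboud_{\R,G}=2L+1$ and $\bellowPara=2$.
\begin{itemize}
\item Sending $R_1$ along the long path and $R_2$ along the short one, both from time $0$, is a schedule of makespan $\lambda$.
\item Conversely, two agents on a cycle can never pass each other. So if neither agent traverses the whole long path, both cross the short path one after the other, and the makespan is at least $L+1+L=2L+1>\lambda$.
\item Hence in every schedule of makespan at most $\lambda$, some agent has slack $L-1$, and both agents are active simultaneously for at least $L$ steps.
\end{itemize}
Consequently, the overlap lengths and per-agent slacks you plan to guess in Step~3 are unbounded. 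Your proposed fix, serializing agents whose parallel time is too large, fails exactly here. The parallelism is paid for by a long detour, and serializing forces both agents through the short path at cost $2L+1$.

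There is also a secondary issue in Step~3 that would remain even if overlap were bounded. The alone stretches between overlap windows have unbounded lengths. They couple the windows through long-range distance constraints, and color coding over the touched vertices does not handle those constraints by itself.

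\textbf{What the paper does instead.} It never normalizes an arbitrary solution. It exploits the fact that a terminal distance large compared to $\bellowPara$ almost decides the instance.
\begin{itemize}
\item Certain configurations directly yield schedules of makespan $\upperboud_{\R,G}-\bellowPara$: laminar shortest paths that are codirectional; laminar shortest paths that are opposite but whose common part is far from a terminal; or a branching vertex deep inside a long shortest path.
\item From these, three agents at distance $\Omega(\bellowPara)$ already give a yes-instance.
\item If only one agent is far, enough short agents can be scheduled while it travels.
\item The remaining case, exactly two far agents, contains your example. It is settled by solving the two-agent subinstance exactly in polynomial time, using the $(kn)^{\Oh(k)}$ algorithm with $k=2$. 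The paper then shows that the short agents, whose total distance is below $\bellowPara$, can be inserted into an optimal two-agent schedule (Case~2 of that argument is precisely the detour situation above).
\item If no pair of terminals is far, then $\lambda=\Oh(k\bellowPara)$, and the $2^{\Oh(k\lambda)}\cdot n^{\Oh(1)}$ color-coding algorithm gives the claimed running time.
\end{itemize}
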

\medskip\noindent\emph{Overview of the proof of \Cref{thm:fptbelow}}. 
The proof  is based on structural properties of shortest paths. The starting point is the fact (see \Cref{lem:fptlambda}) that \CMPshort can be solved in $2^{\Oh(k\lambda)}\cdot n^{\Oh(1)}$ time. Thus, \Cref{thm:fptbelow} holds if $\lambda=\upperboud_{\R,G}-\bellowPara=\Oh(k\bellowPara)$. In particular, this is fulfilled if $\max_{i\in[k]}\dist_G(s_i,t_i)=\Oh(\bellowPara)$. We prove that solving \CMPshort can be reduced to this special case when the length of shortest $s_i$-$t_i$-paths is  $\Oh(\bellowPara)$. For this, we consider an instance where there is $i\in[k]$ such that the distance between $s_i$ and $t_i$ is sufficiently large, and either demonstrate that this is a no-instance or construct a routing with makespan at most $\lambda$. Moreover, to rule out the no-instances, it is sufficient to verify whether there is $i\in[k]$ with $\dist_G(s_i,t_i)>\lambda$ or there are distinct $i,j\in[k]$ such that the agents $R_i$ and $R_j$ cannot be routed within the makespan $\lambda$. The latter can be checked by an \classXP algorithm solving \CMPshort in  
$(kn)^{\Oh(k)}$ time (see~\Cref{lem:XP}). In all other cases, we are able to construct a routing schedule of the makespan at most $\lambda$.

We briefly sketch the main ideas behind constructing a routing schedule in the case when there is a pair of faraway terminals. The proof is technical and is obtained by a sequence of lemmas  establishing sufficient conditions for the existence of a feasible schedule. 

Given two terminal pairs $(s_i,t_i)$ and $(s_j,t_j)$, we consider \emph{laminar} shortest $s_i$-$t_i$- and $s_j$-$t_j$-paths $P_i$ and $P_j$, respectively, that is, shortest paths whose intersection is either empty or a path. We distinguish \emph{codirectional} and \emph{opposite} paths. Paths $P_i$ and $P_j$ are codirectional if the agents traverse the common subpath in the same direction following $P_i$ and $P_j$, and the paths are opposite, otherwise (see~\Cref{fig:laminar}). We note that the agents can follow codirectional shortest paths almost simultaneously. This implies that if there are codirectional shortest paths $P_i$ and $P_j$ of length $\Omega(\bellowPara)$, then we can route the agents $R_i$ and $R_j$ saving considerable time, and this is sufficient for the existence of a routing of the makespan at most $\lambda$.
We also observe that in some cases, opposite paths of length $\Omega(\bellowPara)$ can be used to save time. In particular, this happens if  the common part of the paths is sufficiently far from one of the end-points of the paths.

If the above cases do not apply, we observe that if there are three terminal pairs with terminals at distance $\Omega(\bellowPara)$, we can find codirectional shortest paths for two pairs. Thus, the analysis boils down to the cases when the terminals of either one or two pairs are at distance at least $c\cdot\bellowPara$ for a specific constant $c>0$, and for all other pairs $(s_i,t_i)$, the distance between $s_i$ and $t_i$ is upper bounded by $c'\cdot\bellowPara$ for another constant $c'<c$.
If there is exactly one shortest path of length at least $c\cdot \bellowPara$, then we can route the corresponding agent $R_i$ along a shortest path. Simultaneously with $R_i$, we can route other agents and save time. 

The case of two pairs of faraway terminals is the most challenging. Here, we need some additional structural results about paths in $G$. Very roughly, the idea is to route two agents~$R_i$ and~$R_j$ with $\dist_G(s_i,t_i),\dist_G(s_j,t_j)\geq c\cdot\bellowPara$ in the optimum way (which is proved to be very special after excluding some cases), and route the other agents while $R_i$ and $R_j$ follow their routes.
This concludes the proof overview.

\medskip
Of course, a natural question is whether \Cref{thm:fptbelow} is tight. In other words, whether the combined parameterization is necessary. We complement our theorem with lower bounds showing that this is indeed the case. First, we observe (\Cref{prop:W1k} and \Cref{corrolary:w1-hard}) that \CMPshort{} is \classW1-hard when parameterized by~$k$, even if the input graph is subcubic. We also prove (\Cref{thm:W1b}) that \CMPshort{} is \classW1-hard when parameterized by $\bellowPara$. Therefore, fixed-parameter tractability with respect to only one of these parameters is highly unlikely.
Regarding single-parameter parameterizations, we observe (\Cref{lem:XP}) that for fixed $k$, \CMPshort{} is solvable in polynomial time $(kn)^{\Oh(k)}$. That is, it belongs to \classXP  when parameterized by $k$. Whether the problem is \classParaNP-complete or in \classXP  when parameterized by $\bellowPara$ remains an interesting open question.

Having established an \classFPT result, a natural next question concerns the existence of a polynomial kernel. We show in \Cref{thm:nopolykernel} that \CMPshort{} parameterized by~$k + \bellowPara$ does not admit a polynomial kernel. Combined with the above lower bounds, \Cref{thm:fptbelow} yields an almost complete characterization of the parameterized complexity of \CMPshort{}. We summarize these results in \Cref{tab:summary}.

\begin{table}[t]
\centering
\begin{tabular}{ll}
\toprule
\textbf{Parameter} & \textbf{Result} \\
\midrule
$k + \bellowPara$ 
& FPT in $2^{\Oh(k^2\bellowPara)}\cdot n^{\Oh(1)}$ (\Cref{thm:fptbelow}); no polynomial kernel (\Cref{thm:nopolykernel})  \\

$k$ 
& \classW1-hard (subcubic graphs) (\Cref{corrolary:w1-hard});  In \classXP  (\Cref{lem:XP}) \\

$\bellowPara$ 
& \classW1-hard (\Cref{thm:W1b});  \classParaNP\ vs.\ \classXP is open\\
\bottomrule
\end{tabular}
\caption{Parameterized complexity of \CMPshort.}
\label{tab:summary}
\end{table}

Although our lower bounds exclude the existence of an \classFPT algorithm for \CMPshort{} parameterized solely by $\zeta$, we identify a natural  subclass that does admit such algorithms. In particular, when all terminals are distinct, the upper bound on $\lambda$ improves to $\upperboud_{\R,G}^* = \sum_{i \in [k]} \dist(s_i,t_i)-\lfloor\frac{k-1}{2}\rfloor$. We establish the following theorem.
\begin{restatable}{theorem}{FPTBelowTheoremTwo}\label{thm:fptbelow-two}
\CMPshort can be solved in $2^{\Oh(\bellowPara^3)}\cdot n^{\Oh(1)}$ time on instances $(G,k,\R=\{R_i=(s_i,t_i) \mid i \in [k]\},\lambda)$ with pairwise distinct terminals and~$\lambda=\upperboud_{\R,G}^*-\bellowPara$. 
\end{restatable}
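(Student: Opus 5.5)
We reduce \Cref{thm:fptbelow-two} to \Cref{thm:fptbelow} by showing that with pairwise distinct terminals, either $k=\Oh(\bellowPara)$ or the instance is trivially a yes-instance. The first step establishes the improved upper bound $\upperboud_{\R,G}^*$. Order the agents so that consecutive agents can be paired: agents $R_{2j-1}$ and $R_{2j}$ are routed in one ``phase''. Agent $R_{2j}$ starts at time $\tau$, when $R_{2j-1}$ is exactly one step from its target. Distinctness of terminals guarantees that $s_{2j}$ is not on the remaining route of $R_{2j-1}$ at that moment. We also make the previous agent disappear exactly when the next one enters. Done carefully, this overlaps consecutive agents by one step, and every second handover saves an extra unit. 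The first step of the proof is therefore an explicit sequential protocol with overlaps, with its makespan computed as $\sum_i\dist(s_i,t_i)-\lfloor (k-1)/2\rfloor$. Presumably this is a lemma preceding the theorem, and we would cite it.

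The key new ingredient is a \emph{greedy saving lemma}. For pairwise distinct terminals, every additional agent contributes an additional saving beyond the $\lfloor (k-1)/2\rfloor$ baseline unless strong local structure is present. Concretely, we would prove the following. If $k\ge c\cdot\bellowPara$ for a suitable constant $c$, then among the $k$ agents we can find $\Omega(k)$ disjoint pairs (or short chains) whose routing can be overlapped by at least one extra step each compared with the baseline protocol. This yields a schedule of makespan at most $\upperboud_{\R,G}^*-\bellowPara$, so we answer yes.

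The plan for this lemma has three parts:
\begin{itemize}
\item Observe that the only obstruction to starting agent $R_j$ one step earlier than the baseline is a conflict with the single predecessor near $t_{j-1}$ or $s_j$. Such conflicts are local, occurring within distance two of the terminals.
\item Build an auxiliary conflict digraph on agents, with an arc $i\to j$ whenever $R_j$ cannot be started two steps before $R_i$ finishes.
\item Use distinctness to bound its degrees, since each vertex is the terminal of at most one agent. This gives a long path, or a large matching, of non-conflicting consecutive handovers by a Ramsey/greedy argument.
\end{itemize}
Handling short agents, those with $\dist(s_i,t_i)\le 2$, is delicate. Here we would try to bundle several short agents near a single long one, or argue that short agents can run in parallel with others.

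Once $k=\Oh(\bellowPara)$, we reduce to the general case. The parameter in \Cref{thm:fptbelow} is $\bellowPara'=\upperboud_{\R,G}-\lambda=\bellowPara+1+\lfloor (k-1)/2\rfloor=\Oh(\bellowPara)$. Invoking \Cref{thm:fptbelow} then gives running time $2^{\Oh(k^2\bellowPara')}n^{\Oh(1)}=2^{\Oh(\bellowPara^3)}n^{\Oh(1)}$, matching the claimed bound.

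The main obstacle is the greedy saving lemma. In particular, it requires showing rigorously that only $\Oh(\bellowPara)$ agents can be ``stuck'' at the baseline. We would first attempt a charging argument: each failed extra saving is charged to a blocking configuration around a terminal, and distinctness limits each vertex to blocking $\Oh(1)$ handovers.
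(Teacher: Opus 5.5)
Your final step (invoking \Cref{thm:fptbelow} once $k=\Oh(\bellowPara)$) is fine. The gap is the dichotomy ``$k=\Oh(\bellowPara)$ or yes-instance'' that feeds it, which you never establish.

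You read $\upperboud^*_{\R,G}$ as $\sum_{i}\dist_G(s_i,t_i)-\lfloor\frac{k-1}{2}\rfloor$. Under that reading, your greedy saving lemma is a new general upper bound: every instance with distinct terminals would have a schedule of makespan $\sum_i\dist_G(s_i,t_i)-\lfloor\frac{k-1}{2}\rfloor-\Omega(k)$. This is strictly stronger than \Cref{lem:upperbound-dist}. The paper neither proves nor uses anything like it, and your sketch does not deliver it:
\begin{itemize}
\item Distinctness does not bound degrees in the conflict digraph. Suppose $N_G(s_j)=\{u,t_1,\dots,t_m\}$, where $t_1,\dots,t_m$ are pendant targets of $R_1,\dots,R_m$. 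Every $R_i$ finishes via $u,s_j,t_i$, while $R_j$ must leave via $s_j,u$. So $R_j$ cannot overlap the end of any of these $m$ routes.
\item The pairwise digraph ignores three-agent interactions, which arise as soon as agents at distance $1$ or $2$ are overlapped on both sides.
\item The charging argument is only announced, not given.
\end{itemize}
Even your baseline protocol is incorrect as stated. Distinct terminals do not keep $s_{2j}$ off the remaining route of $R_{2j-1}$, because $s_{2j}$ can be the penultimate vertex of every shortest $s_{2j-1}$-$t_{2j-1}$-path. This is exactly why the paper needs \Cref{lem:three-dist} (among any three agents some pair can be overlapped) and obtains only a saving of $\lfloor\frac{k-1}{2}\rfloor$.

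The missing observation is that no such lemma is required. The paper's proof, like the remark just before the theorem, uses $\upperboud^*_{\R,G}=\sum_{i\in[k]}\dist_G(s_i,t_i)$; the formula in the introduction is inconsistent with it. Then $\lambda=\sum_i\dist_G(s_i,t_i)-\bellowPara$, and there are two cases:
\begin{itemize}
\item If $\bellowPara\le\lfloor\frac{k-1}{2}\rfloor$, \Cref{lem:upperbound-dist} directly gives a yes-instance.
\item Otherwise $\bellowPara\ge\lfloor\frac{k-1}{2}\rfloor+1\ge k/2$, so $k\le 2\bellowPara$. Then \Cref{thm:fptbelow} with parameter $\upperboud_{\R,G}-\lambda=\bellowPara+1$ runs in $2^{\Oh(k^2\bellowPara)}\cdot n^{\Oh(1)}=2^{\Oh(\bellowPara^3)}\cdot n^{\Oh(1)}$ time.
\end{itemize}
Under your reading of $\upperboud^*_{\R,G}$, the theorem would be a genuinely stronger claim, and neither this argument nor yours proves it.
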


\section{Preliminaries}\label{sec:formaldefinition}
\begin{table}[t]
\setlength{\tabcolsep}{4pt}
\footnotesize
\centering
\begin{tabular}{cl}
\toprule
\textbf{Symbol} & \textbf{Meaning} \\
\midrule
$\nat$ & Non-negative integers. \\
$G$ & Undirected input graph. \\
$V(G),\,E(G)$ & Vertex and edge sets of $G$. \\
$\dist_G(a,b)$ & Distance (the number of edges in the shortest path) between\\
& $a$ and $b$ in $G$. \\
$k$ & Number of agents. \\
$\R=\{R_1,\dots,R_k\}$ & Set of agents. \\
$R_i=(s_i,t_i)$ & Agent $i$ with start $s_i$ and destination $t_i$. \\
$s_i,\,t_i$ & Start/destination of $R_i$ ($s_i\neq t_i$). \\
$\{s_i\mid i\in[k]\}\cup\{t_i\mid i\in[k]\}$ & Set of  {terminals}. \\
$(W_i,x_i)$ & Route of $R_i$. \\
$W_i=(u_0,\ldots,u_{\delta_i})$ & Positions of $R_i$ along its route. \\
$x_i$ & Moment when  $R_i$ starts moving along $W_i$. \\
$\delta_i$ & Length of $W_i$ (steps). \\
$u_y$ & Position of $R_i$ at time $x_i+y$. \\
$\mathrm{free}/\mathrm{occ.}$ & Vertex status at a time step (no/some agent at~$v$). \\
$\Gamma$ & A schedule: pairwise non-conflicting routes. \\
$\lambda$ & Makespan of $\Gamma$ ($\max_i(x_i+\delta_i)$). \\
$\upperboud_{\R,G}$ (or $\upperboud$) & $1+\sum_{i\in[k]}\dist_G(s_i,t_i)$, the  upper bound on the makespan. \\
$\bellowPara_{\R,G}$  (or $\bellowPara$) & $\upperboud_{\R,G}-\lambda$, the gain below the upper bound. \\
\bottomrule
\end{tabular}
\caption{Summary of notation.}
\label{tab:notation}
\end{table}

We use $\nat$ to denote the set of non-negative integers, and write $[p]$ and $[0,p]$ for the set~$\{1,\ldots,p\}$ and $\{0,\ldots,p\}$, respectively, 
where $p\in\nat$.
We refer to the textbook~\cite{Diestel} for the basic graph-theoretic notions and notation.
The sets of vertices of a graph graph $G$ are denoted by~$V(G)$ and $E(G)$, respectively. 
We write $P=v_0,\ldots,v_\ell$, where $v_0,\ldots,v_\ell\in V(G)$ are distinct vertices of $G$ such that $v_{i-1}v_i\in E(G)$ for all $i\in[\ell]$, to denote a \emph{path} of \emph{length} $\ell$ in $G$. The vertices $v_0$ and $v_\ell$ are the \emph{end-vertices} of $P$, and $v_1,\ldots,v_{\ell-1}$ are \emph{internal}. For a path $P$ with end-vertices $s$ and $t$, we say that $P$ is an $s$-$t$-path. 
For an undirected graph $G$ and two vertices $u,v\in V(G)$, we denote by $\dist_G(u,v)$ the distance, i.e., the length of the shortest path, between $u$ and $v$ in $G$. We drop the subscript, if the graph $G$ is clear from the context.  Throughout the paper, we use $n$ to denote the number of vertices of a graph if it does not create  confusion.

We are given an undirected graph~$G$ and a set~${\R=\{R_1, R_2, \ldots, R_k\}}$ of $k$ agents. Each~$R_i \in \R$ has a starting vertex $s_i$ and a destination vertex $t_i$ in $V(G)$ such that $s_i\neq t_i$.
We refer to the elements in the set~$\{s_i\mid i\in [k]\}\cup \{t_i\mid i\in [k]\}$ as \emph{terminals}.
We use a discrete time frame $[0, t]$, $t \in \nat$, to reference the sequence of moves of the agents, and in each time step $x \in [0, t]$, every agent~$R_i\in \R$ performs one of the following actions: 
\begin{itemize}
    \item $R_i$ enters the graph at $s_i$; 
    \item $R_i$ moves from its current position $v\in V(G)$ to a neighbor of $v$; 
    \item $R_i$ stays at its current position $v\in V(G)$; or 
    \item $R_i$ disappears from the graph if its current position is $t_i$.
\end{itemize}

A \emph{route} for agent $R_i$ is a pair $(W_i, x_i)$, where  $W_i$ is 
a sequence $(u_0, \ldots, u_{\delta_i})$ of vertices in~$G$ and  such that
\begin{itemize}
    \item $u_0$ is the starting vertex $s_i$ of $R_i$; 
    \item $u_{\delta_i}$ is the destination vertex $t_i$ of $R_i$; 
    \item for each $y\in [\delta_i]$ it holds that $u_{y-1}=u_y$ or $u_{y-1}u_{y}$ is an edge in $G$, 
\end{itemize}
and $x_i\in \nat$. 
The vertex $u_y$ is the position of $R_i$ at time step $x_i+y$. Thus, $R_i$ appears at moment $x_i$ in $s_i$ and moves along $W_i$ from $s_i$ to $t_i$; the agent disappears when it reaches $t_i$.
A vertex $v \in V(G)$ is \emph{free} at time step $x \in [0,t]$ if no agent is located at $v$ at time step $x$; otherwise, $v$ is \emph{occupied}. 

For $1\leq i<j\leq k$, two routes $(W_i=(u_0, \ldots, u_{\delta_i}),x_i)$ and $(W_j=(v_0, \ldots, v_{\delta_j}),x_j)$ are \emph{conflicting} (or \emph{in conflict}) if there exists some time step $x$ such that $R_i$ and $R_j$ are either at the same vertex or traversing the same edge at the time step $x$. More formally, $x= x_i+y = x_j+z$ such that either $u_{y} = v_{z}$ or $u_{y-1}u_{y}=v_{z-1}v_{z}$. A \emph{schedule} $\Gamma$ for $\R$ is a set of pairwise non-conflicting routes $\{(W_i,x_i)~|~i \in [k]\}$, during a time interval $[0, \lambda]$ (that is, $\max_{i\in [k]}(x_i+\delta_i) = \lambda$). The integer $\lambda$ is called the \emph{makespan} of $\Gamma$. Using the introduced terminology, we formalize \CMPlong stated in \Cref{sec:introduction}.
The above notation is summarized in \Cref{tab:notation}.

For the introduction to Parameterized Complexity, we refer to the textbooks~\cite{CyganFKLMPPS15,DowneyFellows13}.
Here, we just informally  remind some basic notions. 
A \emph{parameterized problem} $L\subseteq\Sigma\times\nat$, where~$\Sigma^*$ is the set of strings over a finite alphabet $\Sigma$. Respectively, its instance is a pair~$(x,k)$, where~$x\in\Sigma^*$ and $k\in\nat$ is a \emph{parameter}.
A parameterized problem $L$ is  \emph{fixed-parameter tractable} (\classFPT) if there exists an algorithm solving $L$ in $f(k)\cdot|x|^{\Oh(1)}$ time, where $f$ is a computable function. Furthermore, $L$ is \emph{slicewise polynomial} (\classXP) if $L$ can be solved in $|x|^{g(k)}$ time for a computable function $g$. The classes \classFPT and \classXP consist of the fixed-parameter tractable
and slicewise polynomial parameterized problems, respectively. The standard way to exclude the existence of an \classFPT algorithm for a parameterized problem (up to reasonable complexity assumptions) is to show its hardness for the class \classW{1} (see~\cite{CyganFKLMPPS15,DowneyFellows13} for the definitions)
using a \emph{parameterized reduction}, that is, a reduction running in $f(k)\cdot|x|^{\Oh(1)}$ time for a computable $f$.

A \emph{kernelization} (or \emph{kernel}) for a parameterized problem $L$ is a polynomial-time algorithm that for any instance $(x,k)$ outputs an instance $(x',k')$ of $L$ such that (i)~$(x,k)\in L$ if and only if $(x',k')\in L$ and (ii)~$|x'|+k'\leq f(k)$ for a computable function $f$. A kernel is \emph{polynomial} if $f$ is polynomial. While a decidable parameterized problem is in \classFPT if and only if it admits a kernel, it is unlikely that all fixed-parameter tractable parameterized problems admit polynomial kernels. In particular, it is standard to exclude the existence of a polynomial kernel up to the assumption that \classCoNP$\not\subseteq$\classNP/{\sf poly} by providing a \emph{polynomial parameter transformation}, that is, a parametrized reduction where $f$ is polynomial, from the parameterized problem for which such a kernelization lower bound is already established.

\section{Basic observations}
\label{sec:basic}

In this section, we collect basic  observations on \CMPshort. 
First, we formally prove our upper bound for the minimum makespan and show some useful properties of schedules guaranteeing such a makespan.

\begin{lemma}\label{lem:upperbound} 
    Let $(G, k, \R, \lambda)$ be an instance of \CMPshort. 
    If $\lambda \ge 1+ \sum_{i\in[k]} \dist(s_i,t_i)$, then $(G, k, \R, \lambda)$ is a yes-instance.  Moreover, there is a schedule of makespan at most $\lambda$ such that for every $i\in[k]$,
    \begin{itemize}
    \item[(i)] the route of $R_i$ is $(P_i,x_i)$ where $P_i$ is an arbitrary shortest $s_i$-$t_i$-path, and
    \item[(ii)] for each internal vertex $v$ of $P_i$ and the moment at which $R_i$ is in $v$, $R_i$ is an only agent on $G$.   
    \end{itemize}
\end{lemma}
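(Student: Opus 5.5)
We construct the sequential schedule explicitly and check that it is conflict-free and that its makespan equals $1+\sum_{i\in[k]}\dist(s_i,t_i)$. Fix, for every $i\in[k]$, an arbitrary shortest $s_i$-$t_i$-path $P_i$, and write $d_i=\dist(s_i,t_i)\ge 1$, where $d_i\geq 1$ because $s_i\neq t_i$. Agent $R_i$ will traverse $P_i$ without waiting, so its route is $(P_i,x_i)$ and it occupies the $y$-th vertex of $P_i$ at time $x_i+y$ for $y\in[0,d_i]$. The start times are $x_1=0$ and $x_{i+1}=x_i+d_i+1$, which gives $x_i=(i-1)+\sum_{j<i}d_j$. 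The last agent therefore disappears at time
\[x_k+d_k=(k-1)+\sum_{i\in[k]}d_i\le \sum_{i\in[k]}d_i+1\le\lambda\]
whenever $k\ge 1$. For $k\le 2$ this matches the stated bound exactly; for larger $k$ it is even smaller. If the intended reading is the bound $1+\sum_i d_i$, then gaps of length $1$ between consecutive agents suffice, and any extra slack may be absorbed by shifting all start times.

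We will establish non-conflict by showing that the agents' active time intervals are disjoint. $R_i$ is present in the graph exactly during $[x_i,x_i+d_i]$, and $x_{i+1}=x_i+d_i+1>x_i+d_i$, so for $i<j$ the intervals $[x_i,x_i+d_i]$ and $[x_j,x_j+d_j]$ are disjoint. Hence no two agents are ever at the same vertex at the same time. For edges, a traversal of $u_{y-1}u_y$ by $R_i$ is attributed to time $x_i+y$ with $y\ge 1$, so it lies in $(x_i,x_i+d_i]$, and edge conflicts are excluded by the same disjointness. Every pair of routes is thus non-conflicting, which proves~(i).

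Property~(ii) follows from the same disjointness. When $R_i$ is at an internal vertex of $P_i$, the current time lies strictly inside $[x_i,x_i+d_i]$, and no other agent's interval meets it, so $R_i$ is the only agent on $G$. The only subtle point is boundary times. By the conflict definition, a state at time $x_j$ (agent at $s_j$) could clash with a state at time $x_i+d_i$ (agent at $t_i$, about to disappear) if these times coincided and $s_j=t_i$, as in the two-agent path example. The gap of one time step rules this out, and it is precisely what the additive $+1$ in $\upperboud_{\R,G}$ accounts for. I expect verifying this boundary convention to be the only delicate step; everything else is routine bookkeeping.
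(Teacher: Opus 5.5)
Your proposal has a genuine gap in the makespan bound. With a one-step pause after every agent, the last agent arrives at $x_k+d_k=(k-1)+\sum_{i\in[k]}d_i$. The inequality $(k-1)+\sum_i d_i\le 1+\sum_i d_i$ holds only for $k\le 2$. For $k\ge 3$ your schedule is \emph{longer} than $\upperboud_{\R,G}$, not shorter, so it does not witness a yes-instance when $\lambda=\upperboud_{\R,G}$. For example, with agents $(a,b),(a,b),(b,a)$ on a single edge $ab$, your schedule has makespan $5$, while the lemma claims $4$. Your closing remark about absorbing slack by shifting start times does not repair this, since shifting cannot remove pauses your construction inserts between every pair. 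Your interval-disjointness argument is also stronger than needed, and that is exactly what costs you the extra $k-2$ steps.

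The missing idea is that a pause is needed only when the next agent starts where the previous one ends. Under the conflict definition, $R_j$ may appear at $s_j$ at exactly the time step $x_i+d_i$ at which $R_i$ sits at $t_i$ and disappears, provided $s_j\neq t_i$. The two intervals then share an endpoint, but both agents are at terminals at that moment, so (ii) is unaffected.

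The paper therefore fixes an ordering that maximizes the length $\ell$ of a prefix $R_1,\dots,R_\ell$ with $s_i\neq t_{i-1}$ for $i\in[2,\ell]$, and routes this prefix back-to-back with no pauses. By maximality, every remaining agent satisfies $s_i=t_\ell$ (otherwise it could be appended) and $t_i=s_1$ (otherwise it could be prepended). Hence $s_1\neq t_\ell$, and the remaining agents are all copies of $(t_\ell,s_1)$, which can again be chained back-to-back among themselves. A single idle step between $R_\ell$ and $R_{\ell+1}$ therefore suffices. This gives makespan at most $1+\sum_i d_i$, with every agent on its chosen shortest path and property (ii) preserved.
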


\begin{proof}
   We assume that the agents $R_1,\ldots,R_k$ are ordered in such a way that $s_i\neq t_{i-1}$ holds for all~$i\in[2,\ell]$, with $\ell\leq k$ chosen as large possible.
   We navigate the agents $R_1,\ldots,R_\ell$ sequentially on shortest paths, such that we start the route of the first agent at the moment $0$ and
   the route of agent $i\geq 2$ exactly at the same time when the $(i-1)$-th agent reaches its destination (that is at the time step $\sum_{j\in[i-1]} \dist_G(s_j,t_j)$). It is straightforward  that the routes in this (partial) schedule have no conflicts.   
   If $\ell=k$ then we have the schedule with the makespan~$\sum_{i\in[k]} \dist_G(s_i,t_i)$.
   Assume that this is not the case and $\ell<k$. 

   By the choice of $\ell$, we have that for every $i\in[\ell+1,k]$, $s_i=t_\ell$ as, otherwise, we would be able to extend the ordering $R_1,\ldots,R_\ell$ by appending one of the remaining agents. We also have that for every $i\in[\ell+1,k]$, $t_i=s_1$ because we would be able to insert the agent with $t_i\neq s_1$ for $i\in[\ell+1,k]$ in the beginning of the ordering, contradicting the choice of $\ell$. Note that this implies that $s_1\neq t_\ell$ as $s_i\neq t_i$ for all $i\in[k]$. 
   Then, we can extend the partial schedule already constructed for $R_1,\ldots,R_\ell$ as follows.  
   We set 
   $t=\sum_{i\in[\ell]}\dist_G(s_i,t_i)+1$, and then we navigate the agents $R_{\ell},\ldots,R_k$ using the same strategy as for the previous agent starting from the moment  $t$. More precisely, for each $i\in[\ell+1,k]$ we move $R_i$ from $s_i$ to~$t_i$ along the shortest path, and $R_i$ starts moving at the moment 
   $1+\sum_{j\in[i-1]}\dist_G(s_j,t_j)$.  Because $s_i=t_\ell$ and $t_i=s_1\neq s_i$ for each $i\in[\ell+1,k]$ and $R_{\ell}$ starts to move at the moment $t+1$, the constructed schedule is feasible, that is, the routes have no conflicts. Because the makespan is 
   $1+\sum_{i\in[k]}\dist_G(s_i,t_i)$, we have that $(G, k, \R, \lambda)$ is a yes-instance if~$\lambda \ge 1+ \sum_{i\in[k]} \dist(s_i,t_i)$. 
   It is straightforward to see that the constructed schedule satisfies requirements (i) and~(ii).
   This concludes the proof.   
\end{proof}

We note that the bound in \Cref{lem:upperbound} is tight. Let $G$ be a path with end-vertices $u$ and~$v$. 
We define $s_1=t_2=u$ and $s_2=t_1=v$. Then routing two agents $R_1=(s_1,t_1)$ and~$R_2=(s_2,t_2)$ demands $\dist_G(s_1,t_1)+\dist_G(s_2,t_2)+1$ steps.
From now on, let us denote by~$\upperboud_{\R, G}$ (or just~$\upperboud$ if $G$ and $\R$ are clear from the context) the value $\upperboud_{\R, G} = 1+ \sum_{i\in[k]} \dist(s_i,t_i)$.
Similarly, we denote by $\bellowPara_{\R, G}$ (or just $\bellowPara$) the value~$\bellowPara_{\R, G}=\upperboud_{\R, G} - \lambda$.

In \Cref{thm:fptbelow}, we exploit the facts that \CMPshort is in~\classXP when parameterized by $k$ and in \classFPT for the parameterization by $k$ and $\lambda$. 
To show these properties, we reduce  the problem to finding paths in auxiliary 
time-expansion graphs.

Let $G$ be a graph and let $\lambda\geq 0$ be an integer. We denote by $\mathcal{G}_\lambda$ the directed \emph{time-expansion} graph with the vertex set $\{(v,t)\mid v\in V(G)\text{ and }t\in[0,\lambda]\}$ where there is an arc from a vertex $(u,i)$ to a vertex~$(v,j)$ if and only if $j-i=1$ and $v\in N_G[u]$. In other words, the set of vertices consists of $\lambda+1$ copies of $V(G)$ forming layers, arcs join vertices of subsequent layers, and $(u,i)$ from the $i$-th layer is joined with arcs with the copies of the vertices $v$ of $G$ in the $(i+1)$-th layer that are either the same as $u$ or adjacent to $u$ in $G$. Notice that $\mathcal{G}_\lambda$ is a directed acyclic graph (DAG).
The construction of $\mathcal{G}_\lambda$ and the definition of the routes in \CMPshort imply the following observation.

\begin{observation}\label{obs:paths_temp}
An instance  $(G,k,\R=\{R_i=(s_i,t_i) \mid i \in [k]\},\lambda)$ of \CMPshort is a yes-instance if and only if there are integers ${p_1,\ldots,p_k\in[0,\lambda]}$ and~${q_1,\ldots,q_k\in[0,\lambda]}$ such that $\mathcal{G}_\lambda$ has vertex disjoint directed paths $P_1,\ldots,P_k$, where~$P_i$ is an $(s_i,p_i)$-$(t_i,q_i)$-path for every~$i\in[k]$,
with the additional property that for every~$h\in[\lambda]$ and and each edge $\{u,v\}$ of $G$, the arcs $((u,h-1),(v,h))$ and $((v,h-1),(u,h))$ are not used by two paths.
\end{observation}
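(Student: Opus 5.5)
The plan is to prove both directions of \Cref{obs:paths_temp} with one explicit dictionary between routes in $G$ and directed paths in $\mathcal{G}_\lambda$. Under this dictionary, the two kinds of conflict between routes become exactly the two restrictions on the family of paths.

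\emph{Forward direction.} Let $\Gamma=\{(W_i,x_i)\mid i\in[k]\}$ be a schedule of makespan at most $\lambda$, where $W_i=(u_0,\ldots,u_{\delta_i})$. I set $p_i=x_i$ and $q_i=x_i+\delta_i$; both lie in $[0,\lambda]$ because $x_i\ge 0$ and $\max_i(x_i+\delta_i)\le\lambda$. I then define $P_i=(u_0,x_i),(u_1,x_i+1),\ldots,(u_{\delta_i},x_i+\delta_i)$. This is a directed path in $\mathcal{G}_\lambda$, for three reasons:
\begin{itemize}
\item consecutive vertices lie in consecutive layers;
\item $u_y\in N_G[u_{y-1}]$ holds because each step of a route either stays or follows an edge;
\item the vertices of $P_i$ are pairwise distinct because their time coordinates are distinct, even when $W_i$ revisits a vertex of $G$.
\end{itemize}
By construction $P_i$ starts at $(s_i,p_i)$ and ends at $(t_i,q_i)$.

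Suppose $P_i$ and $P_j$ share a vertex $(v,x)$. Then $R_i$ and $R_j$ are both at $v$ at time step $x$, which is a conflict. Suppose instead that $P_i$ uses $((u,h-1),(v,h))$ and $P_j$ uses $((v,h-1),(u,h))$ for an edge $\{u,v\}$. Then the two agents traverse the same edge during step $h$, again a conflict. The case where both paths use the same arc is already excluded by vertex-disjointness.

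\emph{Backward direction.} Given paths $P_1,\ldots,P_k$ with the stated properties, I read off routes by reversing the dictionary. If $P_i=(w_0,p_i),(w_1,p_i+1),\ldots,(w_{q_i-p_i},q_i)$, then the arc condition of $\mathcal{G}_\lambda$ gives $w_y=w_{y-1}$ or $w_{y-1}w_y\in E(G)$. I set $W_i=(w_0,\ldots,w_{q_i-p_i})$ and $x_i=p_i$, so that $W_i$ starts at $s_i$ and ends at $t_i$, and $(W_i,x_i)$ is a valid route.

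Non-conflict follows in the same two ways. Sharing a position at the same time would mean $P_i$ and $P_j$ share a vertex of $\mathcal{G}_\lambda$. Traversing a common edge at the same step would mean the two paths use the same arc, which vertex-disjointness forbids, or two opposite arcs between layers $h-1$ and $h$, which the additional property forbids. Finally, $\max_i(x_i+\delta_i)=\max_i q_i\le\lambda$, so the schedule has makespan at most $\lambda$ and the instance is a yes-instance.

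I expect no genuine obstacle; the only points needing care are bookkeeping. First, the time index of $u_y$ is $x_i+y$, so the vertex of $\mathcal{G}_\lambda$ that records it is $(u_y,x_i+y)$. Second, an edge traversal is the arc between layers $h-1$ and $h$, matching the formal conflict condition $u_{y-1}u_y=v_{z-1}v_z$ with $x=x_i+y=x_j+z$. Third, a stay step of an agent corresponds to the arc $((v,h-1),(v,h))$, and such arcs can only clash through shared vertices, which disjointness already handles.
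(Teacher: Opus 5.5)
Your proof is correct and takes essentially the paper's approach. The paper states this observation without proof, as an immediate consequence of the construction of $\mathcal{G}_\lambda$ and the definition of routes. Your explicit correspondence $(W_i,x_i)\leftrightarrow P_i$ with $p_i=x_i$ and $q_i=x_i+\delta_i$, which turns same-vertex conflicts into shared vertices and opposite edge traversals into the forbidden arc pairs, is exactly the omitted verification.
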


\begin{figure}[t]
\centering
\scalebox{0.7}{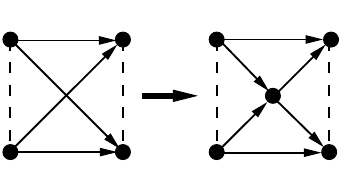}
\caption{Arc modifications in the construction of $\mathcal{G}_\lambda^*$. }
\label{fig:subdiv}
\end{figure}

Furthermore, we can augment $\mathcal{G}_\lambda$ by artificial terminals and modify the graph to encode the additional property to reduce \CMPshort to the \textsc{Disjoint Paths} problem. We remind the reader that the task of \textsc{Disjoint Paths} is, given a (directed) graph $G$ and $k$ pairs of terminal vertices $(s_i,t_i)$ for $i\in[k]$, to decide whether there are $k$ vertex disjoint paths joining these pairs of terminals.
Let $s_i,t_i\in V(G)$ be terminal vertices for~$i\in[k]$. We construct $\mathcal{G}_\lambda^*$ as follows:  
\begin{itemize}
\item construct $\mathcal{G}_\lambda$ and $2k$ vertices $s_i^*$ and $t_i^*$ for $i\in[k]$,
\item for each $i\in[k]$ and every $j\in[0,\lambda]$, construct arcs $(s_i^*,(s_i,j))$ and $((t_i,j),t_i^*)$,
\item for each $h\in[\lambda]$ and every edge $e=\{u,v\}$ of $G$, construct a vertex $w_{h,e}$ and replace the arcs $((u,h-1),(v,h))$ and $((v,h-1),(u,h))$ by four arcs 
${((u,h-1),w_{h,e})}$, and~${(w_{h,e},(v,h))},\allowbreak{((v,h-1),w_{h,e})},{(w_{h,e},(u,h))}$;
\end{itemize}
see \Cref{fig:subdiv} for the arc modifications in the construction.
Notice that $\mathcal{G}_\lambda^*$ is a DAG. By \Cref{obs:paths_temp}, we obtain the following claim.

\begin{observation}\label{obs:paths_aug}
An instance  $(G,k,\R=\{R_i=(s_i,t_i) \mid i \in [k]\},\lambda)$ of \CMPshort is a yes-instance if and only if $\mathcal{G}_\lambda^*$ has vertex disjoint directed paths $P_1,\ldots,P_k$, where $P_i$ is an $s_i^*$-$t_i^*$-path for every $i\in[k]$.
\end{observation}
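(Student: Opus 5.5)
We derive the claim from \Cref{obs:paths_temp}. It therefore suffices to show the following: $\mathcal{G}_\lambda$ has vertex disjoint paths $P_i$ from $(s_i,p_i)$ to $(t_i,q_i)$, with no two paths using the two crossing arcs of the same edge $\{u,v\}$ of $G$ at the same step $h$, if and only if $\mathcal{G}_\lambda^*$ has vertex disjoint $s_i^*$-$t_i^*$-paths. Throughout, we use that the construction only subdivides arcs and adds sources and sinks. A waiting arc $((u,h-1),(u,h))$ is left untouched. A crossing arc $((u,h-1),(v,h))$ with $uv\in E(G)$ is replaced by the two-arc path through $w_{h,e}$, where $e=\{u,v\}$. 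The vertex $w_{h,e}$ is shared by both orientations of $e$ at step $h$.

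\emph{Forward direction.} Take paths $P_1,\ldots,P_k$ as in \Cref{obs:paths_temp}. Build $Q_i$ as follows:
\begin{itemize}
\item prepend the arc $(s_i^*,(s_i,p_i))$ and append the arc $((t_i,q_i),t_i^*)$;
\item replace every crossing arc $((u,h-1),(v,h))$ of $P_i$ by $(u,h-1),w_{h,e},(v,h)$.
\end{itemize}
Each $Q_i$ is then an $s_i^*$-$t_i^*$-path in $\mathcal{G}_\lambda^*$. We check disjointness by vertex type. The layered vertices are disjoint because the $P_i$ are. The new terminals $s_i^*,t_i^*$ are private to index $i$. A subdivision vertex $w_{h,e}$ lies on some $Q_i$ exactly when $P_i$ uses one of the two crossing arcs of $e$ at step $h$. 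The additional property of \Cref{obs:paths_temp} forbids two paths from doing so, so each $w_{h,e}$ lies on at most one $Q_i$.

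\emph{Backward direction.} Let $Q_1,\ldots,Q_k$ be disjoint paths in $\mathcal{G}_\lambda^*$. Every $s_j^*$ has in-degree $0$ and every $t_j^*$ has out-degree $0$. Hence the internal vertices of $Q_i$ are all layered vertices or subdivision vertices. The second vertex of $Q_i$ is therefore some $(s_i,p_i)$ and its second-to-last vertex is some $(t_i,q_i)$.

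Let $w_{h,e}$ be a subdivision vertex on $Q_i$. Its predecessor is one of $(u,h-1),(v,h-1)$ and its successor is one of $(u,h),(v,h)$. We shortcut $w_{h,e}$ as follows:
\begin{itemize}
\item if the predecessor and successor correspond to different vertices of $G$, replace the detour by the corresponding crossing arc of $\mathcal{G}_\lambda$;
\item if they correspond to the same vertex of $G$, e.g.\ $(u,h-1)\to w_{h,e}\to(u,h)$, replace the detour by the waiting arc $((u,h-1),(u,h))$.
\end{itemize}
This yields directed $(s_i,p_i)$-$(t_i,q_i)$-paths $P_i$ in $\mathcal{G}_\lambda$. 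They remain a path because $\mathcal{G}_\lambda$ is layered, so no vertex repeats. They are vertex disjoint because their vertex sets are subsets of those of the $Q_i$.

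Finally, suppose two paths $P_i$ and $P_j$ both used crossing arcs of the same edge $e$ at step $h$. In this construction a crossing arc arises only from a detour through $w_{h,e}$. Then $Q_i$ and $Q_j$ would both contain $w_{h,e}$, contradicting disjointness. So the additional property holds, and \Cref{obs:paths_temp} gives a yes-instance.

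The only step needing care is this backward shortcutting. One must note two things. First, passing through $w_{h,e}$ without changing the vertex of $G$ is harmless, since it becomes a waiting arc. Second, the sharing of a single $w_{h,e}$ by both orientations of $e$ is exactly what encodes the forbidden swap.
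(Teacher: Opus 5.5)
Your proposal is correct and follows essentially the same route as the paper. Both reduce to \Cref{obs:paths_temp}, eliminate detours of the form $(v,h-1),w_{h,e},(v,h)$ by replacing them with the waiting arc $((v,h-1),(v,h))$ (the paper phrases this as assuming the paths are induced), and use that the single vertex $w_{h,e}$ shared by both orientations of $e$ at step $h$ is what rules out swaps. Your version just spells out both directions in more detail than the paper's terse argument.
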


\begin{proof}
To see the equivalence between the conditions in \Cref{obs:paths_temp} and \Cref{obs:paths_aug}, note that for  vertex disjoint paths $s_i$-$t_i$-paths in $\mathcal{G}_\lambda^*$, it can be assumed that they are induced. Then 
for every $h\in[\lambda]$ and each $e=\{u,v\}\in E(G)$, $((v,h-1),w_{h,e},(v,h))$ is not a subpath of any path. Otherwise, we can replace 
$((v,h-1),w_{h,e},(v,h))$ by $((v,h-1),(v,h))$. 
To conclude the proof, it is sufficient to observe that 
for every $h\in[\lambda]$ and each $e=\{u,v\}\in E(G)$, $((u,h-1),w_{h,e},(v,h))$ and $((v,h-1),w_{h,e},(u,h))$ are not disjoint and, therefore, cannot be subpaths of vertex disjoint paths in $\mathcal{G}_\lambda^*$.
\end{proof}

Because \textsc{Disjoint Paths} can be solved in $n^{\Oh(k)}$ time for $k$ pairs of terminals on DAGs by the result of Fortune, Hopcroft, and Wyllie~\cite{FortuneHW80}, by \Cref{obs:paths_aug}, we have the following. 

\begin{lemma}\label{lem:XP}
\CMPshort is  solvable in time $(kn)^{\Oh(k)}$.     
\end{lemma}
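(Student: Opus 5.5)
The plan is to reduce the instance to \textsc{Disjoint Paths} on the DAG $\mathcal{G}_\lambda^*$ via \Cref{obs:paths_aug}, and then run the algorithm of Fortune, Hopcroft, and Wyllie~\cite{FortuneHW80}. That algorithm decides \textsc{Disjoint Paths} with $k$ terminal pairs on an $N$-vertex DAG in $N^{\Oh(k)}$ time. The only genuine work is bounding $N=|V(\mathcal{G}_\lambda^*)|$ polynomially in $kn$; this in turn requires first bounding $\lambda$ polynomially in $kn$.

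\textbf{Bounding $\lambda$.} First, if $\lambda\ge \upperboud_{\R,G}=1+\sum_{i\in[k]}\dist_G(s_i,t_i)$, then by \Cref{lem:upperbound} the instance is a yes-instance, and we answer immediately. Otherwise $\lambda< 1+\sum_i \dist_G(s_i,t_i)\le 1+k(n-1)\le kn$, since $G$ is connected and every distance is at most $n-1$. This test takes polynomial time: compute the distances by BFS.

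\textbf{Size of $\mathcal{G}_\lambda^*$ and conclusion.} The time-expansion graph $\mathcal{G}_\lambda$ has $(\lambda+1)n\le (kn+1)n$ vertices. The construction of $\mathcal{G}_\lambda^*$ adds $2k$ terminal vertices and one subdivision vertex $w_{h,e}$ for each $h\in[\lambda]$ and $e\in E(G)$, that is, at most $\lambda|E(G)|\le kn\cdot n^2$ further vertices. Hence $N=\Oh(k n^3)$, and $\mathcal{G}_\lambda^*$ can be built in time polynomial in $N$. As noted before the statement, it is acyclic. By \Cref{obs:paths_aug}, the instance is a yes-instance if and only if $\mathcal{G}_\lambda^*$ contains vertex-disjoint $s_i^*$-$t_i^*$-paths for all $i\in[k]$. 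Applying~\cite{FortuneHW80} then gives running time $N^{\Oh(k)}=(kn^3)^{\Oh(k)}=(kn)^{\Oh(k)}$.

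\textbf{Main obstacle.} The only subtle point is making sure $\lambda$ is not huge. It is part of the input and could be exponential, in which case $\mathcal{G}_\lambda^*$ would not be polynomial. The early exit via \Cref{lem:upperbound} handles exactly this case. Everything else is a direct combination of \Cref{obs:paths_aug} with the known $n^{\Oh(k)}$ algorithm for disjoint paths in DAGs.
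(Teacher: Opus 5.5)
Your proposal is correct and follows essentially the same route as the paper's proof. Both exit early via \Cref{lem:upperbound} when $\lambda\ge\upperboud_{\R,G}$, which gives $\lambda\le k(n-1)$ otherwise, and both bound $|V(\mathcal{G}_\lambda^*)|\le(\lambda+1)n+m\lambda+2k$. Both then invoke Fortune, Hopcroft, and Wyllie on the DAG through \Cref{obs:paths_aug}.
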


\begin{proof}
Let $I=(G,k,\R=\{R_i=(s_i,t_i) \mid i \in [k]\},\lambda)$ be an instance of \CMPshort where $G$ is a graph with $n$ vertices and $m$ edges. If ${\lambda\geq \upperboud_{\R, G}}$, then $I$ is a yes-instance by \Cref{lem:upperbound}. Otherwise, we construct $\mathcal{G}_\lambda^*$. As the distance between any pair of terminals is at most $n-1$, $\lambda\leq (n-1)k$.
Hence, we have that~$\mathcal{G}_\lambda^*$ has at most~${(\lambda+1)n+m\lambda+2k=((n-1)k+1)n+m(n-1)k+2k}$~vertices. Then, we call the algorithm by Fortune et al.~~\cite{FortuneHW80} to solve \textsc{Disjoint Path} on $\mathcal{G}_\lambda^*$ with the pairs~$(s_i^*,t_i^*)$ for~$i\in[k]$ of terminals in $(nk)^{\Oh(k)}$ time. By \Cref{obs:paths_aug}, solving \textsc{Disjoint Paths} for the constructed instance is equivalent to solving \CMPshort for $I$. This concludes the proof.
\end{proof}

\Cref{obs:paths_aug} also allows us to show that \CMPshort is FPT when parameterized by $k$ and $\lambda$ by making use of the \emph{color coding} technique of Alon, Yuster, and Zwick~\cite{AlonYZ95}.

\begin{lemma}\label{lem:fptlambda}
\CMPshort can be solved in $2^{\Oh(k\lambda)}\cdot n^{\Oh(1)}$ time.     
\end{lemma}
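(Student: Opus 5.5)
We reduce to \textsc{Disjoint Paths} in the DAG $\mathcal{G}_\lambda^*$ via \Cref{obs:paths_aug} and solve that instance by color coding, exploiting the fact that every relevant path is short. First, every directed path in $\mathcal{G}_\lambda$ visits each layer at most once, so it has at most $\lambda+1$ vertices. In $\mathcal{G}_\lambda^*$, a path from $s_i^*$ to $t_i^*$ may also pass through the subdivision vertices $w_{h,e}$, at most one per layer transition. Hence it has at most $2(\lambda+1)+1=\Oh(\lambda)$ vertices. The union of $k$ vertex disjoint paths $P_1,\ldots,P_k$ therefore contains at most $K=\Oh(k\lambda)$ vertices.

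Next we apply color coding to $\mathcal{G}_\lambda^*$. We color its vertices with $K$ colors; it suffices to color the layered and subdivision vertices, since the artificial terminals are distinct by construction. If a solution exists, a uniformly random coloring makes all vertices of the solution distinctly colored with probability at least $e^{-K}$. Derandomization uses a $(N,K)$-perfect hash family of size $e^{K}K^{\Oh(\log K)}\log N$, where $N=|V(\mathcal{G}_\lambda^*)|$.

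For a fixed coloring $c$ we look for $k$ colorful paths with pairwise disjoint color sets, which forces them to be vertex disjoint. We first compute, for each $i\in[k]$ and each color set $S\subseteq[K]$, whether there is an $s_i^*$-$t_i^*$-path whose vertices receive pairwise distinct colors forming exactly $S$. This is the standard dynamic program over pairs (vertex, set of colors used), running in $2^{K}\cdot N^{\Oh(1)}$ time. Then a second dynamic program over $i=1,\ldots,k$ and subsets $T\subseteq[K]$ decides whether $T$ can be partitioned into sets $S_1,\ldots,S_i$ that are feasible for agents $1,\ldots,i$. Using subset enumeration this takes $3^{K}\cdot k$ time. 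If the answer is yes for some coloring, the corresponding paths are vertex disjoint, so by \Cref{obs:paths_aug} the instance is a yes-instance. Conversely, any solution is colorful under some coloring in the family.

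The remaining task is to bound $N$ polynomially in $n$. If $\lambda\ge \upperboud_{\R,G}$, we answer yes by \Cref{lem:upperbound}. Otherwise $\lambda\le k(n-1)$, so $N=\Oh((\lambda+1)(n+m)+k)$, which is polynomial in $n$ (assuming $k\le n$ or absorbing $k$). The total running time is then $2^{\Oh(K)}N^{\Oh(1)}=2^{\Oh(k\lambda)}n^{\Oh(1)}$.

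The main care point is confirming that vertex disjointness in $\mathcal{G}_\lambda^*$ really captures both vertex and edge conflicts. This is exactly the content of \Cref{obs:paths_aug}, which we invoke. Beyond that, we only need to check that the length bound on the paths holds with the subdivision vertices included, which the argument above does; the rest is routine.
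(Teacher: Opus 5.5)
Your proof is correct and follows the paper's route: reduce to vertex-disjoint paths in the DAG $\mathcal{G}_\lambda^*$ via \Cref{obs:paths_aug}, note that each $s_i^*$-$t_i^*$-path has $\Oh(\lambda)$ vertices, and apply color coding. The only difference is that the paper guesses the exact lengths $\ell_i\le 2\lambda+2$ and invokes color coding for disjoint paths of prescribed lengths as a black box, whereas you run color coding directly with an explicit dynamic program over color subsets (the paper remarks that this also works), which makes your argument self-contained.
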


\begin{proof}
The color coding technique can be used directly but, for simplicity, we reduce \CMPshort to the variant of \textsc{Disjoint Paths} where each path should have a prescribed length. This variant belongs to the class of problems solvable by color coding~\cite{AlonYZ95} (see also the textbook~\cite{CyganFKLMPPS15}). By~\Cref{obs:paths_aug}, \CMPshort is equivalent to \textsc{Disjoint Paths} on $\mathcal{G}_\lambda^*$ with the terminal pairs $(s_i^*,t_i^*)$ for~$i\in[k]$. Notice that for any~$i\in[k]$, the length of an $s_i^*$-$t_i^*$-path is at least 4 and at most~$2\lambda+2$. We guess the length $\ell_i\leq 2\lambda+2$ of the $s_i^*$-$t_i^*$-path in a (potential) solution. For each of at most $(2\lambda)^k$ guesses, we use the color coding algorithm from~\cite{AlonYZ95} to check whether  $\mathcal{G}_\lambda^*$ has vertex disjoint $s_i^*$-$t_i^*$-paths $P_i$ where the length of $P_i$ is exactly $\ell_i$. This can be done in~$2^{\Oh(\sum_{i=1}^k(\ell_i+1))}\cdot n^{\Oh(1)}$ time. If we find a solution for one of the guesses, we conclude that the considered instance of \CMPshort is a yes-instance. Otherwise, if we fail to find a solution for all guesses, we conclude that we have a no-instance. 
As we have at most $(2\lambda)^k$ guesses and $\sum_{i=1}^k(\ell_i+1)\leq (2\lambda+3)k$, we obtain that the overall running time is  $2^{\Oh(k\lambda)}\cdot n^{\Oh(1)}$, concluding the proof.
\end{proof}

\section{\CMPshort parameterized below $\upperboud$}
\label{sec:fpt}

In this section, we  prove our main result, \Cref{thm:fptbelow}:  \CMPshort is FPT when parameterized by $k$ and $\bellowPara=\upperboud_{\R, G}-\lambda$.
 Throughout the section, we assume that $\lambda\leq\upperboud_{\R,G}$ in the considered instances 
 (otherwise, we have a trivial yes-instance by \Cref{lem:upperbound}).
We start with some notation and auxiliary results.

\begin{figure}[t]
\centering
\scalebox{0.7}{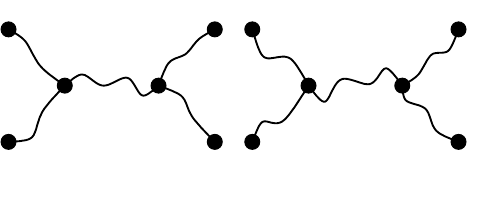}
\caption{Laminar codirectional (a) and opposite (b) $s_1$-$t_1$ and $s_2$-$t_2$-paths. }
\label{fig:laminar}
\end{figure}

Let $(s_1,t_1)$ and $(s_2,t_2)$ be pairs of distinct vertices of a graph $G$. We say that an $s_1$-$t_1$-path~$P_1$ and $s_2$-$t_2$-path $P_2$ are \emph{laminar} if either they have no common vertices or the intersection of $P_1$ and $P_2$ is a path in $G$ (see \Cref{fig:laminar}). Laminar paths $P_1$ and $P_2$ are \emph{codirectional}  if either they are disjoint or their intersection is an $x$-$y$-path $Q$ where either~${\dist_G(s_1,x)\leq \dist_G(s_1,y)}$ and 
${\dist_G(s_2,x)\leq \dist_G(s_2,y)}$ or  
${\dist_G(s_1,y)\leq \dist_G(s_1,x)}$ and~${\dist_G(s_2,y)\leq \dist_G(s_2,x)}$ (see \Cref{fig:laminar}(a)). Otherwise, we say that $P_1$ and~$P_2$ are \emph{opposite} (see \Cref{fig:laminar}(b)). We make the following easy observation.

\begin{observation}\label{obs:laminar}
For any two pairs $(s_1,t_1)$ and $(s_2,t_2)$ of distinct vertices of a graph $G$, $G$ has laminar shortest $s_1$-$t_1$- and $s_2$-$t_2$-paths. Moreover, for any given 
$s_2$-$t_2$-path
$P_2$, there is a shortest $s_1$-$t_1$-path $P_1'$ such that $P_1'$ and $P_2$ are laminar. \end{observation}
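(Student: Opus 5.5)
The first claim follows from the second, so I will prove the second. Fix an arbitrary shortest $s_2$-$t_2$-path $P_2=v_0,\ldots,v_m$ and any shortest $s_1$-$t_1$-path $P_1$. (In this paper's context $P_2$ is a shortest path, and that is all the argument uses.) If $P_1$ and $P_2$ share no vertex, they are already laminar and we are done. Otherwise, follow $P_1$ from $s_1$ and let $x$ be the first vertex of $P_1$ lying on $P_2$, and let $y$ be the last such vertex.

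Form $P_1'$ as the concatenation of three pieces: the subpath of $P_1$ from $s_1$ to $x$, the subpath of $P_2$ between $x$ and $y$, and the subpath of $P_1$ from $y$ to $t_1$. Two facts make this work.

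\textbf{$P_1'$ is a shortest path.} The piece $P_1[x,y]$ is a shortest $x$-$y$-path, since it is a subpath of a shortest path. The piece $P_2[x,y]$ is also a shortest $x$-$y$-path, since it is a subpath of the shortest path $P_2$. So both pieces have length $\dist_G(x,y)$, and replacing one by the other preserves the total length $\dist_G(s_1,t_1)$.

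\textbf{$P_1'$ is a simple path.} The prefix $P_1[s_1,x]$ meets $P_2$ only in $x$, by the choice of $x$. The suffix $P_1[y,t_1]$ meets $P_2$ only in $y$, by the choice of $y$. The prefix and suffix are disjoint from each other except possibly at $x=y$, because both are subpaths of the simple path $P_1$ and $x$ comes no later than $y$ on $P_1$. Hence the three pieces intersect only at their shared endpoints.

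\textbf{The paths are laminar.} From the same two facts, $V(P_1')\cap V(P_2)=V(P_2[x,y])$. This intersection is a contiguous subpath of $P_2$, and it is also a subpath of $P_1'$ traversed consecutively. So the intersection of $P_1'$ and $P_2$ is the path $P_2[x,y]$, and the two paths are laminar.

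The only delicate point is making sure no extra shared vertices or edges appear in $P_1'$. This is handled by taking $x$ and $y$ as the first and last common vertices along $P_1$. Note that no shared edges outside $P_2[x,y]$ can arise either, since any shared edge would have both endpoints in the vertex intersection.
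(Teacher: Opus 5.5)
Your proof is correct and follows the paper's own argument: take the first and last vertices $x,y$ of $P_1$ lying on $P_2$, replace $P_1[x,y]$ by $P_2[x,y]$, and use that subpaths of shortest paths are shortest. You are also right that $P_2$ must be a shortest path. The statement as written omits this, it fails without it, and the paper's proof assumes it silently. Your explicit checks that $P_1'$ is simple and that $P_1'\cap P_2$ is exactly $P_2[x,y]$ fill in what the paper leaves as ``by construction.''
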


\begin{proof}
Consider arbitrary shortest $s_1$-$t_1$ and $s_2$-$t_2$-paths $P_1$ and $P_2$. If the paths are disjoint, then they are laminar by definition. Otherwise, let $x$ and $y$ be the vertices of $V(P_1)\cap V(P_2)$ such that the vertices of the $s_1$-$x$ and $y$-$t_2$-subpaths of $P_1$ (except for~$x$ and $y$) are not in $P_2$; note that it may happen that $x=s_1$ or $y=t_2$. Consider the $s_1$-$t_1$-path~$P_1'$ obtained by concatenating the $s_1$-$x$-subpath of $P_1$, the $x$-$y$-subpath of $P_2$, and the $y$-$t_1$-subpath of $P_1$. Because every subpath of a shortest path is a shortest path, we have that~$P_1'$ is a shortest $s_1$-$t_1$-path. By construction, $P_1'$ and $P_2$ are laminar. This concludes the proof.  
\end{proof}

In the series of the next lemmas, we give sufficient conditions ensuring that an instance of \CMPshort is a yes-instance. 
The first condition is the existence of two codirectional laminar shortest paths of length $\Omega(\bellowPara)$. 

\begin{lemma}\label{lem:codir}
Let $\mathcal{I}=(G,k,\R=\{R_i=(s_i,t_i) \mid i \in [k]\},\lambda)$ be an instance of \CMPshort, and let $\bellowPara=\upperboud_{\R, G}-\lambda$. Suppose that there are distinct $i,j\in[k]$ such that
\begin{itemize}
\item[(i)] $\dist_G(s_i,t_i)\geq \bellowPara+2$ and $\dist_G(s_j,t_j)\geq \bellowPara+2$,
\item[(ii)] there are codirectional laminar shortest $s_i$-$t_i$ and $s_j$-$t_j$-paths.   
\end{itemize}
Then, $\mathcal{I}$ is a yes-instance.
\end{lemma}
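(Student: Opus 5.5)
Let $P_i$ and $P_j$ be the codirectional laminar shortest paths from (ii), with $d_i=\dist_G(s_i,t_i)$ and $d_j=\dist_G(s_j,t_j)$. The plan is to route $R_i$ and $R_j$ almost simultaneously, finishing both within about $\max(d_i,d_j)+1$ steps instead of $d_i+d_j+1$, and to route all other agents sequentially around this block as in \Cref{lem:upperbound}. Since $\min(d_i,d_j)\ge\bellowPara+2$, the saving is at least $\bellowPara$ plus a small slack of order one. That slack is used to absorb the extra separating steps between blocks. The target makespan is therefore at most
\[
\upperboud_{\R,G}-\min(d_i,d_j)+c\le\lambda
\]
for a small constant $c\le 2$.

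\textbf{Step 1: route the pair.} Assume $P_i\cap P_j$ is an $x$-$y$-path $Q$ traversed from $x$ to $y$ by both agents; the disjoint case is easier. Both agents follow their shortest paths without waiting, with start times $x_i,x_j$ chosen as follows.
\begin{itemize}
\item Start times are set so that the agent reaching $x$ first does so exactly one step ahead of the other; if both are at the same distance from their source to $x$, delay one of them by one step.
\item On $Q$ the agents then move in the same direction with a gap of one step, so they never share a vertex or an edge.
\item Off $Q$ the paths are vertex disjoint, except possibly at the endpoints $s_i,t_i,s_j,t_j$. For example $s_j$ may equal $t_i$, or $t_i=t_j$; both are allowed if the timing avoids simultaneous occupancy.
\end{itemize}
The two agents then finish within $\max(d_i,d_j)+2$ steps of the earliest start, possibly with one extra unit of delay. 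This gives a saving of at least $\min(d_i,d_j)-1\ge\bellowPara+1$ relative to routing them sequentially with the separating step.

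Since $x$ and $y$ need not lie at the same position on the two paths, the general requirement is
\[
x_i+\dist(s_i,v)\ne x_j+\dist(s_j,v)
\]
for every $v\in Q$. Because $\dist(s_i,v)-\dist(s_j,v)$ is constant along $Q$, a single offset handles the whole of $Q$. The remaining condition is no edge-swap conflict, which holds automatically under same-direction travel with a one-step gap.

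\textbf{Step 2: schedule the rest.} Contract $R_i,R_j$ into one block and order the block together with the remaining $k-2$ agents, reusing the ordering argument from the proof of \Cref{lem:upperbound}. Consecutive singletons are separated as there, with at most one extra step at a problematic terminal coincidence. Around the block, the first agent to appear and the last agent to disappear need care: a previous agent's target may coincide with $s_i$ or $s_j$. This is handled by inserting a single idle step, which the slack from $d\ge\bellowPara+2$ pays for.

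The main obstacle is the terminal coincidences. Examples are $s_j\in P_i$ or $t_i\in P_j$ at the moment the other agent passes, and adjacent blocks sharing terminals. I would first dispose of them by the one-step shifts above and verify the total overhead is at most $2$ beyond $\max(d_i,d_j)+\sum_{\ell\ne i,j}d_\ell$. If some configuration needs more, I would reorder so that the pair is placed first or last, and use the freedom to choose which of the two agents leads on $Q$.
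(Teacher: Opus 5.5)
Your Step 1 is the paper's idea and is sound. By laminarity every common vertex or edge of $P_i$ and $P_j$ lies on $Q$, so the coincidences you worry about ($s_j=t_i$, $t_i=t_j$, $s_j\in P_i$) are vertices of $Q$, not separate cases. Since $\dist_G(s_i,v)-\dist_G(s_j,v)$ is constant on $Q$, any nonzero start offset excludes all vertex and edge conflicts, including at a disappearing agent's target. Do not read your first bullet literally, though. Delaying the agent nearer to $x$ until it is ``exactly one step ahead'' can cost $|\dist_G(s_i,x)-\dist_G(s_j,x)|-1$ steps, which destroys the $\max(d_i,d_j)+\Oh(1)$ bound. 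What works, and what the paper does, is to start the agent nearer to $x$ at time $0$ and the other at time $1$, so both are done by $\max(d_i,d_j)+1$. In the disjoint case both start at $0$.

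The genuine gap is Step 2. It is a list of intentions rather than an argument, and in this lemma the off-by-one bookkeeping is the whole content. Write $\Sigma'=\sum_{\ell\ne i,j}d_\ell$. The hypothesis $\min(d_i,d_j)\ge\bellowPara+2$ gives exactly $\lambda\ge\max(d_i,d_j)+\Sigma'+3$ and no more. The quantities you actually state do not fit this budget:
\begin{itemize}
\item A pair finishing ``within $\max(d_i,d_j)+2$'', followed by a separating step and the schedule of \Cref{lem:upperbound}, gives $\max(d_i,d_j)+\Sigma'+4$.
\item Embedding the block in the middle of the ordering also gives $\max(d_i,d_j)+\Sigma'+4$. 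That comes from an idle step before and after the block, the extra step \Cref{lem:upperbound} may need, and the offset inside the block.
\end{itemize}
Your closing target of overhead $2$ beyond $\max(d_i,d_j)+\Sigma'$ is stricter than necessary, and the natural schedule does not achieve it in general.

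The missing observation is that no interleaving or re-derivation of the ordering is needed:
\begin{itemize}
\item Route the pair first, so it is done by time $\max(d_i,d_j)+1$.
\item Start the remaining $k-2$ agents at time $\max(d_i,d_j)+2$. This single idle step guarantees that $R_i,R_j$ have disappeared, which neutralizes every coincidence between $t_i,t_j$ and later start vertices.
\item Apply \Cref{lem:upperbound} to those agents as a black box, with makespan at most $\Sigma'+1$.
\end{itemize}
The total is $\max(d_i,d_j)+\Sigma'+3=\upperboud_{\R,G}-\min(d_i,d_j)+2\le\lambda$, which is precisely the paper's proof.
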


\begin{proof}
Let $P_i$ and $P_j$ be codirectional laminar shortest $s_i$-$t_i$ and $s_j$-$t_j$-paths,
and let $p=\dist_G(s_i,t_i)$, $q=\dist_G(s_j,t_j)$.
 
Assume first that $P_i$ and $P_j$ are disjoint. We assume without loss of generality that 
$p\leq q$. 
We define the routes of $R_i$ and $R_j$ as $(P_i,0)$ and $(P_j,0)$, respectively, that is, these agents move to their destinations along the shortest paths starting at the moment $0$. Since $P_i$ and $P_j$ are disjoint, the routes are not in conflict. They reach the destinations by the moment $q$. We route the remaining agents starting at the moment $q+1$. By \Cref{lem:upperbound}, they can be routed within the makespan at most 
$\sum_{h\in[k]\setminus\{i,j\}}\dist_G(s_h,t_h)+1$. Thus, we obtain a schedule with 
the total makespan at most
\begin{equation*}
\begin{aligned}
q+1+&\sum_{h\in[k]\setminus\{i,j\}}\dist_G(s_h,t_h)+1\\=&\sum_{h\in[k]\setminus\{i\}}\dist_G(s_h,t_h)+2\\
=&\sum_{h\in[k]}\dist_G(s_h,t_h)+2-p\\
=&\upperboud_{\R, G}+1-p.
\end{aligned}
\end{equation*}
Since $p\geq \bellowPara+2$, we have that the makespan of the schedule is at most
$\upperboud_{\R, G}-\bellowPara$. This means that $\mathcal{I}$ is a yes-instance. 

Suppose from now on that $P_i$ and $P_j$ have common vertices. Because $P_i$ and $P_j$ are laminar, their intersection is a path $Q$. Let $x$ and $y$ be the end-vertices of $Q$. Since the paths are codirectional, we can assume that
$\dist_G(s_i,x)\leq \dist_G(s_i,y)$ and $\dist_G(s_j,x)\leq \dist_G(s_j,y)$, that is, $x$ occurs first on the paths. Furthermore, we assume without loss of generality that
$\dist_G(s_i,x)\leq \dist_G(s_j,x)$.
We define the routes of $R_i$ and $R_j$ as $(P_i,0)$ and $(P_j,1)$, respectively, that is, these agents move to their destinations along the shortest paths, $R_i$ starts at the moment $0$, and $R_j$ starts one step later.
Because $\dist_G(s_i,x)\leq \dist_G(s_j,x)$, the routes are not conflicting. In particular, $R_j$ enters each vertex of the common part $Q$ after $R_i$ vacated it.   
Both $R_i$ and $R_j$ reach the destinations by the moment 
$\max\{p,q\}+1\leq p+q-(\bellowPara+2)+1=
p+q-\bellowPara-1$ as the length of $P_i$ and $P_j$ is at least $\bellowPara+2$.   
The remaining agents are routed starting at the moment $\max\{p,q\}+2\leq p+q-\bellowPara$ within the makespan 
$\sum_{h\in[k]\setminus\{i,j\}}\dist_G(s_h,t_h)+1$ using \Cref{lem:upperbound}. 
This gives a schedule with total makespan at most
\begin{equation*}
\begin{aligned}
p+q-\bellowPara+&\sum_{h\in[k]\setminus\{i,j\}}\dist_G(s_h,t_h)+1\\=&\sum_{h\in[k]}\dist_G(s_h,t_h)+1-\bellowPara\\
=&\upperboud_{\R, G}-\bellowPara.
\end{aligned}
\end{equation*}
We obtain that $\mathcal{I}$ is a yes-instance. This concludes the proof.
\end{proof}

The next sufficient condition is the existence of two opposite laminar shortest paths of length $\Omega(\bellowPara)$ such that one of the terminal vertices is sufficiently far from the common part of the paths.

\begin{lemma}\label{lem:opposite}
Let $\mathcal{I}=(G,k,\R=\{R_i=(s_i,t_i) \mid i \in [k]\},\lambda)$ be an instance of \CMPshort, and let $\bellowPara=\upperboud_{\R, G}-\lambda$. Suppose that there are distinct $i,j\in[k]$, such that
\begin{itemize}
\item[(i)] $\dist_G(s_i,t_i)\geq \bellowPara+2$ and $\dist_G(s_j,t_j)\geq \bellowPara+2$,
\item[(ii)] there are opposite laminar shortest $s_i$-$t_i$ and $s_j$-$t_j$-paths $P_i$ and $P_j$ with $Q=P_i\cap P_j$ being an $x$-$y$-path such that 
\begin{itemize}
\item $\dist_G(s_i,x)<\dist_G(s_i,y)$ and \\ $\dist_G(s_j,y)<\dist_G(s_j,x)$,
\item $\max\{\dist_G(s_i,x),\dist_G(t_i,y),\dist_G(s_j,y),\linebreak \dist_G(t_j,x)\}\geq \bellowPara+2$.
\end{itemize}
\end{itemize}
Then, $\mathcal{I}$ is a yes-instance.
\end{lemma}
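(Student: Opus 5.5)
Following \Cref{lem:codir}, the plan is to route $R_i$ and $R_j$ together so that they finish at least $\bellowPara+2$ steps before the sequential bound $p+q$, where $p=\dist_G(s_i,t_i)$ and $q=\dist_G(s_j,t_j)$. The remaining agents are then handled by \Cref{lem:upperbound}. Concretely, it suffices to find non-conflicting routes for $R_i$ and $R_j$ that both end by time $T\le p+q-\bellowPara-1$. Starting the other agents at time $T+1$ then gives makespan at most $T+1+\sum_{h\ne i,j}\dist_G(s_h,t_h)+1\le \upperboud_{\R,G}-\bellowPara$, which is the same arithmetic as in \Cref{lem:codir}.

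Write $a=\dist_G(s_i,x)$, $b=\dist_G(t_i,y)$, $c=\dist_G(s_j,y)$, $d=\dist_G(t_j,x)$, and $\ell=|Q|$. Then $p=a+\ell+b$ and $q=c+\ell+d$. The hypothesis $\dist_G(s_i,x)<\dist_G(s_i,y)$ together with $\dist_G(s_j,y)<\dist_G(s_j,x)$ says that $R_i$ traverses $Q$ from $x$ to $y$ and $R_j$ from $y$ to $x$. By the symmetry swapping the roles of $(i,x)$ and $(j,y)$, and by time reversal (which swaps $s$ with $t$), I may assume the maximum is attained by $a$ or by $b$, i.e., $a\ge \bellowPara+2$ or $b\ge\bellowPara+2$. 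The idea is to let one agent traverse $Q$ entirely while the other is still on its private part, so that the two only ever use $Q$ at disjoint times.

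In the case $a\ge\bellowPara+2$, route $R_j$ first along $P_j$ starting at time $0$; it leaves $Q$ at $x$ at time $c+\ell$ and reaches $t_j$ at time $q$. Start $R_i$ at time $\tau$ chosen so that it arrives at $x$ (at time $\tau+a$) just after $R_j$ has left $x$, say $\tau+a=c+\ell+1$. Both agents follow shortest paths with no waiting, so $R_i$ finishes at time $\tau+p=c+\ell+1+\ell+b$.

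The main obstacle is making sure the private segments do not conflict. The $s_i$-$x$ part of $P_i$ is disjoint from $P_j$ except at $x$, and $R_i$ reaches $x$ only after $R_j$ has left it; after that, $R_i$ is on $Q$ and then on $y$-$t_i$, where $R_j$ no longer is, apart from the segment $x$-$t_j$. That segment shares only $x$ with $P_i$ by laminarity, so it is disjoint from $P_i\setminus\{x\}$. One must also check a possible swap at $x$ or at time $\tau<0$. If $\tau<0$ (i.e., $a>c+\ell+1$), instead start $R_i$ at $0$ and delay $R_j$ symmetrically. The finishing time is $T=\max\{q,\,c+2\ell+b+1\}$ (or the symmetric expression). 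Comparing with $p+q=a+b+c+d+2\ell$ gives savings at least $\min\{a+2\ell+b+c+d-\ldots\}$, and in particular at least $a-1\ge\bellowPara+1$ in the second term. For the first term the saving is $p\ge\bellowPara+2$. I expect the off-by-one bookkeeping here, with $T\le p+q-\bellowPara-1$, to be the delicate point, possibly requiring the separation step $+1$ to be arranged so that no extra step is wasted. The case $b\ge\bellowPara+2$ is handled analogously by letting $R_i$ cross $Q$ first and then having $R_j$ enter $y$ just after $R_i$ leaves it, while $R_i$ walks its long tail $y$-$t_i$ in parallel.
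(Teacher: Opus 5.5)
Your main construction is the paper's. After reducing to $\dist_G(s_i,x)\ge\bellowPara+2$, the paper also routes $R_j$ on $(P_j,0)$ and $R_i$ on $(P_i,\tau)$ with $\tau=c+\ell+1-a$. The two loose ends you flag are harmless:
\begin{itemize}
\item There is no swap at $x$. At time $c+\ell+1$ the agent $R_i$ uses the edge $x^-x$ with $x^-\in V(P_i)\setminus V(Q)$, while $R_j$ uses $xx^+$ with $x^+\in V(P_j)\setminus V(Q)$, or has already disappeared if $x=t_j$.
\item There is no off-by-one problem. Your own computation $p+q-T\ge\min\{p,\,a+d-1\}\ge\bellowPara+1$ is already exactly the required bound $T\le p+q-\bellowPara-1$.
\end{itemize}

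The branch $\tau<0$ is wrong as written. If ``delay $R_j$ symmetrically'' means shifting $R_j$ by $|\tau|=a-c-\ell-1$ to keep the same relative timing at $x$, then $R_j$ finishes at $a+d-1$. The resulting saving $p+q-(a+d-1)=b+c+2\ell+1$ is not bounded below by $\bellowPara+1$. For a counterexample, take a tree with $s_i$ at distance $20$ from $x$, $Q=xy$, $y=t_i=s_j$, $t_j$ at distance $20$ from $x$, and $\bellowPara=10$. Then $R_j$ finishes at $39>p+q-\bellowPara-1=31$; with $k=2$ this even exceeds $\lambda=33$. The other reading, delaying $R_j$ so that it enters $y$ after $R_i$ leaves it, fails similarly, since its saving is $b+c-1$.

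In fact no delay is needed. When $a>c+\ell$, start both agents at time $0$. Then $R_j$ passes $x$ at time $c+\ell<a$, before $R_i$ arrives, so the routes do not conflict. The finishing time is $T=\max\{p,q\}\le p+q-\bellowPara-2$, because $p,q\ge\bellowPara+2$. This is precisely the paper's first case, $\dist_G(s_i,x)>\dist_G(s_j,x)$; equivalently, replace $\tau$ by $\max\{\tau,0\}$.

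The same correction applies to your $b\ge\bellowPara+2$ case whenever the required delay $a+\ell+1-c$ of $R_j$ is negative. Your construction for that case is otherwise fine: its saving is $b+c-1\ge\bellowPara+1$.
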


\begin{proof}
By symmetry, assume that $\dist_G(s_i,x)\geq \bellowPara+2$; in particular, note that if we exchange the source and the target terminals for each agent then we obtain an equivalent instance.   Let  $P_i$ and $P_j$ be paths of length $p$ and $q$, respectively. 

We consider two cases. First, assume that 
$\dist_G(s_i,x)>\dist_G(s_j,x)$. 
We define the routes of $R_i$ and $R_j$ as $(P_i,0)$ and $(P_j,0)$. Because $\dist_G(s_i,x)>\dist_G(s_j,x)$, $R_i$ leaves the vertex $x$ before $R_j$ enters this vertex. This implies that the routes of $R_i$ and $R_j$ are not conflicting. These agents reach the destinations by the moment $\max\{p,q\}\leq p+q-(\bellowPara+2)$ because $p,q\geq \bellowPara+2$. 
The remaining agents are routed starting at the moment $\max\{p,q\}+1\leq p+q-\bellowPara$. By \Cref{lem:upperbound}, they can be routed within the makespan at most 
$\sum_{h\in[k]\setminus\{i,j\}}\dist_G(s_h,t_h)+1$. Then we obtain a schedule with 
the total makespan at most
\begin{equation*}
p+q-\bellowPara+\sum_{h\in[k]\setminus\{i,j\}}\dist_G(s_h,t_h)+1
= \upperboud_{\R,G}-\bellowPara
\end{equation*}
implying that $\mathcal{I}$ is a yes-instance. 

In the second case, $\dist_G(s_i,x)\leq \dist_G(s_j,x)$. For $R_j$, we use the route $(P_j,0)$, that is, this agent is moving from $s_j$ to $t_j$ along $P_j$ starting at zero time. To define the route for $R_i$, we set 
$\ell=\dist_G(s_j,x)-\dist_G(s_i,x)+1$. Then the route is $(P_i,\ell)$. By the choice of $\ell$, $R_i$ moves along $P_i$ and reaches $x$ after this vertex was vacated by $R_j$. Thus, the routes of $R_i$ and $R_j$ are not conflicting. The agent $R_j$ reaches its destination at the moment $q$, and $R_i$ -- at the moment
$\ell+1+p=\dist_G(s_j,x)-\dist_G(s_i,x)+1+p$. 
The remaining agents are routed starting at the moment $\max\{\ell+1+p,q\}+1$, and for their routing we use a schedule with the makespan at most  
$\sum_{h\in[k]\setminus\{i,j\}}\dist_G(s_h,t_h)+1$ using \Cref{lem:upperbound}.
Notice that $q\leq p+q-\bellowPara-1$ because $p\geq \bellowPara+2$, and
$\dist_G(s_j,x)-\dist_G(s_i,x)+1+p\leq q-\dist_G(s_i,x)+1+p\leq p+q-\bellowPara-1$. 
Then $\max\{\ell+1+p,q\}+1\leq p+q-\bellowPara$, and 
the total makespan is at most 
\begin{equation*}
p+q-\bellowPara+\sum_{h\in[k]\setminus\{i,j\}}\dist_G(s_h,t_h)+1
=\upperboud_{\R,G}-\bellowPara.
\end{equation*}
Thus, $\mathcal{I}$ is a yes-instance. This completes the proof.
\end{proof}

In the following lemma, we show that if there exist two pairs of terminals at distances $\Omega(\bellowPara)$ such that for one of them, there is a shortest path allowing to give way for the second agent, then we have a yes-instance.

\begin{lemma}\label{lem:middle}
Let $\mathcal{I}=(G,k,\R=\{R_i=(s_i,t_i) \mid i \in [k]\},\lambda)$ be an instance of \CMPshort, and let $\bellowPara=\upperboud_{\R, G}-\lambda$. Suppose that there are distinct $i,j\in[k]$ such that
\begin{itemize}
\item[(i)] $\dist_G(s_i,t_i)\geq 2\bellowPara+4$ and $\dist_G(s_j,t_j)\geq \bellowPara+3$,
\item[(ii)] there is a shortest $s_i$-$t_i$-path $P_i$ containing a vertex $w$ at distance at least $\bellowPara+2$ from both $s_i$ and $t_i$ whose degree in $G$ is at least three. 
\end{itemize}
Then $\mathcal{I}$ is a yes-instance.
\end{lemma}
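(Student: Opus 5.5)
The plan is to reduce to the two preceding lemmas and, in the one case they do not cover, to let $R_i$ step aside at $w$ so that $R_j$ can pass. Then we route $R_i$ and $R_j$ simultaneously so that both are finished by time $p+q-\bellowPara-1$, where $p=\dist_G(s_i,t_i)$ and $q=\dist_G(s_j,t_j)$. After that, the remaining agents are routed from time $p+q-\bellowPara$ on by \Cref{lem:upperbound}, exactly as in the proofs of \Cref{lem:codir} and \Cref{lem:opposite}. This gives makespan at most $\upperboud_{\R,G}-\bellowPara$.

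First, apply \Cref{obs:laminar} to the given path $P_i$ to obtain a shortest $s_j$-$t_j$-path $P_j$ with $P_i,P_j$ laminar.
\begin{itemize}
\item If they are codirectional (in particular, disjoint), \Cref{lem:codir} applies, since $p,q\ge\bellowPara+2$.
\item Otherwise they are opposite with $Q=P_i\cap P_j$ an $x$-$y$-path, oriented so that $\dist_G(s_i,x)<\dist_G(s_i,y)$ and $\dist_G(s_j,y)<\dist_G(s_j,x)$. If one of $\dist_G(s_i,x)$, $\dist_G(t_i,y)$, $\dist_G(s_j,y)$, $\dist_G(t_j,x)$ is at least $\bellowPara+2$, then \Cref{lem:opposite} applies.
\end{itemize}
So assume all four distances are at most $\bellowPara+1$. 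Since $w$ is at distance at least $\bellowPara+2$ from both $s_i$ and $t_i$, it follows that $w$ is an internal vertex of $Q$. Let $w'$ be a neighbor of $w$ that is not one of its two neighbors on $P_i$; such a vertex exists because $\deg_G(w)\ge3$. I will check that $w'\notin V(P_j)$. We have $w'\notin V(P_i)\supseteq V(Q)$, and $P_j$ enters and leaves $w$ via its two $Q$-neighbors. If $P_j$ also visited $w'$, the shortest path $P_j$ would contain two vertices adjacent to $w$ that are non-consecutive with $w$, which is impossible.

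The routing is as follows. $R_i$ follows $P_i$ to $w$, arriving at some time $\tau$. It moves to $w'$ at time $\tau+1$, returns to $w$ at time $\tau+2$, and then continues along $P_i$ to $t_i$. This costs $2$ extra steps. $R_j$ follows $P_j$ without waiting, and its start is chosen so that it is at $w$ exactly at time $\tau+1$. The start times are shifted so that the earlier one is $0$.

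Conflict-freeness is checked as follows.
\begin{itemize}
\item Before time $\tau$, the two agents approach $w$ from opposite sides along $Q$, and $R_j$ is still at least two steps away from $w$, so they do not meet.
\item At time $\tau+1$, $R_i$ moves along $ww'$ while $R_j$ moves along a $Q$-edge into $w$. These are different edges and different vertices.
\item At time $\tau+2$, $R_j$ leaves $w$ towards the $s_i$-side while $R_i$ re-enters $w$ from $w'$ and then heads towards the $t_i$-side. From then on they diverge on $Q$, and outside $Q$ the paths are disjoint.
\end{itemize}

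The main obstacle is the timing estimate. It splits into two cases according to whether $R_i$ or $R_j$ starts at time $0$.
\begin{itemize}
\item If $R_i$ starts at $0$, it finishes at $p+2$. This is at most $p+q-\bellowPara-1$ because $q\ge\bellowPara+3$, which is where that hypothesis is used.
\item $R_j$ finishes at $\tau+1+\dist_G(w,t_j)$. I would bound $\dist_G(w,t_j)\le \dist_G(w,x)+\bellowPara+1$, and use $\dist_G(s_i,w)\ge\bellowPara+2$ and $\dist_G(w,t_i)\ge\bellowPara+2$ together with $p\ge 2\bellowPara+4$. The goal is to show that the delay of whichever agent starts later, which is $|\dist_G(s_i,w)+1-\dist_G(s_j,w)|$, is absorbed by the slack $\min(p,q)-\bellowPara-1$. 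Both $\dist_G(s_i,w)$ and $\dist_G(s_j,w)$ differ from the positions of $w$ along $Q$ by at most $\bellowPara+1$.
\end{itemize}
I expect this bookkeeping to require the full strength of the constants $2\bellowPara+4$ and $\bellowPara+3$. If a direct bound fails in one subcase, I would swap which agent waits at the start, or choose a different start time for $R_j$, to balance the two finishing times.
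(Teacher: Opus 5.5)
Your approach is the paper's: reduce to \Cref{lem:codir} and \Cref{lem:opposite}, conclude that $w$ is internal to $Q$, take a third neighbour of $w$ that lies off $P_i\cup P_j$, and let one agent step aside there while the other passes through $w$. Your chord argument also justifies $w'\notin V(P_i)$, which you only asserted. Your conflict check is correct. The paper splits on $\dist_G(s_i,w)\ge\dist_G(s_j,w)$ versus $<$ and lets $R_j$ do the side-step in the second case. You always let $R_i$ side-step, and that variant also works.

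What is missing is the timing verification, which you called the main obstacle but did not carry out. It closes directly, without the bound via $x$ and without swapping roles. Put $a=\dist_G(s_i,w)$ and $b=\dist_G(s_j,w)$. Since $w$ lies on the shortest path $P_i$, condition (ii) gives $\bellowPara+2\le a\le p-\bellowPara-2$, and trivially $0\le b\le q$.
\begin{itemize}
\item If $a+1\ge b$: $R_i$ starts at $0$ and finishes at $p+2\le p+q-\bellowPara-1$; this is the only use of $q\ge\bellowPara+3$. $R_j$ starts at $a+1-b$ and finishes at $a+1-b+q\le p+q-\bellowPara-1$, because $a\le p-\bellowPara-2$.
\item If $a+1<b$: $R_j$ starts at $0$ and finishes at $q\le p+q-\bellowPara-1$. $R_i$ starts at $b-a-1$ and finishes at $b-a+1+p\le p+q-\bellowPara-1$, because $b\le q$ and $a\ge\bellowPara+2$.
\end{itemize}

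Your stated target, that the start delay is absorbed by the slack $\min(p,q)-\bellowPara-1$, is not the right accounting. In the first case it is stronger than needed and can fail when $q<p$. In the second case it is too weak, because the late-starting agent $R_i$ also pays the two detour steps. There the delay $b-a-1$ must be at most $q-\bellowPara-3$, which holds exactly because $a\ge\bellowPara+2$.

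The correct bookkeeping compares each agent's finishing time with $p+q-\bellowPara-1$ separately. With the computation above inserted, your proof is complete. In the second case it is marginally tighter than the paper's, where $R_j$ finishes at $q+2$.
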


\begin{proof}
Let $P_i=(u_0,\ldots,u_p)$. By \Cref{obs:laminar}, there is a shortest $s_j$-$t_j$-path $P_j=v_0,\ldots,v_q$ such that $P_i$ and $P_j$ are laminar. If $P_i$ and $P_j$ are codirectional then $\mathcal{I}$ is a yes-instance by \Cref{lem:codir}. Assume that the paths are opposite. Let $Q=P_1\cap P_2$ be an $x$-$y$-path and assume that 
$\dist_G(s_i,x)<\dist_G(s_i,y)$. If $w$ is not an internal vertex of $Q$ then because 
$\dist_G(s_i,w)\geq \bellowPara+2$ and $\dist_G(t_i,w)\geq \bellowPara+2$, we have that 
$\dist_G(s_i,x)\geq \bellowPara+2$ or $\dist_G(t_i,y)\geq \bellowPara+2$. Then $\mathcal{I}$ is a yes-instance by \Cref{lem:opposite}. From now on, we assume that $w$ is an internal vertex of $Q$. 
Since $w$ is an internal vertex of $Q$ and the degree of $w$ is at least three, $w$ has a neighbor $z$ in $G$ that is not in $Q$. Taking into account that $P_i$ and $P_j$ are shortest paths, we have that $z\notin V(P_i)$ and $z\notin V(P_j)$.     
By symmetry, we assume that $\dist_G(s_j,w)\geq \dist_G(t_j,w)$ as, otherwise, we can replace $(s_h,t_h)$ by $(t_h,s_h)$ for every $h\in[k]$ and obtain an equivalent instance. We consider two cases.

Assume that $\dist_G(s_i,w)\geq \dist_G(s_j,w)$. We define the route of $R_i$ as 
\[((u_0,\ldots,u_r,z,u_r,\ldots,u_p),0),\] where $u_r=w$. In words, the agent starts moving at the moment $0$ and moves along $P_i$ until it reaches $w$. Then the agent moves to $z$, returns to $w$, and follows $P_i$ up to the destination.  
To construct the route for $R_j$, we set 
$\ell=\dist_G(s_i,w)-\dist_G(s_j,w)+1\geq 0$. Then the route is defined as 
$(P_j,\ell)$, that is, $R_j$ moves along $P_j$ staring at the moment $\ell$.  
Note that $R_i$ reaches its destination at the moment $p+2$, and $R_j$ arrives to $t_j$ at the moment $\ell+q$. By the definition, $R_j$ reaches $w$ exactly at the same moment when $R_i$ moves to $z$. Thus, the routes are not conflicting. 
As with the proofs of the previous lemmas, the other agents are routed one step later after the arrival of $R_i$ and $R_j$, that is, at the moment $\max\{p+2,\ell+q\}+1$, and they are routed within the makespan 
at most $\sum_{h\in[k]\setminus\{i,j\}}\dist_G(s_h,t_h)+1$ by making use of \Cref{lem:upperbound}. 

To upper bound the total makespan, note
that since $q\geq \bellowPara+3$, $p+2\leq p+q-\bellowPara-1$. 
Because $\dist_G(t_i,w)\geq \bellowPara+2$, we have that 
\begin{equation*}
\ell+q=\dist_G(s_i,w)-\dist_G(s_j,w)+1+q\leq p-(\bellowPara+2)+1+q\leq p+q-\bellowPara-1. 
\end{equation*}
Then the makespan is at most
\begin{equation*}
p+q-\bellowPara+\sum_{h\in[k]\setminus\{i,j\}}\dist_G(s_h,t_h)+1
= \upperboud_{\R,G}-\bellowPara,
\end{equation*}
and $\mathcal{I}$ is a yes-instance. 

Suppose that $\dist_G(s_i,w)<\dist_G(s_j,w)$. The routes of $R_i$ and $R_j$ are defined symmetrically to the previous case. We set $\ell=\dist_G(s_j,w)-\dist_G(s_i,w)+1$ and define the route of $R_i$ as $(P_i,\ell)$. For $R_j$, we assume that $w=v_r$ and 
set the route to be \linebreak $((v_0,\ldots,v_r,z,v_r,\ldots,v_q),0)$. By the same arguments as in the first case, the routes are not in conflict. We have that $R_i$ reaches $t_i$ at the moment
$\ell+p$, and $R_j$ arrives to $t_j$ at the moment $q+2$. Then the other agents are routed staring at the moment $\max\{\ell+p,q+2\}+1$ within the makespan at most $\sum_{h\in[k]\setminus\{i,j\}}\dist_G(s_h,t_h)+1$ using \Cref{lem:upperbound}. 
As $\dist_G(s_i,w)\geq \bellowPara+2$, 
$$\ell+p=\dist_G(s_j,w)-\dist_G(s_i,w)+1+p\leq q-(\bellowPara+2)+1+p=p+q-\bellowPara-1.$$
We also have that
$q+2\leq p+q-\bellowPara-1$ because $p\geq 2\bellowPara+4$. This implies that the total makespan is at most 
$\upperboud_{\R,G}-\bellowPara$ as in the first case. Thus, $\mathcal{I}$ is a yes-instance.  
This concludes the proof.
\end{proof}

\Cref{lem:codir}--\Cref{lem:middle} are used to show the following three sufficient conditions that are crucial for our algorithm.
In \Cref{lem:threepaths}, we show that the existence of three pairs of terminals at distances $\Omega(\bellowPara)$ is a sufficient condition for having a yes-instance.

\begin{lemma}\label{lem:threepaths}
Let $\mathcal{I}=(G,k,\R=\{R_i=(s_i,t_i) \mid i \in [k]\},\lambda)$ be an instance of \CMPshort, and let $\bellowPara=\upperboud_{\R, G}-\lambda$. Suppose that there are distinct $h,i,j\in[k]$, such that
\begin{itemize}
\item[(i)] $\dist_G(s_h,t_h)\geq 2\bellowPara+3$,
\item[(ii)] $\dist_G(s_i,t_i)\geq \bellowPara+2$ and $\dist_G(s_j,t_j)\geq \bellowPara+2$. 
\end{itemize}
Then, $\mathcal{I}$ is a yes-instance.
\end{lemma}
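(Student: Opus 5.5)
We would reduce everything to \Cref{lem:codir} and \Cref{lem:opposite}, using $R_h$ as a pivot. Fix a shortest $s_h$-$t_h$-path $P_h$ of length $p\geq 2\bellowPara+3$, and write $\mathrm{pos}(u)=\dist_G(s_h,u)$ for $u\in V(P_h)$. By \Cref{obs:laminar}, there are shortest paths $P_i$ and $P_j$ that are each laminar with $P_h$. If $P_i$ and $P_h$ are codirectional, then \Cref{lem:codir} applies to $h,i$, since both lengths are at least $\bellowPara+2$; the same holds for $j$. So from now on both $P_i$ and $P_j$ are opposite to $P_h$. In particular each of them meets $P_h$ in a subpath $Q_i$ (resp.\ $Q_j$) with at least one edge, traversed in the direction opposite to $P_h$. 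Write $Q_i$ as an $x_i$-$y_i$-path with $\mathrm{pos}(x_i)<\mathrm{pos}(y_i)$. Then, viewed from $R_i$'s side, $y_i$ comes first, and the hypotheses of \Cref{lem:opposite} are met with the roles of $x$ and $y$ as in that lemma.

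Next we use \Cref{lem:opposite} to force a large overlap. If $\dist_G(s_h,x_i)\geq\bellowPara+2$ or $\dist_G(t_h,y_i)\geq\bellowPara+2$, we are done by \Cref{lem:opposite} applied to $h$ and $i$. Otherwise $\mathrm{pos}(x_i)\leq\bellowPara+1$ and $\mathrm{pos}(y_i)\geq p-\bellowPara-1\geq\bellowPara+2$. The same bounds hold for $Q_j$. Hence both $Q_i$ and $Q_j$ contain the subpath $M$ of $P_h$ between positions $\bellowPara+1$ and $p-\bellowPara-1$. This $M$ has at least one edge, and both $P_i$ and $P_j$ traverse it in the same direction, namely from the $t_h$-side to the $s_h$-side.

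It remains to make $P_i$ and $P_j$ laminar without losing this orientation. Let $a$ and $b$ be the first and last vertices of $P_j$ that lie on $P_i$. Following the proof of \Cref{obs:laminar}, replace the $a$-$b$-part of $P_j$ by the $a$-$b$-subpath of $P_i$; this gives a shortest path $P_j'$ that is laminar with $P_i$, with intersection the $a$-$b$-subpath of $P_i$. The key step, and the one where care is needed, is to show that this intersection is codirectional. Pick distinct $m_1,m_2\in M$ with $m_1$ before $m_2$ on both $P_i$ and $P_j$. On $P_j$ we have $a\preceq m_1\prec m_2\preceq b$. Suppose $a$ came after $m_2$ on $P_i$. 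Since both paths are shortest, distances between their vertices are realized along them. Along $P_i$ this gives $\dist(a,m_1)=\dist(a,m_2)+\dist(m_1,m_2)$, while along $P_j$ it gives $\dist(a,m_1)=\dist(a,m_2)-\dist(m_1,m_2)$, a contradiction since $\dist(m_1,m_2)\geq 1$. Symmetrically, $b$ comes after $m_1$ on $P_i$. Hence $a$ precedes $b$ on both $P_i$ and $P_j'$, so they are codirectional laminar shortest paths of lengths at least $\bellowPara+2$, and \Cref{lem:codir} finishes the proof.

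The main obstacle is this last orientation argument: the intersection after rerouting must remain codirectional. The overlap $M$ needs at least one edge for the contradiction to work, and that is exactly what the bound $2\bellowPara+3$ on $\dist_G(s_h,t_h)$ guarantees.
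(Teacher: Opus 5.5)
Your route is the paper's. You pivot on $P_h$, discard the codirectional case with \Cref{lem:codir} and the ``far from the ends'' case with \Cref{lem:opposite}, extract a stretch of $P_h$ (your $M$, the paper's $Q$) that $P_i$ and $P_j$ both traverse from the $t_h$-side to the $s_h$-side, reroute one path as in \Cref{obs:laminar}, and finish with \Cref{lem:codir}. The paper reroutes $P_i$ onto $P_j$ rather than $P_j$ onto $P_i$, which is immaterial. Everything before the orientation step is correct. This includes the shortestness and laminarity of $P_j'$, which only need $a\preceq b$ on $P_j$.

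The orientation step, which you rightly flag as the crux, has a hole in its last inference. Your two distance arguments show that on $P_i$ we have $a\prec m_2$ and $m_1\prec b$. (You exclude $a$ after $m_2$, and $a\neq m_2$ because $a\preceq m_1\prec m_2$ on $P_j$; similarly for $b$.) These two facts do not give $a\prec b$ on $P_i$, because the order $m_1\prec b\prec a\prec m_2$ on $P_i$ is compatible with both, so ``Hence $a$ precedes $b$'' does not follow as written.

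The same trick repairs it. In that order, reading distances along $P_i$ gives $\dist(a,b)\le\dist(m_1,m_2)-2$. Along $P_j$, where $a\preceq m_1\prec m_2\preceq b$, one gets $\dist(a,b)\ge\dist(m_1,m_2)$, a contradiction.

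Alternatively, sharpen your first step to $a\preceq m_1$ on $P_i$ by also excluding $m_1\prec a\prec m_2$ there. Along $P_i$ we would have $\dist(m_1,m_2)=\dist(m_1,a)+\dist(a,m_2)$, and along $P_j$ we have $\dist(a,m_2)=\dist(a,m_1)+\dist(m_1,m_2)$; together these force $a=m_1$. Symmetrically, $m_2\preceq b$ on $P_i$.

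With either fix, $a\prec b$ on both $P_i$ and $P_j'$, so the two paths are codirectional and \Cref{lem:codir} applies. Your argument is then more explicit than the paper's one-line justification of codirectionality (``because $P_i$ and $P_j$ are shortest paths and $s_i$ is closer to $y$ than to $x$'').
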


\begin{proof}
Let $P_h$ be a shortest $s_h$-$t_h$-path. By \Cref{obs:laminar}, there are 
shortest $s_i$-$t_i$ and $s_j$-$t_j$-paths $P_i$ and $P_j$, respectively, such that $P_h,P_i$ and $P_h,P_j$ are laminar. If $P_h,P_i$ or $P_h,P_j$ are codirectional then $\mathcal{I}$ is a yes-instance by \Cref{lem:codir}. Assume that both $P_h,P_i$ and $P_h,P_j$ are opposite. Denote by $Q_1$ the $x_1$-$y_1$-path and by $Q_2$ the $x_2$-$y_2$-path that are common subpaths of $P_h,P_i$ and $P_h,P_j$, respectively. We assume that
$\dist_G(s_h,x_1)<\dist_G(s_h,y_1)$ and $\dist_G(s_h,x_2)<\dist_G(s_h,y_2)$.

If one of the distances $\dist_G(s_h,x_1)$, $\dist_G(s_h,x_2)$, $\dist_G(t_h,y_1)$, or $\dist_G(t_h,y_2)$ is at least $\bellowPara+2$ then $\mathcal{I}$ is a yes-instance by \Cref{lem:opposite}. Suppose that all these distances are at most $\bellowPara+1$. Let $x$ be the vertex of $P_h$ at distance $\max\{\dist_G(s_h,x_1),\dist_G(s_h,x_2)\}\leq \bellowPara+1$ from $s_h$, and let $y$ be the vertex at distance  $\max\{\dist_G(t_h,y_1),\dist_G(t_h,y_2)\}\leq \bellowPara+1$ from $t_h$. Since~$\dist_G(s_h,t_h)\geq 2\bellowPara+3$, we have that~$\dist_G(s_h,x)<\dist_G(s_h,y)$, that is, $x\neq y$. 
Note that $Q_1\cap Q_2$ have the common $x$-$y$-subpath $Q$, and 
$y$ is closer to $s_i$ and $s_j$ than $t_i$ and $t_j$, respectively.

Denote by $x'$ and $y'$ the vertices of $V(P_i)\cap V(P_j)$ such that the $t_i$-$x'$ and $s_i$-$y'$-subpaths of $P_i$ do not contain vertices of $P_j$ except $x'$ and $y'$, respectively. These vertices exists because $P_i$ and $P_j$ are not disjoint. 
Consider the $s_i$-$t_i$-path $P_i'$ obtained by concatenating the $s_i$-$y'$-subpath of $P_i$, the $y'$-$x'$-subpath of $P_j$, and the $x'$-$t_i$-subpath of $P_i$. Since  $P_i$ and $P_j$ are shortest path, we have that $P_i'$ and $P_j$ are laminar. 
Furthermore, because $P_i$ and $P_j$ are shortest paths and $s_i$ is closer to $y$ than to $x$, we have that $Q$ is a subpath of the $y'$-$x'$-subpath of $P_j$.
The paths~$P_i'$ and~$P_j$ are therefore codirectional. By \Cref{lem:codir}, we obtain that $\mathcal{I}$ is a yes-instance. This concludes the proof. 
\end{proof}

In \Cref{lem:onepath}, we give sufficient conditions for a yes-instance when there is a pair of terminals at sufficiently big distance.

\begin{lemma}\label{lem:onepath}
Let $\mathcal{I}=(G,k,\R=\{R_i=(s_i,t_i) \mid i \in [k]\},\lambda)$ be an instance of \CMPshort, and let $\bellowPara=\upperboud_{\R, G}-\lambda$. Let also $I=\{i\in[k]\mid \dist_G(s_i,t_i)\leq \bellowPara+1\}$. 
Suppose that 
\begin{itemize}
\item[(i)] there is an~$h\in[k]$ such that $5\bellowPara+4\leq\dist_G(s_h,t_h)\leq\upperboud_{\R,G}-\bellowPara$,
\item[(ii)] 
$\sum_{i\in I}\dist_G(s_i,t_i)\geq \bellowPara$.
\end{itemize}
Then, $\mathcal{I}$ is a yes-instance. Furthermore, if 
\begin{itemize}
\item[(a)] there is an~$h\in[k]$ such  that $3\bellowPara+2\leq\dist_G(s_h,t_h)$,
\item[(b)] either $|[k]\setminus I|=1$ and $\dist_G(s_h,t_h)\leq\upperboud_{\R,G}-\bellowPara$, or
$|[k]\setminus I|>1$ and $1+\sum_{i\in[k]\setminus I}\dist_G(s_i,t_i)\leq\upperboud_{\R,G}-\bellowPara$, and
\item[(c)] $\sum_{i\in I}\dist_G(s_i,t_i)<\bellowPara$,
\end{itemize}
then $\mathcal{I}$ is a yes-instance.
\end{lemma}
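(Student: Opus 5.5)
The plan is to route the long agent $R_h$ along a shortest $s_h$-$t_h$-path $P_h=(u_0,\ldots,u_p)$, where $p=\dist_G(s_h,t_h)$, starting at time $0$. While $R_h$ travels, short agents (those in $I$) are routed sequentially, each with the protocol of \Cref{lem:upperbound}, in time windows during which $R_h$ is far from their paths. Everything not handled in parallel with $R_h$ is then routed sequentially by \Cref{lem:upperbound}, starting one step after the parallel phase ends. The saving, compared with $\upperboud_{\R,G}$, is essentially the total length of the short agents absorbed into $R_h$'s travel.

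The key structural fact is the following. For $i\in I$, fix a shortest $s_i$-$t_i$-path $P_i$ of length at most $\bellowPara+1$. Every vertex of $P_i$ is then within distance $\bellowPara+1$ of $s_i$. Hence all vertices of $P_h$ lying within distance about $\bellowPara+1$ of $V(P_i)$ occupy a window of $P_h$ of length $\Oh(\bellowPara)$, because $P_h$ is a shortest path. I would use this fact to give each short agent an index interval of $P_h$ that $R_h$ must avoid while that agent moves.

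For the first statement, I would greedily pick $J\subseteq I$ with $\bellowPara\le\sum_{i\in J}\dist_G(s_i,t_i)\le 2\bellowPara+1$; this is possible by (ii) and because each term is at most $\bellowPara+1$. Routing $J$ sequentially takes at most about $2\bellowPara+2$ time steps. I would split $J$ into $J_1$, the agents whose window lies in the first half of $P_h$, and $J_2$, the rest. Agents in $J_2$ are routed during the first $\approx 2\bellowPara+2$ steps, while $R_h$ is still near $s_h$. Agents in $J_1$ are routed during the last such block, when $R_h$ is near $t_h$. The constant $5\bellowPara+4$ is chosen so that these two blocks fit inside $[0,p]$ and the windows are separated from $R_h$'s position at all relevant times. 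This also covers terminal coincidences with $s_h$ or $t_h$, since a terminal is itself a vertex of the path.

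The resulting makespan is $\max\{p,\text{end of parallel phase}\}+1$ plus the sequential cost of the remaining agents. This equals $\upperboud_{\R,G}-\sum_{i\in J}\dist_G(s_i,t_i)\le\upperboud_{\R,G}-\bellowPara$. The upper bound $p\le\upperboud_{\R,G}-\bellowPara$ in (i) handles the degenerate case where nothing remains after $R_h$, so that $R_h$ alone determines the makespan.

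For the second statement, all of $I$ has total length below $\bellowPara$. The same interleaving, with the threshold $3\bellowPara+2$, lets all agents of $I$ be routed while $R_h$ travels. If $|[k]\setminus I|=1$, the makespan is then just $p\le\upperboud_{\R,G}-\bellowPara$. Otherwise, the remaining long agents are routed sequentially after $R_h$, by \Cref{lem:upperbound}, giving makespan $1+\sum_{i\in[k]\setminus I}\dist_G(s_i,t_i)$, which is at most $\upperboud_{\R,G}-\bellowPara$ by (b).

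The main obstacle is the conflict-freeness bookkeeping in the interleaving. One must check that a short agent's entire route, including the boundary times when it appears and disappears, and the vertex $t_i$ possibly lying on $P_h$, stays away from $R_h$'s current vertex and edge. One must also check that the time budget fits within the given constants. I would first try to prove a clean claim: if $R_h$ is at $u_y$ and $\dist_G(u_y,V(P_i))>\bellowPara+2$ throughout agent $i$'s slot, then the routes do not conflict. Everything would then reduce to choosing slots by index arithmetic on $P_h$.
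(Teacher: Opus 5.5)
Your overall plan matches the paper's. The long agent walks a shortest path $P_h$, and the short agents of $J$ (respectively $I$) are slotted into two time blocks near the two ends of its journey, split by where their paths meet $P_h$. However, the makespan accounting, as you set it up, does not close.

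You start $R_h$ at time $0$ and then run \Cref{lem:upperbound} on the remaining agents from time $p+1$. That lemma only guarantees makespan $1+\sum\dist_G(s_i,t_i)$ over the agents it routes, and it cannot do better in general (e.g.\ two agents swapping the ends of a path). So your schedule ends at $p+2+\sum_{i\notin J\cup\{h\}}\dist_G(s_i,t_i)=\upperboud_{\R,G}+1-\sum_{i\in J}\dist_G(s_i,t_i)$, not at $\upperboud_{\R,G}-\sum_{i\in J}\dist_G(s_i,t_i)$. You pay the extra step once for the idle step after $R_h$ and once more inside \Cref{lem:upperbound}. Since (ii) allows $\sum_{i\in I}\dist_G(s_i,t_i)=\bellowPara$ exactly, enlarging $J$ cannot absorb this, and you overshoot $\lambda$ by one.

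The second statement has the same defect. Routing $[k]\setminus(I\cup\{h\})$ after $R_h$ costs $2+\sum_{i\in[k]\setminus I}\dist_G(s_i,t_i)$, while (b) only leaves room for $1+\sum_{i\in[k]\setminus I}\dist_G(s_i,t_i)$.

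The missing device is to let $R_h$ share the single extra step. Apply \Cref{lem:upperbound} to all of $[k]\setminus J$ (respectively $[k]\setminus I$), with $R_h$ included and routed along $P_h$ from whatever moment $\ell$ that schedule assigns. Then use property (ii) of that lemma (no other agent is in $G$ while $R_h$ is on an internal vertex of $P_h$) to insert the short agents into $[\ell+1,\ell+p-1]$. The makespan is then exactly that of the schedule for $[k]\setminus J$. Alternatively, you could prove that a sequential order starting with $R_h$ still needs only one idle step overall. That is true, but it is not what \Cref{lem:upperbound} gives you.

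A secondary problem: your separation condition is stronger than needed and cannot be met with the stated constants. The indices $y$ with $\dist_G(u_y,V(P_i))\leq\bellowPara+2$ can span about $3\bellowPara+5$. So with $p=5\bellowPara+4$ (or $3\bellowPara+2$ in the second part), a single short agent can be close to $R_h$ in both blocks. Short agents stay on their $P_i$, $R_h$ stays on $P_h$, and an edge conflict forces a common vertex. Hence it suffices that $u_y\notin V(P_i)$ during $i$'s slot, and the indices with $u_y\in V(P_i)$ span at most $\bellowPara+1$.

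You should also split at a fixed index rather than at ``the first half'': paths meeting $u_0,\ldots,u_{2\bellowPara+1}$ avoid every $u_j$ with $j\geq 3\bellowPara+3$. Keep each block within $2\bellowPara$ steps by taking $J=\{i\}$ when some $i\in I$ has $\dist_G(s_i,t_i)\geq\bellowPara$, and an inclusion-minimal $J$ (total at most $2\bellowPara-2$) otherwise. With fixed thresholds, your cruder bound $\sum_{i\in J}\dist_G(s_i,t_i)\leq 2\bellowPara+1$ leaves blocks that do not fit when $p=5\bellowPara+4$.
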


\begin{proof}
Both claims are shown by almost the same arguments. Thus, we prove the first claim and then explain how to modify the proof to show the second part of the lemma.

As the claim holds for $\bellowPara=0$ by \Cref{lem:upperbound}, we assume that $\bellowPara\geq 1$. 
We also assume without loss of generality that $5\bellowPara+4\leq\dist_G(s_1,t_1)$. Let $P_1=v_0,\ldots,v_p$ be a shortest $s_1$-$t_1$-path, and let $P_2,\ldots,P_k$ be shortest $s_i$-$t_i$-paths for $i\in[2,k]$, respectively. 

We select the set of indices $J\subseteq I$ as follows. If there is $i\in I$ such that $\dist_G(s_i,t_i)\geq \bellowPara$ then we set $J=\{i\}$ for arbitrary $i$ with this property. 
Otherwise, we greedily choose $J$ be an inclusion minimal set of indices such that
$\sum_{i\in J}\dist_G(s_i,t_i)\geq \bellowPara$. Notice that in this case,
$\sum_{i\in J}\dist_G(s_i,t_i)\leq 2\bellowPara-2$ since we have that $\dist_G(s_i,t_i)\leq \bellowPara-1$ for $i\in I$ when we make the greedy choice.

Then we construct the partition $\{J_1,J_2\}$ of $J$ (either $J_1$ or $J_2$ may be empty) where $J_1=\{i\in J\colon \text{ either }V(P_i)\cap V(P_1)=\emptyset\text{ or there is }j\in[0,2\bellowPara+1]\text { s.t. }v_j\in V(P_i)\}$ and $J_2=J\setminus J_1$. The condition that the paths $P_i$ for $i\in J$ are of length at most $\bellowPara+1$ implies the following claim.

\begin{claim}\label{cl:sep}
For every $i\in J_1$ and $j\in[3\bellowPara+3,p]$, $v_j\notin V(P_i)$, and for every 
$i\in J_2$ and $j\in[0,2\bellowPara+1]$, $v_j\notin V(P_i)$.
\end{claim}

\begin{proof}[Proof of \Cref{cl:sep}]
Notice that for every $i\in[0,2\bellowPara+1]$ and $j\in[3\bellowPara+3,p]$, $\dist_G(v_i,v_j)\geq \bellowPara+2$ as $P_1$ is a shortest path. Because for every $h\in[2,k]$, $\dist_G(s_h,t_h)\leq \bellowPara+1$, we have that there is no $P_h$ with $h\in[2,k]$ such that $P_h$ and $P_1$ have common vertices $v_i$ and $v_j$ for all  $i\in[0,2\bellowPara+1]$ and $j\in[3\bellowPara+3,p]$. This proves the claim
\end{proof}

Now we construct a schedule of makespan at most $\lambda$. 

First, we route the agents $R_i$ with $i\in [k]\setminus J$. Observe that these agents can be routed within the makespan $\lambda=\upperboud_{\R,G}-\bellowPara$ by \Cref{lem:upperbound} because $\sum_{i\in J}\dist_G(s_i,t_i)\geq \bellowPara$. 
Furthermore, they can be routed in such a way that ($*$)~the route of $R_1$ is along $P_1$, and ($**$)~whenever $R_1$ is in the internal vertices of $P_1$, no other vertex is occupied by any other agent.   
We consider such a schedule for $R_i$ with $i\in [k]\setminus J$, and assume that the route for $R_1$ is $(P_1,\ell)$. 

Next, we route the agents $R_i$ with $i\in J$. These agents are routed within the time interval $[\ell+1,\ell+5\bellowPara+3]$ using the paths $P_i$ for $i\in J$. Thus, because 
of properties~($*$) and ($**$) of the routing of the other agents and the condition that
$p\geq 5\bellowPara+4$, we only have to ensure that the routes of $R_i$ for $i\in J$ do not conflict with each other and the route of $R_1$.  

Consider the agents $R_i$ with $i\in J_2$. If $J_2=\emptyset$ then we do not route. If $|J_2|=1$ then the unique agent $R_i$ with $i\in J_2$
can be routed along $P_i$ within the makespan $\bellowPara+1\leq 2\bellowPara$ as $\dist_G(s_i,t_i)\leq \bellowPara+1\leq 2\bellowPara$. 
If $|J_2|\geq 2$ then $\sum_{i\in J_2}\dist_G(s_i,t_i)\leq\sum_{i\in I}\dist_G(s_i,t_i)\leq 2\bellowPara-2$, and the agents can be routed within the makespan $2\bellowPara-1<2\bellowPara$ by \Cref{lem:upperbound} via the paths $P_i$ for $i\in J_2$. In both cases, we can route the agents within the makespan $2\bellowPara$ using shortest paths avoiding conflicts between the routes. We use this schedule to route the agents with $i\in J_2$ starting at the moment $\ell+1$. Notice that the agents reach their destination by the moment $2\bellowPara+1$. By the choice of $J_2$ and \Cref{cl:sep}, we have that there are no conflicts between the agents $R_i$ for $i\in J_2$ and $R_1$.

The agents $R_i$ with $i\in J_1$ are routed symmetrically using the same arguments. They can be routed withing the makespan $2\bellowPara$ using $P_i$ for $i\in J_1$. We use this schedule to route them starting at the moment $\ell+3\bellowPara+3$. Then they reach their destination by the moment $\ell+5\bellowPara+3$. Therefore, we have that the routes do not conflict with each other and the route of $R_1$ by \Cref{cl:sep}. 

Since the agents $R_i$ with $i\in J_2$ and $R_j$ with $j\in J_1$ are routed within the time intervals $[\ell+1,\ell+2\bellowPara+1]$ and $[\ell+3\bellowPara+3,\ell+5\bellowPara+3]$, the routes of $R_i$ and $R_j$ are not conflicting. Thus, the constructed schedule is feasible. This proves the first claim.  

For the second part, we again assume that $\bellowPara\geq 1$, and because the claim is trivial for $k=1$, we let $k\geq 2$.
We also assume without loss of generality that $3\bellowPara+2\leq\dist_G(s_1,t_1)\leq\upperboud_{\R,G}-\bellowPara$. Let $P_1=v_0,\ldots,v_p$ be a shortest $s_1$-$t_1$-path, and let $P_2,\ldots,P_k$ be shortest $s_i$-$t_i$-paths for $i\in[2,k]$, respectively. 

Then we construct the partition $\{J_1,J_2\}$ of $I$ (either $J_1$ or $J_2$ may be empty) where $J_1=\{i\in I\colon \text{ either }V(P_i)\cap V(P_1)=\emptyset\text{ or there is }j\in[0,\bellowPara+1]\text { s.t. }v_j\in V(P_i)\}$ and $J_2=I\setminus J_1$. In the same way as with \Cref{cl:sep}, we have the following claim a the length of any $P_i$ for $i
\in[2,k]$ is at most $\bellowPara-1$. 

\begin{claim}\label{cl:septwo}
For every $i\in J_1$ and $j\in[2\bellowPara+1,p]$, $v_j\notin V(P_i)$, and for every 
$i\in J_2$ and $j\in[0,\bellowPara+1]$, $v_j\notin V(P_i)$.
\end{claim}

Then we construct a schedule of makespan at most $\lambda$. 

We route the agents $R_i$ with $i\in [k]\setminus I$. By \Cref{lem:upperbound} and condition (b) of the lemma, there is a routing of makespan at most  $\lambda=\upperboud_{\R,G}-\bellowPara$ such that ($*$)~the route of $R_1$ goes along $P_1$, and ($**$)~whenever $R_1$ is in the internal vertices of $P_1$, no other vertex is occupied by another agent. We assume that the route for $R_1$ is $(P_1,\ell)$. 

The agents $R_i$ with $i\in I$ are routed within the time interval $[\ell+1,\ell+3\bellowPara+1]$ using the paths $P_i$ for $i\in I$. As with the first part of the proof, because 
of properties~($*$) and ($**$) of the routing of the other agents and the condition that 
$p\geq 5\bellowPara+4$, we only have to ensure that the routes of $R_i$ for $i\in I$ do not conflict with each other and the route of $R_1$.  

First, we deal with the agents $R_i$ with $i\in J_2$. Since 
$\sum_{i\in J_2}\dist_G(s_i,t_i)\leq\sum_{i\in I}\dist_G(s_i,t_i)\leq \bellowPara-1$, the agents can be routed within the makespan $\bellowPara$ by \Cref{lem:upperbound} via the paths $P_i$ for $i\in J_2$.  This schedule is used to route the agents with $i\in J_2$ starting at the moment $\ell+1$. Notice that the agents reach their destination by the moment $\ell+\bellowPara+1$. By the choice of $J_2$ and \Cref{cl:septwo}, we have that there are no conflicts between the agents $R_i$ for $i\in J_2$ and $R_1$. The agents $R_i$ with $i\in J_1$ can be routed withing the makespan $\bellowPara$ using $P_i$ for $i\in J_1$, and we route them staring at the moment $\ell+2\bellowPara+1$. Then they reach their destination by the moment $\ell+3\bellowPara+1$. In the same way as before, the routes do not conflict with each other and the route of $R_1$ by \Cref{cl:septwo}. 

Since the agents $R_i$ with $i\in J_2$ and $R_j$ with $j\in J_1$ are routed within the time intervals $[\ell+1,\ell+\bellowPara+1]$ and $[\ell+2\bellowPara+1,\ell+3\bellowPara+1]$, the routes of $R_i$ and $R_j$ are not conflicting. This completes the proof.  
\end{proof}

In the following lemma, we provide sufficient conditions for a yes-instance when there are exactly two pairs of terminals at sufficiently big distances.

\begin{lemma}\label{lem:twopaths}
Let $\mathcal{I}=(G,k,\R=\{R_i=(s_i,t_i) \mid i \in [k]\},\lambda)$ be an instance of \CMPshort, and let $\bellowPara=\upperboud_{\R, G}-\lambda$. Suppose that there are distinct $i,j\in [k]$ such that 
\begin{itemize}
\item[(i)] $\dist_G(s_i,t_i)\geq 5\bellowPara+5$ and $\dist_G(s_j,t_j)\geq \bellowPara+2$,
\item[(ii)] the agents $R_i$ and $R_j$ can be routed within the makespan $\upperboud_{\R,G}-\bellowPara$ without conflicts with each other,
\item[(iii)] for every $h\in[k]$ such that $h\neq i,j$, $\dist_G(s_h,t_h)\leq \bellowPara+1$. 
\end{itemize}
Then, $\mathcal{I}$ is a yes-instance.
\end{lemma}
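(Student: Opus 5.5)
We would follow the template of the proof of \Cref{lem:onepath}, with the pair $R_i,R_j$ playing the role of the single long agent $R_1$. By symmetry, assume $i=1$ and $j=2$, and let $I=[3,k]$, so every $h\in I$ has $\dist_G(s_h,t_h)\le\bellowPara+1$ by~(iii). By~(ii) fix a conflict-free pair of routes for $R_1,R_2$ of makespan $\mu\le\upperboud_{\R,G}-\bellowPara$. If $I=\emptyset$ we are done. Otherwise we split into two cases according to $S=\sum_{h\in I}\dist_G(s_h,t_h)$, mirroring parts (ii) and (c) of \Cref{lem:onepath}.

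\emph{Case $S\ge\bellowPara$ (or more agents than needed).} Here we aim for a cheap schedule for $R_1,R_2$ rather than using (ii). Take laminar shortest paths $P_1,P_2$ (\Cref{obs:laminar}). If they are codirectional, or opposite with a far endpoint, \Cref{lem:codir} or \Cref{lem:opposite} already gives a yes-instance. Otherwise their common part covers all of $P_1$ except at most $\bellowPara+1$ vertices at each end. Then we route $R_1$ and $R_2$ sequentially along shortest paths as in \Cref{lem:upperbound}, which keeps properties ($*$),($**$): whenever $R_1$ is at an internal vertex of $P_1$, no other agent is present. We route all remaining agents except a set $J\subseteq I$ with $\bellowPara\le\sum_{h\in J}\dist_G(s_h,t_h)\le 2\bellowPara$ (chosen greedily as before) in sequence, saving $\bellowPara$ time steps. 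We then insert the agents of $J$ while $R_1$ traverses $P_1$. We split $J=J_1\cup J_2$ according to whether $P_h$ meets the first $2\bellowPara+2$ vertices of $P_1$. Since $|P_1|\ge5\bellowPara+5$, the separation claim analogous to \Cref{cl:sep} lets us run $J_2$ in the window right after $R_1$ starts and $J_1$ in the window before it ends, each within $2\bellowPara$ steps.

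\emph{Case $S<\bellowPara$.} Now we use the routes from~(ii). The agents of $I$ can be routed sequentially within $S+1\le\bellowPara$ steps, so the total length we need is $\mu$ plus nothing extra, provided we can insert them during the schedule of $R_1,R_2$ without conflicts. The plan is to look at the time window during which $R_1$ is on the middle stretch of its route, which is far from both its terminals. We choose a sub-window of length $\bellowPara+1$ near the beginning of $R_1$'s motion, and another near the end, separated by distance at least $\bellowPara+2$ along $R_1$'s route. We then place each agent of $I$ into whichever window avoids $R_1$, using the \Cref{cl:septwo} distance argument.

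\emph{Main obstacle.} The hard part is that $R_2$ is also moving during these windows, and the schedule from (ii) may use waiting or detours, so $R_1$'s walk need not be a shortest path. To handle this, we would first normalize the routes of (ii). Using the earlier lemmas, we exclude codirectional pairs and degree-$\ge3$ vertices in the middle of $P_1$ (via \Cref{lem:middle}, which applies since $5\bellowPara+5\ge2\bellowPara+4$). This forces the middle portion of $P_1$ to be an induced path of degree-2 vertices, which $R_2$ can only traverse when $R_1$ is absent. Consequently, in the optimal routing one agent essentially traverses this corridor while the other waits outside it. We then pick the insertion windows at times when $R_1$ is deep inside the corridor and $R_2$ is idle or far away, at distance at least $\bellowPara+2$ along the corridor from any short path that touches its ends. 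Making this normalization rigorous is where the real work lies. We would try first to show that $R_2$ lies entirely off the corridor's interior during a stretch of at least $2\bellowPara+2$ consecutive steps of $R_1$'s traversal.
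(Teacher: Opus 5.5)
\textbf{There is a genuine gap: the case $S<\bellowPara$ is only a plan, and the plan as stated does not work.}

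Your case $S\ge\bellowPara$ is fine, but it is exactly the first part of \Cref{lem:onepath} with $I=[3,k]$, so you can simply cite it; the laminar-path discussion there plays no role.

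The case $S<\bellowPara$ is the real content of the lemma. There you stop at a plan and explicitly leave its key step, the normalization of the schedule from (ii), unproved. The plan also rests on a dichotomy that is false in general. Delete the internal vertices of the degree-two corridor $Q'$. Then $s_2$ and $t_2$ (or $s_1$ and $t_1$) may still be connected. So an optimal schedule may send one long agent around $Q'$ on a detour while the other crosses $Q'$, both moving from time $0$. It is therefore not true that ``one agent traverses the corridor while the other waits outside it''. In that situation, inserting the short agents while $R_1$ is deep in the corridor gives no control over conflicts with $R_2$, which is travelling at the same time. Even when both agents cross $Q'$, knowing that $R_2$ is off the corridor's interior says nothing about whether $R_2$ meets the short paths $P_h$, $h\ge 3$. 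A smaller point: \Cref{lem:middle} requires $\dist_G(s_2,t_2)\ge\bellowPara+3$. You only get this after first using \Cref{lem:opposite} to obtain $\dist_G(s_2,t_2)\ge 3\bellowPara+3$.

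\textbf{The missing ideas, as the paper supplies them.}

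First, let $r$ be the \emph{minimum} makespan for $R_1,R_2$, and dispose of two cases up front. If $r\le\upperboud_{\R,G}-2\bellowPara-1$, route the short agents after $R_1,R_2$. If $r=\dist_G(s_1,t_1)+\dist_G(s_2,t_2)+1$, apply the second part of \Cref{lem:onepath}. The resulting bounds $\upperboud_{\R,G}-2\bellowPara\le r\le\dist_G(s_1,t_1)+\dist_G(s_2,t_2)$ drive both remaining cases.

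Case 1: both walks use every edge of $Q'$. The upper bound on $r$ forces them to cross $Q'$ in opposite directions, one after the other. One then rebuilds an optimal schedule in which one long agent follows its shortest path from time $0$. The other does not even enter the graph before time $2\bellowPara+1$; if necessary it steps into a side neighbor of $y'$ (resp.\ $x'$) to let the first one pass. The short agents go into the windows $[1,\bellowPara+1]$ and $[2\bellowPara+1,3\bellowPara+1]$ near the start of the early agent's path. There the late agent is either absent or within $\bellowPara+2$ steps of its distant start.

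Case 2: some edge of $Q'$ is avoided. One obtains an optimal schedule with both agents starting at $0$. One agent is on a shortest path in $G$; the other is on a shortest path in $G$ minus the interior of $Q'$, of length exactly $r$. The lower bound on $r$ guarantees that the first agent is close to its target or already gone during the last $3\bellowPara+1$ steps. The short agents are inserted there, split according to whether they touch the final stretch of the detour.

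Without this ``late start'' normalization and the separate treatment of the detour case, your argument does not go through.
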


\begin{proof}
 The claim is trivial for $\bellowPara=0$ and it holds for $k=2$ by (ii). Thus, we assume that $\bellowPara\geq 1$ and $k\geq 3$. We also assume without loss of generality that $\dist_G(s_1,t_1)\geq 5\bellowPara+7$, $\dist_G(s_2,t_2)\geq \bellowPara+2$, that is, $i=1$ and $j=2$. 
If $\sum_{i=3}^k\dist_G(s_i,t_i)\geq \bellowPara$ then $\mathcal{I}$ is a yes-instance by \Cref{lem:onepath}. Therefore, we assume that $\sum_{i=3}^k\dist_G(s_i,t_i)<\bellowPara$. In particular, $\dist_G(s_i,t_i)\leq \bellowPara-1$ for $i\in [3,k]$.

By condition~(ii), $R_1$ and $R_2$ can be routed within the makespan $\upperboud_{\R,G}-\bellowPara$.
Denote by $r$ the minimum makespan for routing $R_1$ and $R_2$. Suppose that 
$(\upperboud_{\R,G}-\bellowPara)-r\geq \bellowPara+1$. Then we can route $R_1$ and $R_2$ using the optimal routing starting at the moment $0$. Since they would reach their destination by the moment $r$, the other agents can be routed staring at the moment $r+1$ using the schedule from \Cref{lem:upperbound}. Because $\sum_{i=3}^k\dist_G(s_i,t_i)\leq \bellowPara-1$, the total makespan of the schedule is at most $r+1+\bellowPara\leq \upperboud_{\R,G}$ implying that $\mathcal{I}$ is a yes-instance. By \Cref{lem:upperbound}, $r\leq\dist_G(s_1,t_1)+\dist_G(s_2,t_2)+1$, and by condition~(ii) $r\leq \upperboud_{\R,G}-\bellowPara$. Then if $r=\dist_G(s_1,t_1)+\dist_G(s_2,t_2)+1$, $\mathcal{I}$ is a yes-instance by \Cref{lem:onepath}. 
From now on, we assume that these are not the cases, and  
$\upperboud_{\R,G}-2\bellowPara\leq r\leq\dist_G(s_1,t_1)+\dist_G(s_2,t_2)$.
 
Let $P_1=(u_0,\ldots,u_p)$ and $P_2=(v_0,\ldots,v_q)$ be laminar $s_1$-$t_1$ and $s_2$-$t_2$-paths, respectively, such paths exist by \Cref{obs:laminar}. For $h\in[3,k]$, denote by $P_i$ a shortest $s_i$-$t_i$-path.

\begin{figure}[t]
\centering
\scalebox{0.7}{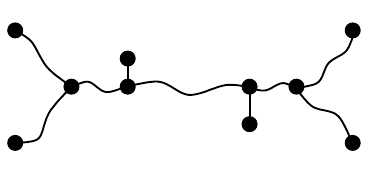}
\caption{The structure of $P_1$ and $P_2$. }
\label{fig:config}
\end{figure}

If $P_1$ and $P_2$ are codirectional then $\mathcal{I}$ is a yes-instance by \Cref{lem:codir}. Thus, we assume that they are opposite. Let $Q=P_1\cap P_2$ and denote by $x$ and $y$ its end-vertices assuming that $\dist_G(s_1,x)<\dist_G(s_1,y)$. If one of the distances $\dist_G(s_1,x)$, $\dist_G(t_2,x)$, $\dist_G(t_1,y)$, and $\dist_G(s_1,y)$ is at least $\bellowPara+2$ then $\mathcal{I}$ is a yes-instance by \Cref{lem:opposite}. Thus, we assume that these distances are at most $\bellowPara+1$. In particular, this means that 
$|\dist_G(s_1,t_1)-\dist_G(s_2,t_2)|\leq 2\bellowPara+2$ and, therefore, $\dist_G(s_2,t_2)\geq 3\bellowPara+3$.
If there is a vertex $w$ of degree at least three such that 
either $w\in V(P_1)$, $\dist_G(s_1,w)\geq \bellowPara+2$, and $\dist_G(t_1,w)\geq \bellowPara+2$, or 
$w\in V(P_2)$, $\dist_G(s_2,w)\geq \bellowPara+2$, and $\dist_G(t_2,w)\geq \bellowPara+2$, then $\mathcal{I}$ is a yes-instance by \Cref{lem:middle}. From now on, we assume that this in not the case. Then $Q$ contains an inclusion maximal $x'$-$y'$-subpath $Q'$ for some $x',y'\in V(Q)$ such that $\dist_G(s_1,x')\leq \bellowPara+1$,
$\dist_G(t_2,x')\leq \bellowPara+1$, $\dist_G(t_1,y')\leq \bellowPara+1$, $\dist_G(s_1,y')\leq \bellowPara+1$, and each internal vertex of $Q'$ has degree two in $G$. Notice that
$\dist_G(x,y)\geq\dist_G(x',y')\geq 3\bellowPara+3$. The obtained structure of $P_1$ and $P_2$ is shown in \Cref{fig:config}.

Recall that $R_1$ and $R_2$ can be routed with the makespan $r$. Denote by $(W_1,\ell_1)$ and $(W_2,\ell_2)$ the routes of $R_1$ and $R_2$, respectively, in such a schedule. 
We consider two cases depending on whether $R_1$ and $R_2$ traverse $Q'$ in this schedule. 

\subparagraph{Case~1.} For every edge~$e\in E(Q')$, both walks~$W_1$ and~$W_2$ traverse~$e$. 

\noindent To deal with this case, we use the following claim.

\begin{claim}\label{cl:route-one}
There is a schedule $\{(W_1',\ell_1'),(W_2',\ell_2')\}$ for $R_1$ and $R_2$ with the makespan $r$ such that either $\ell_1=0$, $W_1'=P_1$, and $\ell_2'\geq \dist_G(s_1,y')-\bellowPara-2$ or, symmetrically, $\ell_2=0$, $W_2'=P_2$, and $\ell_1'\geq \dist_G(s_2,x')-\bellowPara-2$.
\end{claim}

\begin{proof}[Proof of~\Cref{cl:route-one}]
Consider the walks $W_1$ and $W_2$. Because they traverse all edges of $Q'$, $W_1$ has a subwalk $A=(a_0,\ldots,a_f)$ such that $\{a_0,a_f\}=\{x',y'\}$ and $a_i\in V(Q')$ for $i\in[0,f]$, and $W_2$ has a subwalk $B=(b_0,\ldots,b_g)$ such that $\{b_0,b_g\}=\{x',y'\}$ and $b_i\in V(Q')$ for $i\in[0,g]$. 

For $W_1$, we claim that $a_0=x'$ and $a_f=y'$. For the sake of contradiction, assume that $W_1$ first visits $y'$ and then $x'$. Because $x'$ and $y$ are vertices of the shortest $s_1$-$t_1$-path $P_1$, we have that the length of $W_1$ is at least $\dist_G(s_1,t_1)+2\dist_G(x',y')$. Because $\dist_G(s_2,y'),\dist_G(t_2,x')\leq \bellowPara+1$, 
$\dist_G(s_2,t_2)\leq 2\bellowPara+2+\dist_G(x',y')<2\dist_G(x',y')$ as $\dist_G(x',y')\geq 3\bellowPara+3$. Thus, 
the length of $W_1$ is strictly more than $\dist_G(s_1,t_1)+\dist_G(s_2,t_2)$
contradicting that $r\leq \dist_G(s_1,t_1)+\dist_G(s_2,t_2)$. Thus, $W_1$ traverses $Q'$ from $x'$ to $y'$. Symmetrically, we have that $b_0=y'$ and $b_g=x'$, that is, $W_2$ goes from $y'$ to $x'$. 

Suppose that $R_1$ appears in $x'$ and start moving along $A$ before or at the same moment when $R_2$ appears in $y'$ and begins to follow $B$. Then because the internal vertices of $Q'$ have degree two in $G$, we have that $R_2$ can appear in $y'$ only when $R_1$ leaves this vertex. Thus, $R_2$ traverse $Q'$ after $R_1$. By the definition of $Q'$, either $t_1=y'$, or $s_2=y'$, or $y'$ has a neighbor $z$ in $G$ such that $z\notin V(P_1)$ (see \Cref{fig:config}). 

If $t_1=y'$, we set $W_1'=P_1$, $\ell_1'=0$, and we define $W_2'=P_2$ and $\ell_2'=\dist_G(s_1,y)-\dist_G(s_2,y')+1\geq \dist_G(s_1,y')-\bellowPara$. If $s_1=y'$, we let $W_1'=P_1$, $\ell_1'=0$, $W_2'=P_2$, and $\ell_2=\dist_G(s_1,y')+1\geq \dist_G(s_1,y')$.  In both cases, we have an optimal schedule because the routes are not conflicting,  
the agents follow shortest paths, and $R_2$ visits $y'$ immediately after $R_1$.

Suppose that $y'\neq t_1,s_2$. Then $y'$ has a neighbor $z\notin V(P_1)$. Recall that $P_2=(v_0,\ldots,v_q)$ and let $y'=v_h$ for $h\in[q]$. We define   
$W_1'=P_1$ and $\ell_1'=0$. For $R_2$, we set $W_2'=v_0,\ldots,v_h,z,v_h,\ldots,v_q$, that is, $R_2$ follows $P_2$ up to $y'$, then goes to $z$, returns to $y'$, and proceeds to $t_2$ by $P_2$, and set $\ell_2'=\dist_G(s_1,y')-\dist_G(s_2,y')-1\geq \dist_G(s_1,y')-\bellowPara-2$. As $z\notin V(P_1)$ and $R_2$ enters $y'$ after $R_1$, the routes do not conflict. Also, because $R_1$ follows a shortest path, and $R_2$ enters $y'$ immediately after this vertex is vacated by $R_1$ and then $R_2$ follows a shortest path to $t_2$, the schedule is optimal.  

The case when $R_2$ appears in $y'$ and start moving along $B$ before $R_1$ appears in $x'$ following $B$ is symmetric. By the same arguments as before, we have 
that we have an optimal schedule such that $\ell_2=0$, $W_2'=P_2$, and $\ell_1'\geq \dist_G(s_2,x')-\bellowPara-2$. This proves the claim.
\end{proof}

We use \Cref{cl:route-one} and assume that $R_1$ uses the route $(P_1=(u_0,\ldots,u_p),0)$ and $R_2$ starts its route $(W_2,\ell_2)$ at the moment $\ell_2\geq \dist_G(s_1,y')-\bellowPara-2$ (the other case is symmetric). Denote by $U$ the set of vertices visited by $W_2$ in the first $\bellowPara+2$ steps starting from $\ell_2$.
We route the agents $R_i$ for $i\geq 3$ within the time interval $[3\bellowPara+1]$. Notice that $\dist_G(s_1,t_1)\geq\dist_G(x',y')\geq 3\bellowPara+3$ and $3\bellowPara+1\leq p$. 

We construct the partition $\{J_1,J_2\}$ of $[3,k]$ (either $J_1$ or $J_2$ may be empty) 
where 
$J_1=\{i\in [3,k]\colon \text{there is }j\in[0,\bellowPara+1]\text { s.t. }u_j\in V(P_i)\}$ and $J_2=J\setminus J_1$. Then the following holds. 

\begin{claim}\label{cl:septhree}
For every $i\in J_1$ and $j\in[2\bellowPara+1,3\bellowPara+2]$,
$u_j\notin V(P_i)$,
and for every $i\in J_2$ and $j\in[0,\bellowPara+1]$, 
$u_j\notin V(P_i)$. Furthermore, for every $i\in J_1$, $V(P_i)\cap U=\emptyset$. 
\end{claim}

\begin{proof}
The first part holds by the same arguments as \Cref{cl:sep} because the length of each $P_i$ for $i\in[3,k]$ is at most $\bellowPara-1$. To see the second part, suppose that there is  
$i\in J_1$ and $w\in U$ such that $w\in V(P_i)$. Since $P_i$ contains a vertex $u_j$ for $j\in[0,\bellowPara+1]$ and $\dist_G(s_i,t_i)\leq \bellowPara-1$, we have that $\dist_G(x',u)\leq 2\bellowPara$. Since $u\in U$, $\dist_G(u,s_2)\leq \bellowPara+2$. Then $\dist_G(x',s_2)\leq 3\bellowPara+2$. This contradicts the fact that $\dist_G(x',s_2)\geq \dist_G(x',y')\geq 3\bellowPara+3$. This proves the claim.
\end{proof}

By \Cref{lem:upperbound}, the agents $R_i$ with $i\in J_2$ can be routed within the makespan $\bellowPara$ as $\sum_{i=3}^k\dist_G(s_i,t_i)\leq \bellowPara-1$. We route them in the time interval
$[1,\bellowPara+1]$. By \Cref{cl:septhree}, the paths $P_i$ for $i\in J_2$ have no conflicts with the route of $R_1$. Because $R_2$ does not start its route until the moment 
$\ell_2\geq \dist_G(s_1,y')-\bellowPara-2\geq 2\bellowPara+1$, the routes of $R_i$ for $i\in J_2$ has no conflicts with the route of $R_2$.

Then we route the agents $R_i$ with $i\in J_1$. Again, they can be routed within the makespan $\bellowPara$, and we route them in the time interval $[2\bellowPara+1,3\bellowPara+1]$. Note that the routes have no conflicts with the routes for $R_j$ with $j\in J_2$. By \Cref{cl:septhree}, we also have that the routes have no conflicts with the route of $R_1$. 
Because $\ell_2\geq 2\bellowPara+1$ and the agents $R_i$ for $i\in J_1$ reach their destinations by the moment $3\bellowPara+1$, any conflict between the routes of $R_i$ and $R_2$ can only occur within the first $\bellowPara+2$ steps of $R_2$ staring at the moment $\ell_2$. Thus, $R_2$ should be in a vertex of $U$.
However, $V(P_i)\cap U=\emptyset$ by \Cref{cl:septhree}. Therefore, there is no conflict between the routes of $R_i$ for $i\in J_1$ and $R_2$.

Summarizing, we obtain a schedule of the total makespan at most $r\leq \upperboud_{\R,G}-\bellowPara$. This concludes the analysis of the first case. 

\subparagraph{Case~2.} There is an edge $e\in E(Q')$ such that $W_1$ or $W_2$ do not traverse $e$. 

\noindent Denote by $G'$ the graph obtained from $G$ by the deletion of the internal vertices of $Q'$. Then we have the following schedule for $R_1$ and $R_2$.

\begin{figure}[t]
\centering
\scalebox{0.7}{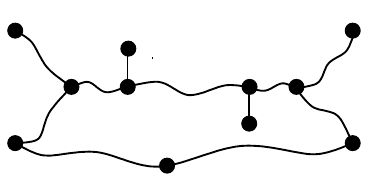}
\caption{The construction of  $P_2'$. }
\label{fig:rerout}
\end{figure}

\begin{claim}\label{cl:route-two}
There is a schedule $\{(P_1',0),(P_2',0)\}$ for $R_1$ and $R_2$ with the makespan $r$ such that either $P_1'=P_1$ and $P_2'$ is an arbitrary shortest $s_2$-$t_2$-path in $G'$ 
 or, symmetrically, $P_2'=P_2$ and $P_1'$ is an arbitrary shortest $s_1$-$t_1$-path in $G'$.
\end{claim}

\begin{proof}[Proof of \Cref{cl:route-two}]
Suppose that there is an edge of $Q'$ which is not traversed by $W_2$; the case when there is an edge of $Q'$ avoided by $W_1$ is symmetric. Then $s_2$ and $t_2$ are in the same connected component of $G'$. Let $P_2'$ be a shortest $s_2$-$t_2$-path (see~\Cref{fig:rerout}). Consider the routes $(P_1,0)$ and $(P_2',0)$ of $R_1$ and $R_2$, respectively. We claim that these routes do not conflict. By the definition of the routes, a possible conflict may happen only when $R_1$ goes through the $s_1$-$x'$ or $y'$-$s_2$-subpath 
of $P_1$. 
In the first case, $R_1$ should be at distance at most $\bellowPara$ from $x'$ in $G'$ as the conflict cannot happen at the moment $0$, and $R_1$ should be at distance at most $\bellowPara+1$ from $y'$ in the second. Note that when 
$R_1$ goes through the $s_1$-$x'$-subpath of $P_1$, $R_2$ is at distance at most 
$\dist_G(s_2,y')+\bellowPara+1\leq 2\bellowPara+2$ from $y'$. 
Since $\dist_{G'}(x',y')\geq \dist_G(x',y')\geq 3\bellowPara+3$, there is no conflict between $R_1$ and $R_2$. 
For the case when $R_1$ goes through the $y'$-$t_1$-subpath of $P_1$, note that 
because $\dist_G(x',y')\geq 3\bellowPara+3$, $R_2$ is at distance at least $3\bellowPara+3$ from 
$s_2$ in $G'$. Thus, $R_2$ is at distance at least $2\bellowPara+2$ from $y'$. This implies that we have no conflicts between the routes.

Clearly, the length of $P_1$ is at least $r$, and since 
$W_2$ does not traverse $Q'$, the length of $W_2$ is at least the length of $P_2'$.  
Therefore, the constructed schedule is optimal. This proves the claim.
\end{proof}

By using \Cref{cl:route-two}, we first assume that $R_1$ and $R_2$ use the routes $(P_1,0)$ and $(P_2',0)$ where $P_2'$ is a shortest $s_2$-$t_2$-path in $G'$. Recall that the minimum makespan $r\geq \upperboud_{\R,G}-2\bellowPara\geq \dist_G(s_1,t_1)+\dist_G(s_2,t_2)-2\bellowPara\geq 6\bellowPara+8$, that is, the length of $P_2'$ is at least $6\bellowPara+8$. Thus, $r$ is the length of $P_2'$.
Let $z$ be the vertex of $P_2'$ at distance $3\bellowPara+1$ from $t_2$ in $G'$, and denote by $R=(w_0,\ldots,w_{3\bellowPara+1})$ the $z$-$t_2$-subpath of $P_2'$ (see~\Cref{fig:rerout}). 

We define the partition $\{J_1,J_2\}$ of $[3,k]$ (either $J_1$ or $J_2$ may be empty) 
where 
$J_1=\{i\in [3,k]\colon \text{there is }j\in[2\bellowPara-1,3\bellowPara+1]\text { s.t. }w_j\in V(P_i)\}$ and $J_2=J\setminus J_1$. We have the following claim.

\begin{claim}\label{cl:sepfour}
For every $i\in J_1$ and $j\in[0,\bellowPara]$,
$w_j\notin V(P_i)$,
and for every $i\in J_2$ and $j\in[2\bellowPara-1,3\bellowPara+1]$, 
$w_j\notin V(P_i)$. 
Furthermore, for every $i\in J_1$ and every $j\in[p-2\bellowPara,p]$, $u_j\notin V(P_i)$.
\end{claim}

\begin{proof}

The first claim is fulfilled  by the same arguments as \Cref{cl:sep} as the length of each $P_i$ for $i\in[3,k]$ is at most $\bellowPara-1$. To see the second part, suppose that there is  
$i\in J_1$ and $j\in [p-2\bellowPara,p]$ such that $u_j\in V(P_i)$. 
Assume that $w_h\in V(P_1)$ for $h\in J_1$.
Notice that $w_h$ is at distance at most $\bellowPara+1$ from $t_2$ in $G$, $w_h$ is at distance at most $\bellowPara-1$ from 
$u_j$. Suppose that $u_j$ is at distance at most $\bellowPara+1$ from $y'$ in $G$. Then $t_2$ is at distance at most $3\bellowPara+1$ in $G$. However, this contradicts the fact that $\dist_G(t_2,y')\geq \dist_G(x',y')\geq 3\bellowPara+3$. Thus, $u_j$ is at distance at least $\bellowPara+2$ and at most $2\bellowPara$ from $y'$. Then $u_h$ is an internal vertex of $Q'$ as $y'$ is at distance at most $\bellowPara+1$ from $t_1$ and the length of $Q'$ is at least $3\bellowPara+1$. But in this case, we have that $P_i$ is a subpath of $Q'$ containing only internal vertices of $Q'$. Then $P_i$ cannot contain a vertex of $P'$.
This contradiction proves the claim.
\end{proof}

Now we can route the agents $R_i$ for $i\in[3,k]$ within the makespan $3\bellowPara$ in the time interval $[r-3\bellowPara-1,r-1]$. 

By~\Cref{lem:upperbound}, the agents $R_i$ for $i\in J_1$ can be routed within the makespan $\bellowPara$ because $\sum_{i=3}^k\dist_G(s_i,t_i)\leq \bellowPara-1$. We route them using this schedule starting at the moment $r-3\bellowPara-1$. By the first part of \Cref{cl:sepfour}, there is no conflict between $R_2$ and $R_i$ for every $i\in J_1$ because $r$ is the length of $P_2'$. Because $r\geq \dist_G(s_1,t_1)+\dist_G(s_2,t_2)-2\bellowPara\geq p+\bellowPara+3$,
$r-3\bellowPara-3\geq p-2\bellowPara$. Then by the second part of~\Cref{cl:sepfour}, there is no conflicts between $R_i$ for $i\in J_1$ and $R_1$.

Similarly to the agents $R_i$ with $i\in J_1$, the agents with $i
\in J_2$ can be routed within the makespan $\bellowPara$ by~\Cref{lem:upperbound}. We route them starting at the moment $r-\bellowPara-1$. Clearly, these agents have no conflicts with the agents $R_i$ for $i\in J_1$. Also, they have no conflicts with $R_2$ by the first part of \Cref{cl:sepfour}. Finally, because $r\geq p+\bellowPara+3$, $R_1$ is not in $G$ when the agents $R_i$ with $i\in J_2$ are routed. Therefore, we have no conflicts with $R_1$. 

Summarizing, we obtain a schedule with the makespan $r\leq\upperboud_{\R,G}-\bellowPara$. Thus, $\mathcal{I}$ is a yes-instance.

The case when $R_1$ and $R_2$ use the routes $(P_1',0)$ and $(P_2,0)$ where $P_1'$ is a shortest $s_1$-$t_1$-path in $G'$ is analyzed by the same arguments using symmetry. In fact, this case gets easier because we have that  
$r\geq \dist_G(s_1,t_1)+\dist_G(s_2,t_2)-2\bellowPara\geq q+3\bellowPara+5$. Thus, the agent $R_2$ following $P_2$ is going to leave the graph before the moment $r-3\bellowPara-1$. Because of this, we can route the agents $R_i$ for $i\in[3,k]$ within the time interval $[r-3\bellowPara-1,r-1]$ avoiding conflicts with $R_2$. The remaining arguments are the same as above. This concludes the proof.  
\end{proof}

Now we are ready to prove \Cref{thm:fptbelow}. We restate it here. 

\FPTBelowTheorem*

\begin{proof}
Let $\mathcal{I}=(G,k,\R=\{R_i=(s_i,t_i) \mid i \in [k]\},\lambda)$ be an instance of \CMPshort and let $\bellowPara=\upperboud_{\R, G}-\lambda$. If $\bellowPara=0$, then $\mathcal{I}$ is a yes-instance by \Cref{lem:upperbound}. Assume that $\bellowPara\geq 1$. We also assume that 
$\dist_G(s_1,t_1)\geq\dots\geq\dist_G(s_k,t_k)$.

If $\dist_G(s_1,t_1)>\lambda$, then we have a trivial no-instance.
If $k\geq 2$, we use the algorithm from \Cref{lem:XP} to check whether 
$(G,2,\{R_1,R_2\},\lambda)$ is a yes-instance. If $k=2$, then we solved the problem.
If $k\geq 3$ and we got the no-answer, then $\mathcal{I}$ is a no-instance.
Note that the algorithm runs in polynomial time for two agents.
From now on, we assume that~$k\geq 3$ and there is a schedule for $R_1$ and $R_2$ with makespan at most $\lambda$.

Next, we consider the cases depending on the distances between terminals and either use \Cref{lem:threepaths}, \Cref{lem:onepath}, \Cref{lem:twopaths} to argue that $\mathcal{I}$ is a yes-instance or apply \Cref{lem:fptlambda} to solve the problem.

Suppose first that $\dist_G(s_1,t_1)\leq 2\bellowPara+2$. Then, we have~$\sum_{i=1}^k\dist_G(s_i,t_i)\leq (2\bellowPara+2)k$ and~${\lambda\leq (2\bellowPara+2)k}$ as $\bellowPara\geq 1$. In this case,
we apply the algorithm from \Cref{lem:fptlambda} and solve the problem in $2^{\Oh(k^2\bellowPara)}\cdot n^{\Oh(1)}$ time. If~$\dist_G(s_1,t_1)\geq 2\bellowPara+3$ and $\dist_G(s_3,t_3)\geq \bellowPara+2$, then we conclude that $\mathcal{I}$ is a yes-instance by~\Cref{lem:threepaths}.  
Thus, from now on, we assume that~$\dist_G(s_i,t_i)\leq \bellowPara+1$ for each~$i\in [3,k]$. 

Suppose that $\dist_G(s_1,t_1)\leq 5\bellowPara+3$. Then,
\[\lambda\leq \sum_{i=1}^k\dist_G(s_i,t_i)\leq 2(5\bellowPara+3)+(k-2)(\bellowPara+1)= k(\bellowPara+1)+8\bellowPara+4.\]
We apply the algorithm from \Cref{lem:fptlambda} and solve the problem in $2^{\Oh(k^2\bellowPara)}\cdot n^{\Oh(1)}$ time.

From now on, assume that $\dist_G(s_1,t_1)\geq 5\bellowPara+4$. If~$\dist_G(s_2,t_2)\leq \bellowPara+1$, then 
$\mathcal{I}$ is a yes-instance by \Cref{lem:onepath}: if  $\sum_{i=2}^k\dist_G(s_i,t_i)\geq \bellowPara$ then we apply the first claim and use the second claim otherwise. Furthermore, if $\sum_{i=3}^k\dist_G(s_i,t_i)\geq \bellowPara$ then $\mathcal{I}$ is a yes-instance even if $\dist_G(s_2,t_2)\geq \bellowPara+2$ by the same lemma.

Now, we may assume~$\dist_G(s_2,t_2)\geq \bellowPara+2$ and~$\sum_{i=3}^k\dist_G(s_i,t_i)\leq \bellowPara-1$. If~$\dist_G(s_1,t_1)\geq 5\bellowPara+5$, then $\mathcal{I}$ is a yes-instance by \Cref{lem:twopaths}. Hence, it remains to consider the case when $\dist_G(s_1,t_1)\leq 5\bellowPara+4$. Here, we have that~$\lambda\leq\sum_{i=1}^k\dist_G(s_i,t_i)\leq 2(5\bellowPara+4)+\bellowPara-1=11\bellowPara+7$
and the problem can be solved in $2^{\Oh(k\bellowPara)}\cdot n^{\Oh(1)}$ time by \Cref{lem:fptlambda}. This concludes the case analysis.

Since the distances between terminals can be computed in polynomial time by standard algorithms~(see, e.g.,~\cite{CormenLRS09}), the overall running time is 
$2^{\Oh(k^2\bellowPara)}\cdot n^{\Oh(1)}$, concluding the proof.  
\end{proof}

 \section{The case of distinct terminals}\label{section:different-terminals}
In this section, we prove that \CMPshort is FPT when parameterized by only $\bellowPara=\upperboud_{\R, G}-\lambda$ if the terminals are pairwise distinct. The result is based on the following lemma.

\begin{lemma}\label{lem:three-dist}
Let $G$ be a connected graph and let $R_1=(s_1,t_1)$, $R_2=(s_2,t_2)$, and $R_3=(s_3,t_3)$ be agents such that the terminal vertices $s_1,s_2,s_3,t_1,t_2,t_3$ are distinct. Then there are distinct $i,j\in[3]$ such that the agents $R_i$ and $R_j$ can be routed with the makespan at most $\dist_G(s_i,t_i)+\dist_G(s_j,t_j)-1$ without conflicts.   
\end{lemma}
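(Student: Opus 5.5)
The baseline for two agents $R_i,R_j$ with distinct terminals is sequential routing: $R_i$ starts at time $0$ along a shortest path and arrives at time $d_i=\dist_G(s_i,t_i)$. If $s_j\neq t_i$, $R_j$ can start at moment $d_i$, since $R_i$ disappears exactly when $R_j$ appears and they are at different vertices. That gives makespan $d_i+d_j$. The goal is to save one more step by overlapping the two routes slightly. The simplest overlap is to let $R_j$ enter at moment $d_i-1$, so it appears while $R_i$ is on its last step. So I would first work out exactly when this, or the symmetric order, is conflict-free.

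Fix shortest paths $P_i=(u_0,\dots,u_{d_i})$ and $P_j=(v_0,\dots,v_{d_j})$, with $R_i$ routed as $(P_i,0)$ and $R_j$ as $(P_j,d_i-1)$. At time $d_i-1$, $R_i$ is at $u_{d_i-1}$ and $R_j$ is at $s_j$. At time $d_i$, $R_i$ is at $t_i$ and $R_j$ is at $v_1$, after which $R_i$ is gone. So the only possible conflicts are $s_j=u_{d_i-1}$, or $v_1=t_i$, or the edge swap $u_{d_i-1}t_i=s_jv_1$. The swap forces $s_j=t_i$, which is excluded because terminals are distinct. Hence this order works unless $s_j$ is the predecessor of $t_i$ on $P_i$, or $t_i$ is the successor of $s_j$ on $P_j$. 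Both conditions require $s_jt_i\in E(G)$. Conversely, if $s_jt_i\in E(G)$, then every shortest path either ends through $s_j$ or starts at $t_i$, or avoids both bad vertices. So I would first choose $P_i$ and $P_j$ to avoid these configurations. If $t_i$ has a predecessor on some shortest $s_i$-$t_i$-path other than $s_j$, or $s_j$ has a successor on some shortest $s_j$-$t_j$-path other than $t_i$, the overlap works.

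Call an ordered pair $(i,j)$ \emph{blocked} if it fails. Then $s_jt_i\in E(G)$, and additionally either every shortest $s_i$-$t_i$-path ends with the edge $s_jt_i$, or every shortest $s_j$-$t_j$-path starts with the edge $s_jt_i$. In particular, blocked implies adjacency of $s_j$ and $t_i$ together with a strong distance constraint: either $\dist(s_i,s_j)=d_i-1$ or $\dist(t_i,t_j)=d_j-1$. The lemma fails only if all six ordered pairs among $\{1,2,3\}$ are blocked. In that case $s_j t_i\in E(G)$ for all $i\neq j$, so $\{s_1,s_2,s_3\}$ and $\{t_1,t_2,t_3\}$ induce a graph containing $K_{3,3}$ minus a perfect matching, which is a $6$-cycle.

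The main step, and the one I expect to be the obstacle, is deriving a contradiction from all six pairs being blocked. Using the $6$-cycle $s_1t_2s_3t_1s_2t_3$, every distance $d_i=\dist(s_i,t_i)$ is at most $3$, since $s_i$ and $t_i$ are antipodal on the cycle. Each is also at least $2$, because if $s_it_i$ were an edge, then $s_i,t_i$ could not have the forced unique-path structure. So I would case-split on $d_i\in\{2,3\}$, where all the relevant graph structure lives on the six terminals plus a few middle vertices. For a blocked pair $(i,j)$ with $d_i=d_j=3$, all shortest paths are forced through prescribed edges. I would then try to exhibit an explicit overlapping schedule directly instead of the ``last step'' overlap. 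One option is to start both agents simultaneously along the cycle in the same rotational direction, for example $R_1$ along $s_1t_2s_3t_1$ and $R_2$ along $s_2t_3s_1t_2$ shifted appropriately, so that they behave like codirectional paths in \Cref{lem:codir}. Another is to let an agent step aside into a terminal of the third agent. I expect a short finite case analysis, using the rotational symmetry of the $6$-cycle, to show that some pair admits makespan $d_i+d_j-1$.
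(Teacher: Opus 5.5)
Your reduction is correct as far as it goes. With routes $(P_i,0)$ and $(P_j,d_i-1)$, the only possible conflicts are $u_{d_i-1}=s_j$ or $v_1=t_i$. So $(i,j)$ is blocked exactly when every shortest $s_i$-$t_i$-path has penultimate vertex $s_j$, or every shortest $s_j$-$t_j$-path has second vertex $t_i$. (Your earlier sentence saying the overlap works if a good $P_i$ \emph{or} a good $P_j$ exists should read ``and''; your definition of blocked is the correct negation.) A counterexample must block all six ordered pairs, which yields the $6$-cycle $s_1t_2s_3t_1s_2t_3$ and $d_i\le 3$.

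The genuine gap is that the decisive step, showing that six blocked pairs are impossible, is never carried out. You only announce a case split on $d_i\in\{2,3\}$ and consider abandoning the hand-off template for other schedules. The lower bound $d_i\ge 2$ is justified only by an unexplained ``forced unique-path structure''. As written, the proposal reduces the lemma to a finite configuration but does not prove it.

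The step closes by counting, with no new schedule. Say a blocked pair $(i,j)$ is witnessed by agent $i$ if all shortest $s_i$-$t_i$-paths have penultimate vertex $s_j$, and by agent $j$ if all shortest $s_j$-$t_j$-paths have second vertex $t_i$.

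An agent $h$ has at most one forced penultimate vertex and at most one forced second vertex. So at most six witnessing statements hold, and each covers exactly one ordered pair. Since all six pairs must be covered, every agent $h$ has a forced second vertex $t_\beta$ and a forced penultimate vertex $s_\alpha$ with $\alpha,\beta\neq h$.

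Since $t_\beta\neq t_h$, we get $d_h\ge 2$. Since $t_\beta\neq s_\alpha$, we get $d_h\ge 3$, hence $d_h=3$. But then $s_ht_js_kt_h$ and $s_ht_ks_jt_h$ (with $\{h,j,k\}=[3]$) are both shortest paths on your cycle with different second vertices, a contradiction.

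Completed this way, your proof genuinely differs from the paper's. The paper fixes a shortest path for the agent with the largest distance and chooses laminar shortest paths for the other two. It routes a codirectional pair with at most a one-step delay, and handles an opposite pair by delaying one agent until the other has cleared the shared segment. In the residual case, where both ends of the shared segment are terminals, it shows that $P_2$ and $P_3$ are forced to be codirectional. Your version avoids the laminar-path machinery entirely and shows that a single one-step hand-off overlap always suffices for some pair.
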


\begin{proof}
Assume that $\dist_G(s_1,t_1)\geq \dist_G(s_2,t_2),\dist_G(s_3,t_3)$, and let $P_1=(u_0,\ldots,u_p)$ be a shortest $s_1$-$t_1$-path. We select shortest $s_2$-$t_2$ and $s_3$-$t_3$-paths $P_2$ 
and 
$P_3$,
respectively, in such a way that $P_1,P_2$ and $P_1,P_3$ are laminar using \Cref{obs:laminar}. 

Suppose that there is $i\in[2,3]$ such that $P_1$ and $P_i$ are codirectional. We assume without loss of generality that $i=2$. 
If $P_1$ and $P_2$ have no common vertices then the routes $(P_1,0)$ and $(P_2,0)$ are not conflicting, and we have a routing with the makespan
$\dist_G(s_1,t_1)\leq \dist_G(s_1,t_1)+\dist_G(s_2,t_2)-1$.
Assume that $V(P_1)\cap V(P_2)\neq\emptyset$. Note that $\dist_G(s_1,t_1)\geq 2$ in this case. 
Because $P_1$ and $P_2$ are laminar, their intersection is an $x$-$y$-path $Q$ that is a common subpath of $P_1$ and $P_2$. We assume that $\dist_G(s_1,x)\leq \dist_G(s_1,y)$. 

Suppose that $\dist_G(s_1,x)\neq \dist_G(s_2,x)$. Then the routes $(P_1,0)$ and $(P_2,0)$ are not conflicting because the agent whose $s$-terminal is closer to $x$ reaches $x$ first and then follows the route. Thus, we can assume that $\dist_G(s_1,x)=\dist_G(s_2,x)$. Because $s_1\neq s_2$, we can route $R_1$ and $R_2$ using $(P_1,0)$ and $(P_2,1)$, respectively. Then because $R_2$ reaches $x$ after $R_1$, there are no conflicts. As the terminals are distinct and $\dist_G(s_1,t_1)\geq\dist_G(s_2,t_2)$, we have that $\dist_G(s_1,t_1)\geq 2$.
Then the total makespan is
$\max\{\dist_G(s_1,t_1),\dist_G(s_2,t_2)+1\}$.
 If   $\dist_G(s_1,t_1)>\dist_G(s_2,t_2)$ then 
the makespan is 
$\dist_G(s_1,t_1)\leq \dist_G(s_1,t_1)+\dist_G(s_2,t_2)-1$. If $\dist_G(s_1,t_1)=\dist_G(s_2,t_2)$ then  
the makespan is $\dist_G(s_2,t_2)+1\leq \dist_G(s_1,t_1)+\dist_G(s_2,t_2)-1$ as $\dist_G(s_1,t_1)\geq 2$.
In both cases, the makespan is upper bounded 
by $\dist_G(s_1,t_1)+\dist_G(s_2,t_2)-1$. Thus, the claim holds if $P_1$ and $P_2$ are codirectional.  

Suppose now that $P_1,P_2$ and $P_1,P_3$ are opposite. 
Consider $P_1$ and $P_2$, and let an $x$-$y$-path $Q$ be the common subpath of $P_1$ and $P_2$ where $\dist_G(s_1,x)<\dist_G(s_1,y)$.  

Assume that $x\notin \{s_1,t_2\}$. If $\dist_G(s_1,x)>\dist_G(s_2,x)$ then the routes $(P_1,0)$ and $(P_2,0)$ are not conflicting, and the makespan for this schedule is $\dist_G(s_1,t_1)\leq\dist_G(s_1,t_1)+\dist_G(s_2,t_2)-1$. Let $\dist_G(s_1,x)\leq\dist_G(s_2,x)$.
Consider the routes $(P_1,\dist_G(s_2,x)-\dist_G(s_1,x)+1)$ and $(P_2,0)$ of $R_1$ and $R_2$, respectively. Because $R_1$ enters $x$ after this vertex is vacated by $R_2$, the routes are not conflicting. Since $x\notin \{s_1,t_2\}$, $\dist_G(s_2,x)-\dist_G(s_1,x)+1\leq \dist_G(s_2,t_2)-1$.
As $\dist_G(s_1,t_1)\geq \dist_G(s_2,t_2)$, the makespan of the schedule is 
$\dist_G(s_1,t_1)+\dist_G(s_2,x)-\dist_G(s_1,x)+1\leq\dist_G(s_1,t_1)+\dist_G(s_2,t_2)-1$.

By symmetry, it remains to consider the case when $x\in\{s_1,t_2\}$ and $y\in\{s_2,t_1\}$. Because the terminals are distinct and $\dist_G(s_1,t_1)\geq \dist_G(s_2,t_2)$, we have that 
either $P_2=(u_{p-1},\ldots,u_1)$, or $P_2=(s_2,u_p,\ldots,u_1)$, or 
$P_2=u_{p-1},\ldots,u_0,t_2$. By the same arguments, either $P_3=(u_{p-1},\ldots,u_1)$, or $P_3=(s_3,u_p,\ldots,u_1)$, or 
$P_3=(u_{p-1},\ldots,u_0,t_3)$. Then, $P_2$ and $P_3$ are laminar codirectional paths. By the same arguments as in the case when $P_1$ is codirectional with $P_2$ or $P_3$, we have that $R_2$ and $R_3$ can be routed with the makespan $\dist_G(s_2,t_2)+\dist_G(s_3,t_3)-1$. This concludes the proof.
\end{proof}

By \Cref{lem:three-dist}, we can show the following.

\begin{lemma}\label{lem:upperbound-dist}
Let $(G, k, \R, \lambda)$ be an instance of \CMPshort such that the terminals are pairwise distinct.  
If $\lambda \geq \sum_{i\in[k]} \dist(s_i,t_i)-\big\lfloor\frac{k-1}{2}\big\rfloor$, then $(G, k, \R, \lambda)$ is a yes-instance.  
\end{lemma}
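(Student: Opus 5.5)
We prove the statement by repeatedly pairing agents via \Cref{lem:three-dist} and then concatenating the pair-schedules in time. For $k\le 2$ we have $\lfloor\frac{k-1}{2}\rfloor=0$. With distinct terminals the agents can be routed purely sequentially: each one starts exactly when its predecessor disappears. Since $t_{i-1}\neq s_i$, no extra step is needed, so the makespan is $\sum_{i\in[k]}\dist(s_i,t_i)$. This is the first case in the proof of \Cref{lem:upperbound}, where $\ell=k$.

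For general $k$, set $m=\lfloor\frac{k-1}{2}\rfloor$. We greedily extract $m$ disjoint pairs. As long as at least three agents remain unpaired, apply \Cref{lem:three-dist} to any three of them. This yields a pair $\{R_i,R_j\}$ that can be routed without conflicts within makespan $\dist(s_i,t_i)+\dist(s_j,t_j)-1$. Remove this pair and repeat. Each round removes two agents and needs three to be available, so we obtain exactly $m$ pairs when $k$ is odd, since $k-2m=1$ agent is left over. When $k$ is even, $k-2m=2$ agents remain, but a third agent is no longer available, so again exactly $m$ pairs are guaranteed. The remaining one or two agents are treated as singletons.

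Next we schedule the blocks one after another: each pair, and each singleton, is one block. Number the blocks $B_1,\dots,B_b$. Block $B_r$ is run with its internal schedule shifted to start at time $T_{r-1}$, where $T_0=0$ and $T_r=T_{r-1}+\mu_r$, with $\mu_r$ the makespan of $B_r$. Thus block $r$ occupies the graph only during $[T_{r-1},T_r]$. The only possible conflicts between consecutive blocks occur at the boundary moment $T_r$. At that moment the agents of $B_r$ that are still present are exactly those arriving at their destinations, which they then vacate. The agents of $B_{r+1}$ that are present are those appearing at their starts. We must check that no start of $B_{r+1}$ coincides with a destination occupied at that moment. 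Since all terminals are pairwise distinct, $s_a\neq t_b$ for all agents, so there is no vertex conflict. There is also no edge conflict, since no edge is traversed at the boundary step by both blocks. Agents of a pair that finish strictly earlier have already disappeared.

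The total makespan is therefore $\sum_r\mu_r\le \sum_{i\in[k]}\dist(s_i,t_i)-m\le\lambda$, which proves the lemma. The main obstacle is the convention for block boundaries. We must make sure that the disappearance semantics really allow an agent to appear at time $T_r$ while another agent reaches its target at time $T_r$. This matches the sequential routing in \Cref{lem:upperbound}, where the next agent starts exactly when the previous one arrives, and we argue it the same way there.
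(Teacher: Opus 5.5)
Your proof is correct and follows the paper's own argument: greedily extract $\lfloor\frac{k-1}{2}\rfloor$ disjoint pairs via \Cref{lem:three-dist}, run the pair schedules one after another, append the leftover agents sequentially along shortest paths, and use the pairwise distinctness of terminals to rule out conflicts at the block boundaries. Your explicit check of the boundary moment $T_r$ spells out what the paper leaves implicit. Vertex conflicts are excluded by $s_a\neq t_b$. Edge conflicts cannot occur because an agent that has just appeared has not yet moved.
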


\begin{proof}
If $k=1$ then the claim is trivial. If $k=2$ then we route $R_1$ and $R_2$ using shortest~$s_1$-$t_1$- and $s_2$-$t_2$-paths~$P_1$ and $P_2$, respectively, using the routes 
$(P_1,0)$, $(P_2,\dist_G(s_1,t_1))$. Since~${t_1\neq s_2}$, there are no conflicts and this proves the claim. Assume that $k\geq 3$.

We greedily select $r=\big\lfloor\frac{k-1}{2}\big\rfloor$ pairs $\{i_1,j_1\},\ldots,\{i_r,j_r\}$ of indices such that all the indices are distinct and for every $h\in[r]$, the agents
$R_{i_h }$ and $R_{j_h}$ can be routed with the makespan at most $\dist_G(s_{i_h},t_{i_h})+\dist_G(s_{j_h},t_{j_h})-1$ without conflicts. This is possible due to \Cref{lem:three-dist}.  We fix these routes and concatenate them, that is $R_{i_1}$ and $R_{j_1}$ start at the moment $0$, and for each $h\in[2,r]$, $R_{i_h}$ and $R_{j_h}$ are routed 
when $R_{i_{h-1}}$ and $R_{j_{h-1}}$ reach their destinations. Then, we consecutively append the routes for $h\in[k]\setminus(\{i_1,\ldots,i_r\}\cup\{j_1,\ldots,j_r\})$ using shortest~$s_h$-$t_h$-paths. Because the terminals are distinct, we obtain a feasible schedule with the total makespan at most $\sum_{i\in[k]} \dist(s_i,t_i)-\big\lfloor\frac{k-1}{2}\big\rfloor$. This proves the lemma.
\end{proof}

We remark that \Cref{lem:upperbound-dist} implies that if the terminals are distinct then
$(G, k, \R, \lambda)$ is a yes-instance of \CMPshort if 
$\lambda \leq\upperboud_{\R,G}^* := \sum_{i\in[k]} \dist(s_i,t_i)$.

Combining \Cref{thm:fptbelow} and \Cref{lem:upperbound-dist}, we obtain the algorithmic result about \CMPshort parameterized below $\upperboud_{\R,G}^*$ for distinct terminals.

\FPTBelowTheoremTwo*

\begin{proof}
Let  $(G,k,\R=\{R_i=(s_i,t_i) \mid i \in [k]\},\lambda)$ be an instance of \CMPshort with pairwise distinct terminals and $\lambda=\upperboud_{\R,G}^*-\bellowPara$. If $\bellowPara\leq \big\lfloor\frac{k-1}{2}\big\rfloor$, this is a yes-instance by \Cref{lem:upperbound-dist}.
Otherwise, $k\leq 2\bellowPara$ and we can apply the algorithm from \Cref{thm:fptbelow} and solve the problem in~$2^{\Oh(\bellowPara^3)}\cdot n^{\Oh(1)}$~time.
\end{proof}

\section{Computational lower bounds}
\label{sec:hardness}
In this section, we prove our hardness results.
For this, it will be convenient to consider \textsc{Layered Vertex-Disjoint Shortest Paths}~\cite{BFG25}.
A graph is layered if there is a partition of the vertices into (disjoint) sets~$V_1,V_2,\ldots,V_{\lambda}$ such that all edges~$\{u,v\}$ are between two consecutive layers, that is~$u \in V_i$ and~${v \in V_{i+1}}$ or~$u \in V_{i+1}$ and~$v \in V_{i}$ for some~$i \in [\lambda-1]$.
An instance of \textsc{Layered Vertex-Disjoint Shortest Paths} consists of a layered graph and~$k$ terminal pairs~$(s_i,t_i)$ such that~$s_i \in V_1$, $t_i \in V_\lambda$, and each shortest~$s_i$-$t_i$-path contains exactly one vertex of each layer.
The question is whether there are~$k$ shortest paths (one between each terminal pair) that are pairwise vertex-disjoint.
It is known that \textsc{Layered Vertex-Disjoint Shortest Paths} does not admit a polynomial kernel parameterized by~$k+\lambda$ (the number of terminal pairs plus the number of layers) \cite{BFG25}.
Moreover, parameterized by only~$k$, the problem is known to be \classW1-hard \cite{Lochet21}.
We present a reduction from \textsc{Layered Vertex-Disjoint Shortest Paths} to \CMPlong{} based on a reduction by \cite{FioravantesKKMO24,FioKKMO25}.

\begin{theorem}
    \label{prop:W1k}
    \CMPlong{} is \classW1-hard parameterized by $k$.
\end{theorem}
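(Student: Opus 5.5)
We reduce from \textsc{Layered Vertex-Disjoint Shortest Paths} parameterized by~$k$, which is \classW1-hard~\cite{Lochet21}. Let $(H,\{(s_i,t_i)\}_{i\in[k]})$ be an instance with layers $V_1,\ldots,V_\lambda$, so every shortest $s_i$-$t_i$-path has length exactly $\lambda-1$ and meets every layer in exactly one vertex. We output the instance $(G,k,\R,\lambda-1)$ of \CMPlong{} with $G=H$ (made connected if needed, e.g.\ by attaching a long path far from the terminals, or by observing that every agent uses only its own component), the same agents $R_i=(s_i,t_i)$, and makespan bound $\lambda-1$. The reduction is polynomial and preserves the parameter~$k$.

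For the forward direction, take pairwise vertex-disjoint shortest paths $P_1,\ldots,P_k$ and use the routes $(P_i,0)$. At time $h$ every agent sits in layer $V_{h+1}$ on its own path, so no two agents share a vertex. Two agents cannot traverse the same edge at the same time either, since that would force a shared vertex. Every agent arrives at time $\lambda-1$.

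For the backward direction, consider a schedule of makespan at most $\lambda-1$. Each walk $W_i$ needs at least $\dist(s_i,t_i)=\lambda-1$ steps, so $x_i=0$, $\delta_i=\lambda-1$, and $W_i$ contains no waiting steps. Hence $W_i$ is a shortest $s_i$-$t_i$-path, and its vertex at time $h$ lies in $V_{h+1}$. If $P_i$ and $P_j$ share a vertex $v\in V_{h+1}$, then both agents occupy $v$ at time $h$, which is a conflict. So the paths are pairwise vertex-disjoint.

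The main subtlety is the layering argument. Agents appear and disappear, so a shared vertex alone does not immediately give a conflict. Here, however, all agents are present throughout $[0,\lambda-1]$ and occupy the same layer at each time step, so a shared vertex does give one. A second point needs care. If $s_i=s_j$ for some $i\ne j$, both agents would occupy the same vertex at time $0$, so such instances are trivially infeasible for \CMPlong{}. They are equally infeasible for vertex-disjoint paths, so the equivalence still holds. Shared sinks behave the same way at time $\lambda-1$, so no special handling is needed.
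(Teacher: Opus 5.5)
Your reduction matches the paper's: the same agents, makespan $\lambda-1$, a forward direction that starts every agent at time $0$ on its disjoint path, and a backward direction where zero slack forces $x_i=0$, no waiting, and agent $i$ in layer $h+1$ at time $h$. Your remarks on edge conflicts, repeated terminals and connectivity cover points the paper leaves implicit, but for connectivity you must actually join the components (e.g.\ by long paths between them), not merely attach a path to one of them.
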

\begin{proof}
    We reduce from \textsc{Layered Vertex-Disjoint Shortest Paths} parameterized by the number~$k'$ of terminal pairs, which is known to be \classW1-hard \cite{Lochet21}.
    Let~$\lambda'$ be the number of layers in the input graph.
    For each terminal pair~$(s_i,t_i)$, we construct an agent~$R_i=(s_i,t_i)$ and we set~$\lambda = \lambda'-1$.
    We next show that the reduction is correct.
    First assume that there is a set of~$k$ vertex-disjoint shortest paths.
    Then, we let each agent appear in time step~$0$, and let them move along the shortest path until reaching~$t_i$ in time step~$\lambda'-1=\lambda$.
    Since all paths are vertex-disjoint, this shows that the constructed instance is a yes-instance.
    In the other direction, assume that there is a schedule for all agents.
    Since the distance between~$s_i$ and~$t_i$ for each agent is by assumption~$\lambda'-1=\lambda$, the schedule has to let all agents appear at their respective starting vertices at time step~$0$ and let them move along their respective shortest paths towards their destinations.
    Hence, each agent is in layer~$i$ at time step~$i-1$ in the assumed schedule.
    This shows that the paths taken by all agents are pairwise vertex-disjoint and thus the original instance of \textsc{}\textsc{Layered Vertex-Disjoint Shortest Paths} is a yes-instance.
\end{proof}

We mention in passing that the above result can also be modified to show \classW1-hardness with respect to~$k$ even if the input graph is subcubic as shown by \cite{FioKKMO25}.
Simply replace each vertex by an in-tree followed by an out-tree (both of height $\log(n)$) and replacing each edge by an edge between a unique out-leaf of the vertex in the earlier layer and a unique in-leaf of the vertex in the latter layer.

\begin{corollary}\label{corrolary:w1-hard}
    \CMPlong{} is \classW1-hard when parameterized by~$k$ even if the input graph is subcubic.
\end{corollary}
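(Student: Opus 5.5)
The plan is to reduce from the subcubic version of \textsc{Layered Vertex-Disjoint Shortest Paths}, reusing the reduction of \Cref{prop:W1k}. The W1-hardness of \textsc{Layered Vertex-Disjoint Shortest Paths} parameterized by $k$ (due to~\cite{Lochet21}) is only stated for general layered graphs. So I will first transform a layered instance into an equivalent layered instance of maximum degree three in which every terminal pair is at distance exactly the new number of layers minus one, and then apply the construction of \Cref{prop:W1k} verbatim: one agent per pair and $\lambda$ equal to the common distance.

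\textbf{Gadget.} Let $d$ be the maximum degree and $h=\lceil\log_2 d\rceil$. I replace every vertex $v$ in layer $i$ by a gadget consisting of a complete binary in-tree of height $h$ whose root $v^{\mathrm{in}}$ is identified with the root $v^{\mathrm{out}}$ of a complete binary out-tree of height $h$.
\begin{itemize}
\item The in-tree has its leaves on top. Its edges are oriented towards $v$ only conceptually; in the graph they are undirected and layered.
\item Each edge $uv$ of the original graph, with $u\in V_i$ and $v\in V_{i+1}$, is replaced by an edge from a private out-leaf of $u$ to a private in-leaf of $v$.
\item Unused leaves are kept as pendant vertices, or pruned; pruning keeps every degree at most three.
\end{itemize}
Each original layer thus becomes $2h+1$ new layers, plus one connecting step, so the layered structure is preserved with uniform stretch. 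A terminal $s_i$ is replaced by its gadget root (or, to keep terminals in the first layer, by a pendant path of length $h$ attached before it). Symmetrically, $t_i$ is replaced by its root plus a path of length $h$.

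\textbf{Properties to verify.}
\begin{enumerate}
\item Every $s_i$-$t_i$ path that is shortest in the new graph uses exactly one vertex per new layer.
\item Such paths correspond bijectively to shortest paths of the original graph. The key point is that a path entering a gadget must pass through its root, since the gadget is a tree whose only connections to other layers are through leaves on opposite sides.
\item Vertex-disjointness is equivalent in both directions: two paths share an original vertex $v$ if and only if they both use $v^{\mathrm{in}}=v^{\mathrm{out}}$.
\end{enumerate}
The new instance has size polynomial and the same $k$.

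\textbf{Main obstacle.} The step needing care is showing that no shortcut or non-monotone shortest path appears, for example a path going down into an out-tree and back up. I expect layering to dispose of this: a shortest path between terminals at layer distance $D$ must move monotonically across layers. The remaining care is only in handling terminal attachment consistently, so that all pairs keep distance exactly the new layer count minus one. The correctness argument of \Cref{prop:W1k} then applies unchanged, giving W1-hardness for subcubic graphs.
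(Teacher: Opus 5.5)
Your approach is the paper's: replace every vertex by an in-tree followed by an out-tree, realize each original edge by an edge between a private out-leaf and a private in-leaf, and then rerun the argument of \Cref{prop:W1k}. However, the gadget as you define it is not subcubic. Identifying the root of a complete binary in-tree with the root of a complete binary out-tree gives a vertex with two children on each side, that is, a vertex of degree four. If unused leaves are kept as pendants, this happens in every gadget. Your remark that pruning keeps all degrees at most three does not help either. Pruning can only lower degrees, and the merged root still has degree four whenever the used in-leaves lie in both subtrees of the in-root and the used out-leaves lie in both subtrees of the out-root. Nothing in your construction prevents this.

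The repair is simple. Keep $v^{\mathrm{in}}$ and $v^{\mathrm{out}}$ distinct and join them by an edge; this is what the paper's ``in-tree followed by an out-tree'' amounts to. Then each original layer becomes $2h+2$ new layers and every vertex has degree at most three. Alternatively, since $d_{\mathrm{in}}(v)+d_{\mathrm{out}}(v)\le d\le 2^h$, one of the two is at most $2^{h-1}$, so you could place those leaves inside a single subtree of the respective root and prune; but this must be stated explicitly.

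With the joining edge, a monotone path through the gadget of $v$ must use $v^{\mathrm{in}}v^{\mathrm{out}}$. So two such paths share the original vertex $v$ if and only if they share $v^{\mathrm{in}}$. You can also drop the pendant terminal paths, which would again eat into the degree budget at the roots. Simply take $v^{\mathrm{in}}$ of $s_i$ and $v^{\mathrm{out}}$ of $t_i$ as the new terminals. All new sources lie in one common layer and all new targets in another, so all pairs are at the same distance, and that is all the argument of \Cref{prop:W1k} needs. With this change, the rest of your verification goes through as you describe: layering forces shortest terminal paths to be monotone, monotone paths correspond bijectively to original shortest paths, and disjointness transfers in both directions.
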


We next show \classW1-hardness parameterized by  $\bellowPara$, the difference between the upper bound on the makespan and~$\lambda$.

\begin{restatable}{theorem}{Whard}
   \label{thm:W1b}
    \CMPshort{} is \classW1-hard parameterized by~$\bellowPara$.
\end{restatable}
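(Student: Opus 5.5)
The plan is a parameterized reduction from a \classW1-hard problem with parameter $k'$ to an instance whose gap $\bellowPara$ is bounded by a function of $k'$ alone. The hardness cannot come from long parallel routes. \Cref{lem:codir,lem:opposite,lem:middle} show that two long agents almost always save a lot of time, which would make $\bellowPara$ large. So the reduction must make almost every agent essentially forced to move sequentially, and confine all possible savings to a few bounded local ``interleavings''. I would reduce from \textsc{Multicolored Clique} with $k'$ colours, or from \textsc{Layered Vertex-Disjoint Shortest Paths} as in \Cref{prop:W1k}. Non-distinct terminals must be used essentially, since \Cref{thm:fptbelow-two} rules out hardness for distinct terminals.

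The forcing tool is the tight example after \Cref{lem:upperbound}. Two agents with swapped terminals $s_1=t_2$, $s_2=t_1$ on a path cannot overlap in time, and they incur the additional gap step. I would build a long ``spine'' of such opposite pairs sharing terminals. In any schedule, consecutive spine agents are then strictly sequential with the $+1$ gaps. Consequently any schedule of makespan below $\upperboud$ must recover its savings elsewhere.

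Next, I attach selection gadgets at $k'+\binom{k'}{2}$ designated points of the spine, one per colour and one per colour pair. At a colour gadget there are several alternative short detours, one per vertex of that colour. A detour lets a designated agent move while a spine agent waits in a side vertex, as in the buffer trick of \Cref{lem:middle} and the Example. This saves exactly one time unit, but it commits the agent to a route that occupies specific vertices at specific times. Choosing a vertex $v$ for colour $c$ is thus encoded in the timing offset or position at which a ``messenger'' agent traverses later edge gadgets. At an edge gadget for the pair $\{c,d\}$, a further unit saving is possible if and only if the messengers from $c$ and $d$ arrive at a consistent pair of positions, that is, ones corresponding to an edge $uv$ of the input graph with $u$ of colour $c$ and $v$ of colour $d$. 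I then set $\lambda=\upperboud-\bellowPara$ with $\bellowPara=k'+\binom{k'}{2}$, which is bounded in $k'$ as required.

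Correctness has an easy direction: a clique yields all savings by explicit routes. For the converse, I would show two things. First, each gadget can contribute at most one unit of saving. Second, total saving $\bellowPara$ forces every gadget to realize its saving, which forces consistent vertex choices and hence a clique.

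The main obstacle is this converse. I must rule out unintended savings, such as agents leaving the spine, long detours, or several agents overlapping in ways not captured by the gadgets. My first attempt would be a counting argument. Every agent's route has length at least its distance, and on the spine, same-vertex terminal sharing forces a gap of at least $1$ between consecutive agents. Hence the makespan is at least $\upperboud$ minus the number of gadgets where an overlap is geometrically possible. For this I would design the gadgets with long degree-two segments, as in Case~1 of the proof of \Cref{lem:twopaths}, so that any overlap other than the intended one forces a detour of length exceeding the saving.
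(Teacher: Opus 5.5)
There is a genuine gap. The gadgets are never specified, the converse direction (which you yourself flag as the main obstacle) is left as a plan, and the forcing mechanism it rests on does not work. Terminal sharing cannot charge a $+1$ per consecutive spine pair. The proof of \Cref{lem:upperbound} reorders any set of agents so that $s_i=t_{i-1}$ occurs for at most one consecutive pair. So $\upperboud_{\R,G}$ contains a single $+1$ in total, not one per pair. Sequentiality is not forced either. Among three agents traversing a common stretch of the spine, two go in the same direction, and \Cref{lem:codir} lets them move almost simultaneously, saving roughly their whole length. Agents on disjoint parts of the spine simply move in parallel.

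The structural lemmas of \Cref{sec:fpt} also block the design as described. By \Cref{lem:threepaths}, three agents at distance at least $2\bellowPara+3$ already make the instance a yes-instance. By \Cref{lem:middle}, so does a degree-three vertex deep inside a shortest path of length at least $2\bellowPara+4$, together with one further agent at distance at least $\bellowPara+3$. Hence a long spine carrying many agents, long messengers, or buffer vertices along long routes would map no-instances of \textsc{Multicolored Clique} to yes-instances. A buffer does not save ``exactly one unit''; near the middle of long routes its saving grows with the route lengths.

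The missing idea is that no new gadgets are needed. Reduce from \CMPshort{} parameterized by $k$, which is \classW{1}-hard by \Cref{prop:W1k}, and compress the gap. Given $(G,k,\R,\lambda)$, add vertices $u_1,\ldots,u_\lambda$, a vertex $v$ adjacent to all other vertices, and agents $(v,u_i)$ for $i\in[\lambda]$, keeping the makespan $\lambda$. In the new graph $G'$ every original pair is at distance at most $2$ and every new pair is at distance $1$. So $\upperboud_{\R',G'}\le 1+2k+\lambda$ and $\bellowPara\le 2k+1$.

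For correctness, the $\lambda$ new agents share the start $v$, so they must enter $v$ at pairwise distinct times, all at most $\lambda-1$. Hence $v$ is occupied at every step of $[0,\lambda-1]$. Since $v$ is no original destination, the original agents never use $v$ and are routed inside $G$. The converse direction is immediate: send the $i$-th new agent from $v$ at time $i-1$.

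Your intuitions that most agents must be forced to move sequentially, and that shared terminals are essential (cf.\ \Cref{thm:fptbelow-two}), are right. But the sequential agents are the distance-$1$ blockers, forced by a common start vertex. The gap stays small because $\upperboud$ is measured in the augmented graph while the actual routing is confined to $G$.
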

\begin{proof}
    We reduce from \CMPlong{} parameterized by~$k$, which is \classW1-hard by \cref{prop:W1k}.
    Let~$(G,k,\mathcal{R},\lambda)$ be an input instance.
    We construct an equivalent instance~$(G',k',\mathcal{R}',\lambda')$ with~$k'=k+\lambda$ and~$\lambda'=\lambda$ as follows.
    First, we add to~$G$ a set of~$\lambda$ new vertices~$u_1,u_2,\ldots,u_{\lambda}$.
    Next, we add a new vertex~$v$ and make it adjacent to all other vertices in~$G$ (including the previously created vertices).
    We set~$\mathcal{R}'=\mathcal{R} \cup \{(v,u_i) \mid i \in [\lambda]\}$.
    Note that~$v$ ensures that the distance between each terminal pair in~$\mathcal{R}$ (in~$G'$) is at most~$2$ and the distance between all other terminal pairs is~$1$.
    Hence, $\bellowPara$ is at most~$1+2k+\lambda-\lambda=1+2k$.
    Since the reduction can clearly be computed in polynomial time, it only remains to show correctness.
    
    To this end, first assume that the original instance is a yes-instance.
    Then, we follow the same schedule for each agent in~$\mathcal{R}$.
    For each new agent~$(v,u_i)$, we let the agent appear at vertex~$v$ at time step~$i-1$ and move to vertex~$u_i$ at time step~$i$.
    Note that each agent reaches its destination latest at time step~$\lambda$ and no vertex is occupied by more than one agent at any time.
    Thus, the constructed instance is a yes-instance.

    In the other direction, assume that the constructed instance is a yes-instance.
    Then, each new agent~$(v,u_i)$ has to appear at vertex~$v$ at a distinct time step.
    Let~$(v,u_j)$ be the last agent appearing.
    Since~$u_j$ has to be reached by time step~$\lambda$, it holds that~$u_j$ appears at~$v$ at time step at most~$\lambda-1$.
    Hence, at each time step in~$[0,\lambda-1]$ a new agent appears at vertex~$v$.
    Thus, no agent in~$\mathcal{R}$ can occupy vertex~$v$ before time step~$\lambda$.
    Since~$v \neq t_i$ for each agent, no agent in~$\mathcal{R}$ occupies~$v$ at any time step.
    Thus, all agents follow a schedule in the original graph~$G$ and reach their destination by time step~$\lambda$.
    That is, the original instance of \CMPlong{} is a yes-instance, concluding the proof.
\end{proof}

We next prove \Cref{thm:nopolykernel}, that it is unlikely that \CMPlong{} has a polynomial kernel parameterized by~$k+\bellowPara$.
This result is shown combining ideas behind \cref{prop:W1k,thm:W1b}.

\begin{restatable}{theorem}{nokernel} \label{thm:nopolykernel}
    \CMPshort{} parameterized by~$k+\bellowPara$ does not admit a polynomial kernel unless 
    \classCoNP$\subseteq$\classNP/{\sf poly}.
\end{restatable}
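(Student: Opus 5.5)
The plan is to give a polynomial parameter transformation from \textsc{Layered Vertex-Disjoint Shortest Paths} parameterized by $k'+\lambda'$ (number of terminal pairs plus number of layers). By~\cite{BFG25}, this problem admits no polynomial kernel unless \classCoNP$\subseteq$\classNP/{\sf poly}. The transformation will combine the construction of \Cref{prop:W1k} with the padding gadget of \Cref{thm:W1b}.

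Let $(H,\{(s_i,t_i)\}_{i\in[k']})$ be an instance with $\lambda'$ layers. As in \Cref{prop:W1k}, we create agents $R_i=(s_i,t_i)$ and set $\lambda=\lambda'-1$. The difficulty is that here $\bellowPara=1+\sum_i\dist_H(s_i,t_i)-\lambda=1+(k'-1)(\lambda'-1)$. This is already polynomial in $k'+\lambda'$, and the number of agents is $k'$. Hence the instance of \Cref{prop:W1k} by itself gives parameter $k+\bellowPara\le k'+1+k'\lambda'$, which is polynomial in $k'+\lambda'$.

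The main thing to verify is connectivity. The problem requires $G$ to be connected, while the layered instance may not be. We may take only the union of the components containing the terminals and, if needed, connect components without creating shortcuts. The simplest way is to add a long path of new vertices. Concretely, add a new vertex $c$ and connect it to every vertex of $V_1$ and $V_{\lambda'}$ through disjoint paths of length $\lambda'$. Any $s_i$-$t_i$ route through $c$ then has length at least $2\lambda'>\lambda$, so distances stay $\lambda'-1$ and this detour is useless within makespan $\lambda$. If instead the layered problem from~\cite{BFG25} already guarantees connectivity, nothing needs to be added.

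Correctness then follows verbatim from the proof of \Cref{prop:W1k}. Every agent has distance exactly $\lambda$, so in any schedule of makespan $\lambda$ every agent appears at time $0$ and moves along a shortest path. Such a path lies in the layered part, and agent positions at time $h$ lie in layer $h+1$, so the routes are vertex-disjoint paths. Conversely, vertex-disjoint shortest paths give a conflict-free schedule. Edge conflicts cannot occur because all agents move from layer $h$ to layer $h+1$ simultaneously, so traversing the same edge would require sharing a vertex.

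Finally, observe that $k=k'$ and $\bellowPara\le 1+k'\lambda'$, and that the construction runs in polynomial time. This makes it a polynomial parameter transformation, and a polynomial kernel for \CMPshort parameterized by $k+\bellowPara$ would therefore yield one for the layered problem. The only potential obstacle is the connectivity padding, and that is handled by the long detour paths.
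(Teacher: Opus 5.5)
Your proof is correct, but it takes a different and simpler route than the paper. The paper does not reuse the instance of \Cref{prop:W1k} directly. It adds the gadget of \Cref{thm:W1b}: a universal vertex $v$, $\lambda'-1$ pendant vertices $u_i$, and blocking agents $(v,u_i)$. This drops all terminal distances to at most $2$, giving $\bellowPara\le 1+2k'$ and $k=k'+\lambda'-1$. The correctness argument then needs the blocking agents to occupy $v$ at every time step in $[0,\lambda'-2]$, which confines the original agents to the layered graph.

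You observe that no gadget is needed. Since $\bellowPara\le\upperboud_{\R,G}=1+k'(\lambda'-1)$, the plain reduction of \Cref{prop:W1k} already has parameter $k+\bellowPara=\Oh(k'\lambda')$. This is polynomial, which is all a polynomial parameter transformation requires.

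What each approach buys:
\begin{itemize}
\item Your version keeps $k=k'$ and avoids the blocking argument. Correctness really does follow verbatim from \Cref{prop:W1k}, including your correct remark that opposite-direction edge conflicts are impossible between consecutive layers.
\item You must handle connectivity explicitly, which the paper's universal vertex provides for free. Your fix works: discard terminal-free components, then attach the hub $c$ by pendant paths of length $\lambda'$. Every route through $c$ then has length at least $2\lambda'>\lambda$, so shortest paths stay in the layered part.
\item The paper's construction gives a parameter linear in $k'+\lambda'$ rather than quadratic. It also makes $\bellowPara$ independent of $\lambda'$, matching \Cref{thm:W1b}.
\end{itemize}
For the stated ``no polynomial kernel'' claim, either approach suffices.

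One small inconsistency: your opening sentence announces a combination with the padding gadget of \Cref{thm:W1b}, but your construction never uses it. You can drop that sentence.
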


\begin{proof}
    We design a polynomial parameter transformation from \textsc{Layered Vertex-Disjoint Shortest Paths} parameterized by~$k+\lambda$.
    This problem is known to not admit a polynomial kernel~\cite{BFG25}.
    Let~$(G,\{(s_i,t_i) \mid i \in [k]\})$ be an input instance of \textsc{Layered Vertex-Disjoint Shortest Paths} and let~$\lambda$ be the number of layers in~$G$.
    We construct an equivalent instance~$(G',k',\mathcal{R}',\lambda')$ with~$k'=k+\lambda-1$ and~$\lambda'=\lambda-1$ as follows.
    First, we add to~$G$ a set of~$\lambda-1$ new vertices~$u_1,u_2,\ldots,u_{\lambda-1}$.
    Next, we add a new vertex~$v$ and make it adjacent to all other vertices in~$G$ (including the previously created vertices).
    We set~$\mathcal{R}'=\{(s_i,t_i) \mid i \in [k]\} \cup \{(v,u_i) \mid i \in [\lambda-1]\}$.
    Note that~$v$ ensures that the distance between each terminal pair~$(s_i,t_i)$ is at most~$2$ (in~$G'$) and the distance between all other terminal pairs is~$1$ by construction.
    Hence, $\bellowPara$ is at most~$1+2k+\lambda-1-(\lambda-1)=1+2k$.
    Since the polynomial parameter transformation can clearly be computed in polynomial time, it only remains to show correctness.
    
    To this end, first assume that the original instance is a yes-instance.
    Let~$P_i$ be a path from~$s_i$ to~$t_i$ of length~$\lambda-1$ for each~$i \in [k]$.
    For each agent~$(s_i,t_i)$, we let the agent appear at~$s_i$ at time step~$0$ and then follow~$P_i$ so it reaches~$t_i$ at time step~$\lambda-1=\lambda'$.
    For each agent~$(v,u_i)$, we let them appear at~$v$ at time step~$i-1$ and move to~$u_i$ at time step~$i$.
    Note that each agent reaches its destination latest at time step~$\lambda'$ and no vertex is occupied by more than one agent at any time.
    Thus, the constructed instance is a yes-instance.

    In the other direction, assume that the constructed instance is a yes-instance.
    Then, each agent~$(v,u_i)$ has to appear at vertex~$v$ at a distinct time step.
    Let~$(v,u_j)$ be the last agent appearing.
    Since~$u_j$ has to be reached by time step~$\lambda'=\lambda-1$, it holds that~$u_j$ appears at~$v$ at time step at most~$\lambda-2$.
    Hence, at each time step in~$[0,\lambda-2]$ a new agent appears at vertex~$v$.
    Thus, no agent~$(s_i,t_i)$ can occupy vertex~$v$ before time step~$\lambda-1$.
    Since~$v \neq t_i$ for each agent, no agent in~$(s_i,t_i)$ occupies~$v$ at any time step.
    Thus, all these agents follow a schedule in the original graph~$G$ and reach their destination by time step~$\lambda'=\lambda-1$.
    Note that this implies that each such agent occupies a vertex of layer~$i$ at time step~$i-1$ for each~$i \in [\lambda]$.
    Thus, all these paths are vertex-disjoint showing that the original instance of \textsc{Layered Vertex-Disjoint Shortest Paths} is a yes-instance.
\end{proof}

 \section{Conclusion and open questions}\label{sec:conclusion}
In this work, we investigated \CMPshort{} through the lens of the above-and-below-guarantee paradigm of parameterized complexity. 
Our results provide a near-complete understanding of the problem with respect to the parameter~$k$ and~$\bellowPara$. 
We established fixed-parameter tractability for the combined parameterization and complement this with matching lower bounds showing that neither $k$ nor $\bellowPara$ alone is likely to yield an \classFPT algorithm. 
In addition, we ruled out the existence of a polynomial kernel for the combined parameter. 
Together, these results delineate the boundary between tractability and hardness for the problem in a rather precise manner.

We conclude with several open questions that we find particularly interesting. 

\begin{itemize}
    \item \textbf{Alternative upper bounds.} 
    From a combinatorial perspective, we were unable to prove or disprove whether the following inequality
    \[
        \lambda \leq 2 \cdot \max_{i \in [k]} \dist(s_i,t_i) + k
    \]
    holds for the minimum makespan $\lambda$ in general.  
    Establishing such a bound (or providing a counterexample) would be of independent interest. 
    A positive answer would, of course, give rise to new algorithmic questions within the above-and-below-guarantee framework.

    \item \textbf{Parameterization by $\bellowPara$.} 
    The precise complexity of \CMPshort{} when parameterized solely by $\bellowPara$ remains unresolved. 
    Is the problem in \classXP or para-\classNP-hard? 
    Clarifying this would complete the single-parameter landscape.

    \item \textbf{Restricted graph classes.} 
    Another promising direction is to study the complexity of the problem on restricted graph classes, such as planar graphs or graphs of bounded degree. 
    In particular, it remains open whether the problem is \classFPT when parameterized by $k$ on planar graphs or even on planar grids.
\end{itemize}

\end{document}